\documentclass[11pt]{article}
\usepackage[margin=0.8in]{geometry}
\usepackage{amsmath,amsfonts,amssymb,amsthm,mathtools}

\usepackage{amsthm}

\theoremstyle{plain} 
\newtheorem{theorem}{Theorem}[section]
\newtheorem{proposition}[theorem]{Proposition}
\newtheorem{lemma}[theorem]{Lemma}
\newtheorem{corollary}[theorem]{Corollary}

\theoremstyle{definition} 

\theoremstyle{remark} 
\newtheorem{remark}[theorem]{Remark}
\usepackage{mathrsfs}
\usepackage{hyperref}
\usepackage{tikz}
\usepackage{cite}
\usetikzlibrary{arrows}
\usetikzlibrary{arrows.meta}
\usetikzlibrary{cd}          
\usetikzlibrary{matrix}
\usetikzlibrary{decorations.pathmorphing}
\usetikzlibrary{calc}

\usepackage{tikz-cd}
\DeclareMathOperator{\Ad}{Ad}
\DeclareMathOperator{\ad}{ad}

\usepackage{tikz}
\usetikzlibrary{arrows.meta}
\usetikzlibrary{shapes.geometric}
\usetikzlibrary{calc}
\usetikzlibrary{fit}
\usetikzlibrary{positioning}
\usetikzlibrary{decorations.pathmorphing}
\usetikzlibrary{backgrounds}
\usepackage{comment}
\usepackage{enumerate}

\newcommand{\g}{\mathfrak{g}}
\newcommand{\Phat}{\widehat{\mathsf P}}

\title{Covariant Euler-Poincar\'e Reduction for Field Theories\\ on Centered Semidirect Products}

\author{
Miguel Ángel Berbel\thanks{Department of Applied Mathematics, ICAI School of Engineering, Comillas Pontifical University, Madrid,
Spain. {\tt\small \{maberbel,~arabella\}@comillas.edu}}
, Leonardo Colombo\thanks{Centre for Automation and Robotics (CSIC-UPM), Ctra. M300 Campo Real, Km 0,200, Arganda del Rey - 28500 Madrid, Spain, {\tt\small leonardo.colombo@csic.es}}
, Álvaro Rodríguez Abella$^*$}

\date{}

\begin{document}

\maketitle

\begin{abstract}

Let $G$ be a Lie group equipped with commuting left and right linear representations on a vector space $V$, and let $G\Join V$ be the associated centered semidirect product. We develop a reduction scheme for covariant field theories on principal bundles with structure group $G\Join V$ from both Lagrangian and Hamiltonian perspectives, obtaining the reduced Euler-Poincar\'e and Lie-Poisson field equations decomposed into their $\mathfrak g$- and $V$-components, where $\mathfrak g$ is the Lie algebra of $G$. We also derive the reduced multisymplectic form formula, as well as the Noether and energy-momentum balance laws. Our reduction results are compared with those for principal bundles with standard semidirect-product structure groups, and the role of the centered variables is investigated: they retain the original two-sided geometric data and separate the left and right contributions via the heart and diamond operators. The framework is then applied to a two-sided matrix microstructural model and to the second-order jet group on $\mathbb R^r$, $G_r^2$. For the matrix model, the conjugation representation decomposes $\operatorname{Mat}(3)$ into scalar, vector, and symmetric-traceless sectors, clarifying its precise relation to director and alignment-tensor variables without identifying it with standard nematodynamics. For the one-dimensional jet group on a $(1+1)$-dimensional
Minkowski base, we obtain a semilinear symmetric-hyperbolic
system and an exact traveling conversion front with explicit
energy-momentum exchange between the linear- and quadratic-jet
sectors. Its hyperbolic-target interpretation yields a conserved
positive energy for the linearized perturbations.
A nonlinear numerical experiment illustrates the reduced
evolution and reconstruction of the group-valued field,
with consistency assessed through mesh refinement and an
independent wave-map computation.
\end{abstract}

\section{Introduction}

The Euler-Poincar\'e reduction framework in mechanics and continuum dynamics provides a variational method for deriving reduced dynamical equations for systems on Lie groups whose Lagrangians are invariant by the left (or right) action of such Lie group. The Euler-Poincar\'e theory with advected parameters, developed by
Holm, Marsden, and Ratiu, relates broken symmetry to semidirect-product
Lie-Poisson structures and yields the associated diamond couplings
and Kelvin-Noether circulation theorem~\cite{HoMaRa1998}. The present paper instead reduces by the full centered structure group;
its vector-space component is part of the group-valued configuration,
rather than an advected parameter introduced by symmetry breaking. More recently, Holm and collaborators have emphasized a
composition-of-maps viewpoint in which iterated left or right
Lie group actions generate nested semidirect product dynamics
~\cite[\S 9.4]{Holm2025}; see also~\cite{HolmHuStreet2023}. This viewpoint raises the
corresponding question of how these reduction mechanisms
extend to covariant field theories.

The corresponding covariant extension is naturally formulated using
jet bundles and multisymplectic geometry. Beginning with the symplectic framework of Kijowski and Tulczyjew, the Lagrangian and Hamiltonian formulations of first-order field theories were systematically developed on jet and multimomentum bundles, where covariant variational principles yield the Euler--Lagrange and the Hamilton--De Donder--Weyl equations~\cite{KiTu1979,GoIsMa1998,GoIsMaMo2004,DeLeonMartinDiegoSantamaria2004,RomanRoy2009}. Marsden, Patrick, and Shkoller established the multisymplectic form
formula and related it to variational integrators and nonlinear PDEs
~\cite{MaPaSh1998}, while Marsden, Pekarsky, Shkoller, and West
extended this variational and multisymplectic framework to continuum
mechanics~\cite{MaPeShWe2001}. Within this setting, covariant Euler--Poincar\'e and Lagrange--Poincar\'e reduction for $G$-invariant field theories was developed in~\cite{CaRaSh2000,CaGaRa2001,CaRa2003,CaMa2008,
GayBalmazRatiu2010}: the quotient $(J^1P)/G$ is identified with the affine bundle $C(P)$ of principal connections, while the reduced variables and equations are expressed through the associated adjoint bundle and constrained variations. The Lagrange--Poincar\'e field equations of Ellis, Gay--Balmaz, Holm, and Ratiu further established the corresponding reconstruction conditions and the Kelvin--Noether theorem~\cite{EllisGayBalmazHolmRatiu2011}. More recently, covariant reduction by stages has provided a framework
for successive reductions, as well as the corresponding
reconstruction and Noether results~\cite{BerbelCastrillon2024}. Together, these developments provide the covariant geometric setting
needed to formulate reduced continuum models with internal degrees
of freedom. More recent developments extend covariant reduction to fiberwise gauge
symmetries modeled by Lie group bundles~\cite{CastrillonRodriguezGauge2024}.

A central class of such models is provided by continuum theories with internal microstructure. Classical micropolar and micromorphic models introduce independent microrotation and microdeformation fields in addition to the macroscopic deformation~\cite{Eringen1999,Capriz1989}, whereas complex fluids and liquid-crystal theories involve directors or more general order parameters~\cite{deGennesProst1993,Virga1994,Holm2002ComplexFluids,GBRa2009,GBRa2012}. In geometric-reduction formulations, these internal variables are represented as advected quantities or as sections of associated bundles, and their coupling to the macroscopic momentum gives rise to diamond and, in affine theories, cocycle terms. These terms encode forces and torques produced by broken symmetry; by contrast, couple stresses arise from the dependence of the reduced Lagrangian on spatial gradients of the internal variables and enter through the corresponding variational derivatives and covariant-divergence terms. Complementary geometric and Hamiltonian treatments of nonlinear elasticity clarify the material and convective representations of stress and momentum in solids, rods, and plates~\cite{Simo1984,Simo1988}. The composition-of-maps perspective emphasized by Holm and collaborators \cite{Holm2025} suggests a related but distinct structural question: how should covariant reduction be organized when the relevant variables carry simultaneous commuting left and right actions?

The algebraic structure naturally suggested by this question is the \emph{centered} semidirect product $G\Join V$, introduced by Colombo and Jacobs~\cite{CoJa2015}, in which $G$ acts on $V$ through commuting left and right representations. Although $G\Join V$ can be rewritten as an ordinary semidirect product---indeed, the change of variables $(g,v)\mapsto(g,v\cdot g^{-1})$ yields the induced representation $\rho(g)v=g\cdot v\cdot g^{-1}$---, the centered presentation keeps the original left and right actions explicit. In the split reduced equations, their combined contribution is organized by the heart and diamond operators, which encode the coupling between the $\mathfrak{g}$- and $V$-components. This two-sided structure arises naturally in an isotropy group of the second-order jet groupoid of a manifold, whose multiplication is induced by composition of diffeomorphisms and the chain rule~\cite{CoJa2015,KMS1993}. 

The purpose of this paper is to extend the Euler--Poincaré reduction theory on centered semidirect products from finite-dimensional mechanics to first-order covariant field theory. Let $P\to X$ be a principal bundle with structure group $G\Join V$, and let
$\mathcal L:J^1P\rightarrow\Lambda^{n+1}T^*X$
be a $G\Join V$-invariant Lagrangian density. The standard identification
$(J^1P)/(G\Join V)\simeq C(P)$
realizes the reduced bundle as the affine bundle of principal connections on $P\to X$. After choosing a reference connection, the reduced field may be represented by an $\operatorname{Ad}(P)$-valued one-form and, in a local trivialization, decomposed into $\mathfrak g$- and $V$-components. This makes it possible to incorporate the finite-dimensional centered reduction of Colombo and Jacobs~\cite{CoJa2015} into the covariant Euler--Poincar\'e framework on principal bundles~\cite{CaRaSh2000,CaRa2003,CaMa2008} and into the multisymplectic and covariant Hamiltonian formalisms of geometric field theory~\cite{KiTu1979,MaPaSh1998,FoRo2005}. The contribution is to make the reduced equations,
reconstruction conditions, and covariant currents explicit
in the original two-sided variables. For jet groups, this
description retains the distinction between the action on
the output and the actions on the input slots of the
quadratic jet.

\paragraph{Summary of the paper and main contributions.}

In Section~\ref{sec:centered-field-theory}, the centered semi-direct product is introduced, and the centered reduction framework for covariant field theories on principal bundles here developed is compared to the classical reduction framework by semidirect products. In particular, we prove naturality under the explicit isomorphism $G\Join V\simeq G\ltimes_\rho V$. This identifies precisely what
is inherited from standard covariant reduction and what the centered
presentation contributes: it retains the original two-sided variables and
resolves the effective semidirect action into its left and right momentum
transfer terms. 

Section~\ref{sec:covariantreduction} constitutes the theoretical core of the paper. Specifically, we achieve the following goals:
\begin{enumerate}
\item We formulate the covariant Euler--Poincar\'e field reduction theory in the centered
variables, thus deriving the constrained variations and reduced field equations in split
$\mathfrak g$- and $V$-form. 

\item Starting from the covariant Poincar\'e--Cartan form, we obtain the reduced multisymplectic form formula and its expression in centered variables. 

\item For variations satisfying the linearized reduced equations, this yields a closed covariant symplectic current whose $\mathfrak g$- and $V$-components display explicitly the two-sided coupling. Under the appropriate invariance and background assumptions, we also formulate the associated covariant Noether and energy-momentum balance laws.

\item We develop the Hamiltonian counterpart of the centered reduction theory on the multimomentum bundle. In particular, we formulate the centered Lie--Poisson field equations and show that, for hyperregular reduced Lagrangian densities, they are equivalent to the Euler--Poincar\'e field equations \emph{via} the covariant Legendre transformation.
\end{enumerate}

Section~\ref{sec:applications} provides two examples that illustrate the usefulness of the centered reduction framework:

\begin{enumerate}
\item
We first particularize the general construction to field theories on principal bundles whose structure group is a matrix centered product. This example extends the finite-dimensional setting in~\cite{CoJa2015} and makes the reduced equations, reconstruction conditions,
and covariant currents explicit in matrix variables.

\item 
For the standard $\mathrm{SO}(3)$ action on $\mathbb{R}^3$, we prove that no
independent nontrivial commuting right action exists, so the director
representation itself does not support the centered structure required by
the theory.  We therefore pass to
$\mathrm{SO}(3)\Join\operatorname{Mat}(3)$ and use the equivalent semidirect
coordinate $B=AR^{-1}$, on which $\mathrm{SO}(3)$ acts by conjugation.  The
invariant decomposition
$\operatorname{Mat}(3)
=
\mathbb{R}I
\oplus
\mathfrak{so}(3)
\oplus
\operatorname{Sym}_0(3)$ produces scalar, vector, and alignment-tensor sectors.  This establishes
a precise connection with variables occurring in micropolar and
liquid-crystal theories while also identifying the additional
broken-symmetry and order-parameter structure required to recover
standard director or $Q$-tensor models.
\end{enumerate}

Section~\ref{sec:higher-order-gauge} provides the main application of the centered reduction framework developed in this work. Specifically, we return to the geometric origin of the centered presentation
by considering the second-order jet group
$G_r^2\simeq\operatorname{GL}(r)\Join
(S^2(\mathbb R^r)^*\otimes\mathbb R^r)$.  Its multiplication is obtained
directly from the second-order chain rule.  Besides computing the general
field equations, reconstruction conditions, and multisymplectic current, we
specialize to $G_1^2$ on a $1+1$-dimensional Minkowski base.  The resulting
semilinear symmetric-hyperbolic system has an integrable traveling-wave reduction on a specified
zero-integration-constant branch and an exact heteroclinic--pulse solution which converts energy and
momentum pointwise between the linear- and quadratic-jet sectors while
preserving their totals. The associated hyperbolic-plane formulation yields a positive conserved energy for the linearized perturbations.
A nonlinear computation then illustrates the evolution of
compatible reduced data and reconstruction of the group-valued
field, with observed fourth-order self-convergence and agreement
with an independent wave-map solution under mesh refinement. 

Lastly, Section~\ref{sec:conclusions} summarizes the results and discusses directions for further work.


\section{The Centered Semidirect Product In Field Theory}
\label{sec:centered-field-theory}

In this section we collect the geometric ingredients for the covariant
Euler--Poincar\'e reduction on centered semidirect products. Specifically, we recall commuting left/right actions of a Lie group on a vector space, the centered semidirect product,
and the covariant reduction framework on principal bundles (see,  e.g.,~\cite{CoJa2015,CaRaSh2000,CaGaRa2001}). We also provide a comparison between our centered approach and classical reduction by semi-direct products.

\subsection{Preliminaries on Centered Semidirect Products}\label{subsec:centered}

Let $G$ be a Lie group with unit element $e\in G$ and Lie algebra $\mathfrak g$, and let $V$ be a
finite-dimensional real vector space.
Assume that $G$ acts on $V$ through smooth linear left and right
actions,
\[
G\times V\to V,\quad (g,v)\mapsto g\cdot v,
\qquad
V\times G\to V,\quad (v,g)\mapsto v\cdot g,
\]
and that these actions \emph{commute}:
\begin{equation*}
(g\cdot v)\cdot h = g\cdot (v\cdot h),
\qquad g,h\in G,~v\in V.
\end{equation*}
The induced infinitesimal actions are denoted by
\[
\xi\cdot v := \left.\frac{d}{d\epsilon}\right|_{\epsilon=0}\exp(\epsilon\xi)\cdot v,
\qquad
v\cdot \xi := \left.\frac{d}{d\epsilon}\right|_{\epsilon=0}v\cdot \exp(\epsilon\xi),\qquad \xi\in\mathfrak g,~v\in V.
\]

Under the previous conditions, the \emph{centered semidirect product}
$G\Join V$ is the product manifold $G\times V$ endowed with the following multiplication:
\begin{equation}\label{eq:centered-product}
(g_1,v_1)\cdot(g_2,v_2)
=
\big(g_1g_2,\ g_1\cdot v_2 + v_1\cdot g_2\big),
\end{equation}
for each $(g_1,v_1),(g_2,v_2)\in G\times V$. Commutativity of the left and right actions makes
\eqref{eq:centered-product} associative. Its identity is $(e,0)$ and
$(g,v)^{-1}= \bigl(g^{-1},-g^{-1}\cdot v\cdot g^{-1}\bigr)$. Since the multiplication and inversion maps are smooth, this defines a
Lie group structure on $G\times V$, denoted by $G\Join V$
~\cite{CoJa2015}.
The corresponding Lie algebra $\mathfrak g\Join V$ is the vector space $\mathfrak g\oplus V$
with the Lie bracket given by
\begin{equation}\label{eq:centered-bracket}
[(\xi_1,v_1),(\xi_2,v_2)]
=
\left([\xi_1,\xi_2],\
\xi_1\cdot v_2 + v_1\cdot \xi_2
-\xi_2\cdot v_1 - v_2\cdot \xi_1
\right),
\end{equation}
for each $(\xi_1,v_1),(\xi_2,v_2)\in\mathfrak g\oplus V$. Note that we use the same symbol $[\cdot,\cdot]$ for both the Lie bracket on $\mathfrak g$ and $\mathfrak g\Join V$ as the distinction will be clarified by the context. Similarly, $\langle\cdot,\cdot\rangle$ will denote the natural pairings on $V^*\times V$ and $\mathfrak g^*\times\mathfrak g$.
Following~\cite{CoJa2015}, the \emph{heart} and \emph{diamond} operators are defined by
\begin{align}
\heartsuit:\mathfrak g\times V^*\to V^*,\qquad\langle \xi\heartsuit \alpha,\ v\rangle
&:=
\langle \alpha,\ \xi\cdot v - v\cdot \xi\rangle,
\label{eq:heart-def}\\[2pt]
\diamondsuit:V\times V^*\to\mathfrak g^*,\qquad\langle v\diamondsuit \alpha,\ \xi\rangle
&:=
\langle \alpha,\ v\cdot \xi - \xi\cdot v\rangle,
\label{eq:diamond-def}
\end{align}
for each $\xi\in\mathfrak g$, $v\in V$ and $\alpha\in V^*$. Equivalently, by introducing the \emph{primal centered commutator} as
\begin{equation}\label{eq:primal-centered-commutator}
\spadesuit:V\times\mathfrak g\to V,\qquad v\spadesuit \xi := v\cdot \xi - \xi\cdot v,
\end{equation}
we can write 
\begin{equation} \label{eq:heart-diamond-spades}
    \langle \alpha,\ v\spadesuit\xi\rangle=\langle v\diamondsuit\alpha,\ \xi\rangle=-\langle \xi\heartsuit\alpha,\ v\rangle.
\end{equation}

In the same vein, for each $\xi\in\mathfrak g$, we adopt the coadjoint convention given by
\begin{equation*}
{\rm ad}_\xi^*:\mathfrak g^*\to\mathfrak g^*,\qquad\langle\ad_\xi^*(\mu),\eta\rangle=-\langle\mu,[\xi,\eta]\rangle,
\end{equation*}
where $\mu\in\mathfrak g^*$ and $\eta\in\mathfrak g$. This is the negative of the convention used in~\cite{CoJa2015}; coadjoint formulas from that reference are therefore translated accordingly.

\subsection{Centered Semidirect Reduction Framework on Principal Bundles and its Relation to Standard Semidirect Products} 
\label{subsec:reduction-principal-bundles-centered}

Let $X$ be an oriented $(n+1)$-dimensional manifold with volume form ${\rm vol}_X$, and
$P\to X$ be a (right) principal $(G\Join V)$-bundle.
The adjoint bundle,
\[
\Ad(P):=(P\times(\mathfrak g\Join V))/(G\Join V)\to X,
\]
is endowed with fiberwise Lie algebra structure induced by \eqref{eq:centered-bracket}.



In the covariant reduction framework on principal bundles \cite{CaGaRa2001}, the reduced field is a section of the bundle of connections $\mathcal C(P)=(J^1P)/(G\Join V)\to X$, hence it defines a principal connection. Since the bundle of connections is an affine bundle modeled on $T^*X\otimes\Ad(P)\to X$, by fixing a principal connection $\mathcal A\in\Gamma(\mathcal C(P))$, we may identify
\begin{equation}\label{eq:identificationCP}
\mathcal C(P)\overset{\mathcal A}{\simeq} T^*X\otimes\Ad(P),\qquad\sigma_x\mapsto\sigma_x-\mathcal A(x).    
\end{equation}
Therefore, relative to $\mathcal A$, reduced fields are represented by
$\Ad(P)$-valued one-forms $\xi\in\Omega^1(X,\Ad(P))$. Note that the choice of reference connection changes the affine coordinates but not the
reduced connection, i.e., it does not contain additional dynamical data.

Let $VP:=\ker(T\pi)\subset TP$ denote the vertical bundle of
$\pi:P\to X$. A vertical variation of a section $s\in\Gamma(P)$ is a
section $\delta s\in\Gamma(s^*VP)$. In the following, we consider only compactly supported variations. Vertical variations of sections of $P\to X$ induce \emph{constrained} variations of the reduced field as follows (see, for instance,~\cite{CaRaSh2000,CaGaRa2001})
\begin{equation}\label{eq:delta-xi-CGPR}
\delta^{\mathcal A}\xi
:=(\delta\xi)^v=
\nabla^{\mathcal A}\eta - [\xi,\eta],
\end{equation}
where $\eta\in\Gamma(\Ad(P))$ is compactly supported, $\nabla^{\mathcal A}$ is the covariant derivative on the adjoint bundle induced by $\mathcal A$ and $(\cdot)^v$ denotes the vertical variation on the affine bundle
$\mathcal C(P)\to X$, expressed on the affine coordinates determined by
$\mathcal A$.

After fixing a (possibly local) trivialization\footnote{In general, a trivialization of a principal bundle is only local, i.e., $P|_\mathcal{U}\simeq \mathcal{U}\times(G\Join V)$ for some open $\mathcal{U}\subset X$. From now on, for brevity, we omit this notation while taking into account that this may be the case.} $P\simeq X\times(G\Join V)$, the adjoint bundle is given by
$\Ad(P)\simeq X\times(\mathfrak g\Join V)$. Therefore, the principal connection and the reduced field may be expressed as $\mathcal A=(\mathcal A^{\mathfrak g},\mathcal A^V)$ and $\xi=(\xi^{\mathfrak g},\xi^V)$ for some (possibly local) forms $\mathcal A^{\mathfrak g},\xi^{\mathfrak g}\in\Omega^1(X,\mathfrak g)$ and $\mathcal A^V,\xi^V\in\Omega^1(X,V)$. Note that $\mathcal A^{\mathfrak g}$ defines a principal connection on the trivial bundle $X\times G\to X$. The induced covariant derivative on the adjoint bundle $X\times\mathfrak g\to X$ is denoted by $\nabla^{\mathcal A^{\mathfrak g}}$. On this trivialization, \eqref{eq:delta-xi-CGPR} splits into two parts:
\begin{align}
\delta^{\mathcal A}\xi^{\mathfrak g}
&=
\nabla^{\mathcal A^{\mathfrak g}}\eta^{\mathfrak g} - [\xi^{\mathfrak g},\eta^{\mathfrak g}],
\label{eq:delta-sigma}\\[2pt]
\delta^{\mathcal A}\xi^V
&=
{\rm d}\eta^V-\eta^V\spadesuit\mathcal A^{\mathfrak g}+\mathcal A^V\spadesuit\eta^{\mathfrak g}-\xi^V\spadesuit\eta^{\mathfrak g}+\eta^V\spadesuit\xi^{\mathfrak g},
\label{eq:delta-nu}
\end{align}
where $\eta=(\eta^{\mathfrak g},\eta^V)$ for some (possibly local) functions $\eta^{\mathfrak g}\in C^\infty(X,\mathfrak g)$ and $\eta^V\in C^\infty(X,V)$. Here, ${\rm d}:C^\infty(X,V)\to\Omega^1(X,V)$ denotes the exterior
derivative applied component-wise on the chosen trivialization. Note that we regard $\eta^{\mathfrak g}$ and $\eta^V$ as functions or sections of a trivial bundle depending on the context.


\begin{remark}
\label{rem:reference-connection-sign}
The affine formula~\eqref{eq:identificationCP} is intrinsic, but its local signs
must be distinguished from those of a conventional gauge potential. On the trivialization $P\simeq X\times(G\Join V)$, let
$A,K\in\Omega^1(X,\mathfrak g\Join V)$ be the usual local connection one-forms representing the reference and reduced
connections $\mathcal A,\sigma\in\Gamma(\mathcal C(P))$, respectively.
For a right principal action the corresponding horizontal splittings have
local vertical parts $-A$ and $-K$.  Consequently, the local representative of
the affine difference $\sigma-\mathcal A$ is $\xi=A-K$.
Thus, the reconstruction condition
$\operatorname{Curv}(\mathcal A+\xi)=0$ (see~\eqref{eq:reconstructioncompatibility} below), where $+$ denotes affine addition in
$\mathcal C(P)$, reads in conventional local connection forms
\begin{equation}
F_{A-\xi}
=\mathrm d(A-\xi)
+\frac12[A-\xi\wedge A-\xi]=0,
\label{eq:local-reconstruction-sign}
\end{equation}
where $[K\wedge K](u,v)=[K(u),K(v)]-[K(v),K(u)]$ for (local) fields $u,v\in\mathfrak X(X)$.

In particular, for the zero reference connection and the right logarithmic
derivative $\Xi=(\mathrm ds)s^{-1}$, the reduced connection form is
$K=-\Xi$, whence~\eqref{eq:local-reconstruction-sign} becomes
\begin{equation}
\mathrm d\Xi-\frac12[\Xi\wedge\Xi]=0.
\label{eq:right-MC-sign-convention}
\end{equation}

\end{remark}\medskip

Let $\ell:\mathcal C(P)\to\mathbb R$ be a reduced Lagrangian.
On the affine coordinates determined by $\mathcal A$, define the reduced
momenta as the fiber derivatives
\[
\mathsf P_{\mathfrak g}
:=
\frac{\delta\ell}{\delta\xi^{\mathfrak g}}:T^*X\otimes(\mathfrak g\Join V)\to TX\otimes\mathfrak g^*,
\qquad
\mathsf P_V
:=
\frac{\delta\ell}{\delta\xi^V}:T^*X\otimes(\mathfrak g\Join V)\to TX\otimes V^*.
\]
With respect to ${\rm vol}_X$, we denote by
$\operatorname{div}^{\mathcal A^{\mathfrak g}}$ and
$\operatorname{div}$ the negative formal adjoints of
$\nabla^{\mathcal A^{\mathfrak g}}$ and ${\rm d}$, respectively.

As we will see in Theorem~\ref{thm:covariant-reduction-centered} below, the split variational principle~\eqref{eq:delta-sigma}--\eqref{eq:delta-nu} yields the Euler--Poincar\'e field equations on a centered semidirect product:
\begin{align}\label{eq:centeredeqsg}
& \operatorname{div}^{\mathcal A^{\mathfrak g}}\left(\mathsf P_{\mathfrak g}\right)=\ad^{*}_{\xi^{\mathfrak g}}\left(\mathsf P_{\mathfrak g}\right)+\mathcal A^V\diamondsuit\mathsf P_V
-\xi^V\diamondsuit\mathsf P_V,\\[2pt]\label{eq:centeredeqsV}
& \operatorname{div}\left(\mathsf P_V\right)=\mathcal A^{\mathfrak g}\heartsuit\mathsf P_V-\xi^{\mathfrak g}\heartsuit\mathsf P_V.
\end{align}
The centered coupling terms in the previous equations
arise precisely from the mixed pieces in \eqref{eq:delta-nu} and are encoded by
the heart and diamond operators \eqref{eq:heart-def}, \eqref{eq:diamond-def},
with implicit contraction over the spacetime index.



\paragraph{Relation to standard semidirect products.}

Let us clarify the distinction between the reduction theory developed in this paper and the Euler--Poincar\'e reduction on principal bundles modeled by standard semi-direct products. The centered presentation does not define an inequivalent reduction theory, but its advantage is geometric and computationally organizational. Specifically, under the conditions of~\S\ref{subsec:centered}, the map
\[\rho:G\to\operatorname{GL}(V),\qquad\rho(g)v:=g\cdot v\cdot g^{-1}.
\]
defines a representation of $G$. Indeed, the commutativity of the left and right actions ensures that $\rho(g_1g_2)=\rho(g_1)\rho(g_2)$ for each $g_1,g_2\in G$. By differentiating at the identity, we obtain the induced representation of $\mathfrak g$,
\begin{align*}
\rho_*:\mathfrak g\to\operatorname{End}(V),\qquad \rho_*(\xi)v=\xi\cdot v-v\cdot\xi,
\end{align*}

\begin{proposition}
\label{prop:centered-standard-equivalence}
Under the previous conditions, the following statements hold true:
\begin{enumerate}
    \item The map
    \begin{equation*}
    \Phi:G\Join V\longrightarrow G\ltimes_\rho V,
    \qquad
    \Phi(g,v):=(g,v\cdot g^{-1}),
    \end{equation*}
    where $\ltimes_\rho$ denotes the semi-direct product given by the representation $\rho$, is a Lie group isomorphism, whose tangent map at the identity element is the identity,
    \begin{equation}
    F:=T_{(e,0)}\Phi:
    \mathfrak g\Join V\longrightarrow\mathfrak g\ltimes_{\rho_*}V,
    \qquad
    F(\xi,v)=(\xi,v).
    \label{eq:centered-standard-Lie-isomorphism}
    \end{equation}
    \item Let $P_\Phi$ denote the same total space as $P$, regarded as a principal
    $(G\ltimes_\rho V)$-bundle with the right action
    $p\cdot_\Phi(g,b):=p\cdot\Phi^{-1}(g,b)$ for each $p\in P$ and $(g,b)\in G\ltimes_{\rho}V$. Then the map $F$ induces natural
    isomorphisms of adjoint and connection bundles,
    \begin{equation}
    \operatorname{Ad}_{G\Join V}(P)
    \simeq
    \operatorname{Ad}_{G\ltimes_\rho V}(P_\Phi),
    \qquad
    \mathcal C_{G\Join V}(P)
    \simeq
    \mathcal C_{G\ltimes_\rho V}(P_\Phi).
    \label{eq:centered-standard-bundle-isomorphisms}
    \end{equation}
    \item For each $\xi\in\Omega^1(X,\Ad_{G\Join V}(P))$ and $\eta\in\Gamma(\Ad_{G\Join V}(P))$, we have
    \begin{equation}
    F\big(\nabla^{\mathcal A}\eta-[\xi,\eta]\big)
    =\nabla^{\mathcal A'}\eta'-[\xi',\eta'],
    \label{eq:centered-standard-variation-intertwining}
    \end{equation}
    where $\xi'=F(\xi)$, $\eta'=F(\eta)$, and
    $\mathcal A'=F(\mathcal A)$.
    \item For the reduced Lagrangian on the semidirect product,
    \[\ell':T^*X\otimes\Ad_{G\ltimes_\rho V}(P_\Phi)\to\mathbb R,\qquad\ell'(\xi'):=\ell(F^{-1}\xi'),
    \]
    the momenta satisfy
    \begin{equation}
    \frac{\delta\ell'}{\delta\xi'}=(F^{-1})^*\left(\frac{\delta\ell}{\delta\xi}\right).
    \label{eq:centered-standard-momentum-intertwining}
    \end{equation}
\end{enumerate}
\end{proposition}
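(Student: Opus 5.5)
The four items will be proved in order, each building on the previous one. The key inputs are the commutation of the left and right actions and the explicit inverse $(g,v)^{-1}=(g^{-1},-g^{-1}\cdot v\cdot g^{-1})$.

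\textbf{Item 1.} We fix the semidirect-product convention $(g_1,b_1)(g_2,b_2)=(g_1g_2,\,b_1+\rho(g_1)b_2)$ and check directly that $\Phi$ is a homomorphism. On one side,
\[
\Phi\big((g_1,v_1)(g_2,v_2)\big)=\big(g_1g_2,\ (g_1\cdot v_2+v_1\cdot g_2)\cdot g_2^{-1}g_1^{-1}\big).
\]
Using that the two actions commute, the second entry simplifies to
\[
v_1\cdot g_1^{-1}+g_1\cdot(v_2\cdot g_2^{-1})\cdot g_1^{-1}=b_1+\rho(g_1)b_2,
\qquad b_i=v_i\cdot g_i^{-1},
\]
which is exactly the semidirect product of $\Phi(g_1,v_1)$ and $\Phi(g_2,v_2)$. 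The map $\Phi$ is smooth with smooth inverse $(g,b)\mapsto(g,b\cdot g)$, so it is a Lie group isomorphism.

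To compute $F$, we differentiate $\Phi$ along a curve $(g_\epsilon,v_\epsilon)$ with $g_0=e$ and $v_0=0$. The cross term $\tfrac{d}{d\epsilon}(0\cdot g_\epsilon^{-1})$ vanishes, so $F(\xi,v)=(\xi,v)$. As a consistency check, $F$ should intertwine the brackets: $\rho_*(\xi_1)v_2-\rho_*(\xi_2)v_1$ should equal the $V$-part of \eqref{eq:centered-bracket}, and expanding $\rho_*(\xi)v=\xi\cdot v-v\cdot\xi$ confirms this.

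\textbf{Item 2.} Since $\Phi$ is an isomorphism, $p\cdot_\Phi(g,b)=p\cdot\Phi^{-1}(g,b)$ is a free and proper right action with the same orbits as the original one. Hence $P_\Phi\to X$ is a principal $(G\ltimes_\rho V)$-bundle.

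For the adjoint bundles, we define $[p,\zeta]\mapsto[p,F\zeta]$. This is well defined because $\Phi$ being a homomorphism gives $F\circ\Ad_h=\Ad_{\Phi(h)}\circ F$. It also preserves the fibrewise brackets, since $F$ is a Lie algebra map.

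For the connection bundles, we use the fact that the vertical bundle $VP$ is the same set for $P$ and $P_\Phi$. The fundamental vector fields satisfy $\zeta^P=(F\zeta)^{P_\Phi}$, which follows by differentiating $p\cdot\Phi^{-1}(\exp(\epsilon F\zeta))=p\cdot\exp(\epsilon\zeta)$. Consequently the two notions of equivariant horizontal splitting coincide, and a connection form $\omega$ on $P$ corresponds to $F\circ\omega$ on $P_\Phi$. Working at the level of $J^1P$, the quotients $(J^1P)/(G\Join V)$ and $(J^1P)/(G\ltimes_\rho V)$ are literally equal, because the two group actions have the same orbits on $J^1P$. This identification is affine, with linear part induced by $F$ on $T^*X\otimes\Ad$.

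\textbf{Item 3.} Because the identification in Item 2 carries $\mathcal A$ to $\mathcal A'$ and respects the associated-bundle structure, it intertwines the covariant derivatives: $F(\nabla^{\mathcal A}\eta)=\nabla^{\mathcal A'}(F\eta)$. To verify this, we write $\eta$ as an equivariant function $\tilde\eta:P\to\mathfrak g\Join V$ and note that $\nabla^{\mathcal A}\eta=\mathrm d\tilde\eta\circ\mathrm{hor}$, where $F$ is constant and the horizontal projections agree. The bracket term is handled by the fact that $F$ is a Lie algebra homomorphism. Together these give \eqref{eq:centered-standard-variation-intertwining}.

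\textbf{Item 4.} This is the chain rule. For any $\delta\xi'$,
\[
\Big\langle \frac{\delta\ell'}{\delta\xi'},\delta\xi'\Big\rangle=\Big\langle \frac{\delta\ell}{\delta\xi},F^{-1}\delta\xi'\Big\rangle,
\]
which is precisely the statement $\frac{\delta\ell'}{\delta\xi'}=(F^{-1})^*\frac{\delta\ell}{\delta\xi}$.

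\textbf{Main obstacle.} The delicate step is Item 2, specifically the affine identification of connection bundles. One has to be careful with sign conventions for fundamental fields under a right action, and with the fact that $F$ is the identity on $\mathfrak g\oplus V$ as a vector space but not as an $\Ad$-module. The plan is to avoid local formulas entirely and argue through equal orbits on $J^1P$, which sidesteps both issues.
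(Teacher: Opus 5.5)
Your proposal is correct and follows essentially the same route as the paper's proof. The steps match one for one: a direct check that $\Phi$ is a homomorphism with inverse $(g,b)\mapsto(g,b\cdot g)$ and differentiation at $(e,0)$; relabelling the principal action through $\Phi$ to get the adjoint and connection bundle isomorphisms; naturality of $\nabla^{\mathcal A}$ and of the bracket under $F$; and the chain rule for $\ell'=\ell\circ F^{-1}$. You supply details the paper leaves implicit, namely $F\circ\operatorname{Ad}_h=\operatorname{Ad}_{\Phi(h)}\circ F$, the matching of fundamental vector fields and horizontal distributions, and the equality of orbits on $J^1P$.
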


\begin{proof}
It is readily seen that the inverse of $\Phi$ reads $\Phi^{-1}(g,b)=(g,b\cdot g)$ for each $(g,b)\in G\ltimes_\rho V$, whence $\Phi$ is a diffeomorphism. Using commutativity of the left and right actions, we have
\begin{align*}
\Phi\big((g_1,v_1)(g_2,v_2)\big)
&=\big(g_1g_2,
(g_1\cdot v_2+v_1\cdot g_2)\cdot(g_1g_2)^{-1}\big)\\
&=\big(g_1g_2,
v_1\cdot g_1^{-1}
+\rho(g_1)(v_2\cdot g_2^{-1})\big),
\end{align*}
for each $(g_1,v_1),(g_2,v_2)\in G\Join V$, which is the multiplication in $G\ltimes_\rho V$, thus making $\Phi$ a Lie group isomorphism. By differentiating at the
identity, we obtain~\eqref{eq:centered-standard-Lie-isomorphism}; in particular,
the two Lie brackets coincide on the underlying vector space
$\mathfrak g\oplus V$. Relabelling the principal action by a group
isomorphism gives the bundle maps in
\eqref{eq:centered-standard-bundle-isomorphisms}.  Naturality of the adjoint
connection, bracket, covariant divergence, and curvature under $F$ proves
\eqref{eq:centered-standard-variation-intertwining}.  Finally, the chain rule for
$\ell'=\ell\circ F^{-1}$ gives
\eqref{eq:centered-standard-momentum-intertwining}.
\end{proof}

Under the isomorphisms~\eqref{eq:centered-standard-Lie-isomorphism} and \eqref{eq:centered-standard-bundle-isomorphisms}, reference connections, constrained variations, covariant
momenta, Euler--Poincar\'e equations, and reconstruction conditions are
intertwined. Consequently, the centered equations
\eqref{eq:centeredeqsg}--\eqref{eq:centeredeqsV} are exactly the component
form of standard covariant Euler--Poincar\'e field equations for
a $(G\ltimes_\rho V)$-invariant Lagrangian density.

In the centered
coordinate $v$, the two actions entering the original construction remain
separate, whereas the semidirect coordinate
$b=v\cdot g^{-1}$ combines them into the effective representation $\rho$.
The heart and diamond operators resolve the dual of
$\rho_*(\xi)v=\xi\cdot v-v\cdot\xi$ into the contributions of the original
left and right actions.  This is particularly useful for tensor
representations, as we will illustrate on Section~\ref{sec:higher-order-gauge}: for $G_r^2$, the centered coordinate is the raw Hessian
$T=D^2\varphi(0)$ and the two actions record separately the transformation of
its output and of its two input slots, while
$B=T\cdot g^{-1}$ is a normalized semidirect coordinate.  Thus the centered
formalism preserves the variables supplied directly by the chain rule and
makes their individual momentum-transfer terms visible.  No claim of
inequivalent dynamics or of a new class of Lie groups is intended.

\begin{remark}
Semidirect-product symmetry has appeared previously in covariant
field-theoretic reduction in settings different from the one
considered here. Castrill\'on L\'opez and Ratiu~\cite{CaRa2003}
discuss the semidirect-product case within covariant
Lagrange--Poincar\'e reduction on principal bundles. Castrill\'on
L\'opez, Chac\'on, and Garc\'ia~\cite{CaChGa2013} consider an affine
principal bundle $Q\times_X E\to X$ with structure group $H\ltimes V$,
constructed from an $H$-principal bundle $Q\to X$ and the associated
vector bundle $E\to X$, and reduce the corresponding Lagrangian theory
by the subgroup $H$. The Hamiltonian counterpart of this affine-principal-bundle reduction was developed recently in~\cite{BerbelCastrillonAffine2025}. By contrast, here
$P\to X$ is a principal $(G\Join V)$-bundle and the reduction is
performed by the full centered structure group; the split
$\mathfrak g$- and $V$-equations reflect the two commuting actions
defining $G\Join V$.
\end{remark}\medskip

\section{Covariant Reduction on Principal Bundles Modeled by Centered Semidirect Products}\label{sec:covariantreduction}

This section constitutes the theoretical core of the paper. After recalling the geometry of the reduced configuration bundle, we compute the split Euler--Poincar\'e field equations on centered semidirect products. Next, we derive the reduced multisymplectic form formula, demonstrating how the centered couplings enter the symplectic current, and establish the corresponding covariant balance laws for energy and momentum. Finally, we complete the geometric picture by formulating the
covariant Lie--Poisson equations on the reduced multimomentum bundle
associated with the centered semidirect-product symmetry. Our treatment combines the covariant Euler--Poincar\'e framework on
principal bundles~\cite{CaGaRa2001} with the multisymplectic framework
for first-order field theories
~\cite{MaPaSh1998,GoIsMa1998,GoIsMaMo2004}, in a form adapted to the
special features of the centered product $G \Join V$. When the reduced variational principle and field equations are
expressed in the split $\mathfrak g$- and $V$-variables associated
with the centered presentation, the two actions give rise to the
heart and diamond coupling terms.

\medskip

Let $G$ be a Lie group with Lie algebra $\mathfrak g$, and let $V$ be
a vector space endowed with commuting left and right linear actions
of $G$. We denote the corresponding centered semidirect product by
$G\Join V$. Let $X$ be an oriented $(n+1)$-dimensional manifold with volume form ${\rm vol}_X$, and
$P\to X$ be a (right) principal $(G\Join V)$-bundle, and denote the associated adjoint bundle by $\Ad(P)\to X$. Recall that the bundle of connections $\mathcal C(P)=(J^1 P)/(G\Join V)\to X$ is an affine bundle modeled on $T^{*}X\otimes \Ad(P)$ (see, for instance,~\cite{CaGaRa2001}). Lastly, let $P\simeq X\times (G\Join V)$ be a (possibly local) trivialization of $P\to X$, which induces the trivialization $\Ad(P)\simeq X\times(\mathfrak g\Join V)$ of the adjoint bundle. Recall that, if $P\to X$ admits a global section, then it is globally trivializable.

\subsection{Euler-Poincar\'e Field Equations}
\label{sec:eulerpoincare}

Recall that the action of $G\Join V$ on $P$ naturally lifts to the jet bundle $J^{1}P\to X$. Let
\[
\mathcal L=L\,{\rm vol}_X:J^{1}P \longrightarrow \Lambda^{n+1}T^*X
\]
be a Lagrangian density, where $L:J^{1}P\to\mathbb R$, that is invariant under the natural right action of
$G\Join V$ on $J^{1}P$. The invariance of $L$ defines the reduced Lagrangian function
\begin{equation}\label{eq:reducedlagrangian}
\ell:\ \mathcal C(P)\longrightarrow\mathbb R, 
\qquad
[j^{1}_{x}s]\mapsto\ell([j^{1}_{x}s])=L(j^{1}_{x}s).  
\end{equation}
Upon choosing a principal connection $\mathcal A\in\Gamma(\mathcal C(P))$, we may equivalently regard $\ell:T^*X\otimes\Ad(P)\to\mathbb R$ thanks to \eqref{eq:identificationCP}. The reduced variable associated with a local section
$s\in\Gamma(P)$ is denoted by $\xi=[j^1s]-\mathcal A\in\Omega^1(X,\Ad(P))$.

Let us now formulate the covariant reduced variational principle and derive the covariant Euler--Poincar\'e equations on a principal bundle modeled on a centered semidirect product. 
The constrained variations \eqref{eq:delta-xi-CGPR} entering the reduced variational principle are induced by variations of the original section, i.e., compactly supported vertical variations $\delta s\in\Gamma(s^*VP)$ of $s\in\Gamma(P)$, and may be expressed in terms of the
covariant derivative induced by $\mathcal A$, which is given by $\nabla^{\mathcal A}\eta={\rm d}\eta + [\mathcal A,\eta]$ for each $\eta\in\Gamma(\Ad(P))$.

\medskip

To connect with local formulas, a (possibly local) section can be written as $s=(g,v)\in\Gamma(P)$ for some (possibly local) $g\in C^\infty(X,G)$ and $v\in C^\infty(X,V)$.
Analogously, the principal connection and the reduced section read $\mathcal A=(\mathcal A^{\mathfrak g},\mathcal A^V)$ and $\xi=(\xi^{\mathfrak g},\xi^V)$, respectively, for some (possibly local) $\mathcal A^{\mathfrak g},\xi^{\mathfrak g}\in\Omega^1(X,\mathfrak g)$ and $\mathcal A^V,\xi^V\in\Omega^1(X,V)$. Recall that $\mathcal A^{\mathfrak g}$ defines a (local)
connection form on the trivial principal bundle $X\times G\to X$ whose covariant derivative on the adjoint bundle $X\times\mathfrak g\to X$ is denoted by $\nabla^{\mathcal A^{\mathfrak g}}$. 
By writing $\eta=(\eta^{\mathfrak g},\eta^V)$, it is easy to check that
\begin{equation}\label{eq:cov-der-A}
\nabla^{\mathcal A}\eta=\left(\nabla^{\mathcal A^{\mathfrak g}}\eta^{\mathfrak g},~{\rm d}\eta^V-\eta^V\spadesuit\mathcal A^{\mathfrak g}+\mathcal A^V\spadesuit\eta^{\mathfrak g}\right),
\end{equation}
where \eqref{eq:centered-bracket} and \eqref{eq:primal-centered-commutator} have been taken into account. From this and \eqref{eq:centered-bracket}, it is readily seen that the constrained variation~\eqref{eq:delta-xi-CGPR} splits into \eqref{eq:delta-sigma} and \eqref{eq:delta-nu}.

In the (local) trivialization $P\simeq X\times(G\Join V)$, the partial derivatives of the reduced Lagrangian, which are typically known as the \emph{momenta}, \eqref{eq:reducedlagrangian} are given by
\begin{align*}
& {\mathsf P}_{\mathfrak g}:=\frac{\delta\ell}{\delta\xi^{\mathfrak g}}:T^*X\otimes(\mathfrak g\Join V)\to TX\otimes\mathfrak g^*, && \left\langle\frac{\delta\ell}{\delta\xi^{\mathfrak g}}
(\xi^{\mathfrak g},\xi^V),\dot\xi^{\mathfrak g}\right\rangle=\left.\frac{d}{dt}\right|_{t=0}\ell(\xi^{\mathfrak g}+t\dot\xi^{\mathfrak g},\xi^V),\\
& {\mathsf P}_V :=\frac{\delta\ell}{\delta\xi^V}:T^*X\otimes(\mathfrak g\Join V)\to TX\otimes V^*, && \left\langle\frac{\delta\ell}{\delta\xi^V}
(\xi^{\mathfrak g},\xi^V),\dot\xi^V\right\rangle=\left.\frac{d}{dt}\right|_{t=0}\ell(\xi^{\mathfrak g},\xi^V+t\dot\xi^V),
\end{align*}
for each $(\xi^{\mathfrak g},\xi^V)\in T_x^*X\otimes(\mathfrak g\Join V)$, where $\dot\xi^{\mathfrak g}\in T_x^*X\otimes\mathfrak g$, and
$\dot\xi^V\in T_x^*X\otimes V$. 

Lastly, with respect to ${\rm vol}_X$, we denote by
$\operatorname{div}^{\mathcal A^{\mathfrak g}}:
\Gamma(TX\otimes\mathfrak g^*)\to C^\infty(X,\mathfrak g^*)$
and
$\operatorname{div}:
\Gamma(TX\otimes V^*)\to C^\infty(X,V^*)$
the negative formal adjoints of
$\nabla^{\mathcal A^{\mathfrak g}}:
C^\infty(X,\mathfrak g)\to\Omega^1(X,\mathfrak g)$
and
${\rm d}:C^\infty(X,V)\to\Omega^1(X,V)$,
respectively.

We are ready to present the Euler--Poincar\'e reduction theorem for principal bundles on centered semidirect products.



\begin{theorem}
\label{thm:covariant-reduction-centered}
Let $X$ be an oriented manifold with volume form ${\rm vol}_X$, let
$P\to X$ be a right principal $(G\Join V)$-bundle, and let $P\simeq\mathcal U\times(G\Join V)$ be a (possibly local) trivialization of $P\to X$, where $\mathcal U\subset X$ is an
open set.  Let
$\mathcal L=L\,{\rm vol}_X:J^1P\to\Lambda^{n+1}T^*X$ be a
$(G\Join V)$-invariant Lagrangian density and let
$\ell:\mathcal C(P)\to\mathbb R$ be its reduced Lagrangian~\eqref{eq:reducedlagrangian}.  Given a
principal connection $\mathcal A$ on $P$, regard $\ell:T^*X\otimes\Ad(P)\to\mathbb R$ thanks to~\eqref{eq:identificationCP}.  Let
$s=(g,v)\in\Gamma_{\mathcal U}(P)$ be a (local) section and
$\xi=(\xi^{\mathfrak g},\xi^V)=[j^1s]-\mathcal A\in
\Omega^1(U,\operatorname{Ad}(P)|_U)$ its (local) reduced field. 

\medskip

Then the following statements are equivalent:

\begin{enumerate}[(i)]
\item The section $s$ is critical for the Hamilton principle, i.e.,
\[
\delta\mathcal S[s]=\delta\int_U L(j^1 s)\,{\rm vol}_X=0,
\]
for compactly supported vertical variations
$\delta s\in\Gamma_c(s^*VP)$.

\item The section $s$ satisfies the Euler--Lagrange equations for $L$.

\item The reduced section $\xi$ is critical for the reduced variational principle, i.e.,
\[
\delta\mathcal S_{\mathrm{red}}[\xi]=\delta\int_U \ell(\xi)\,{\rm vol}_X=0,
\]
for compactly supported variations of the form \eqref{eq:delta-xi-CGPR}. 

\item The reduced section $\xi$ satisfies the Euler--Poincar\'e field equations \cite{CaRaSh2000,CaGaRa2001}. In the trivialization, we obtain the \emph{Euler--Poincar\'e equations on the centered semidirect product} \eqref{eq:centeredeqsg} and \eqref{eq:centeredeqsV}. 
\end{enumerate}
\end{theorem}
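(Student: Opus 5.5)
The equivalences (i)$\Leftrightarrow$(ii)$\Leftrightarrow$(iii)$\Leftrightarrow$(iv) in their intrinsic form are the covariant Euler--Poincar\'e reduction theorem on principal bundles~\cite{CaRaSh2000,CaGaRa2001}, applied to the structure group $G\Join V$. I would take that result as given and only add what is specific to the centered product: the split local form \eqref{eq:centeredeqsg}--\eqref{eq:centeredeqsV} of statement (iv).

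For completeness, the logical chain is as follows. (i)$\Leftrightarrow$(ii) is the fundamental lemma of the calculus of variations on $J^1P$. For (i)$\Leftrightarrow$(iii), invariance gives $L(j^1s)=\ell(\xi)$, so $\mathcal S[s]=\mathcal S_{\mathrm{red}}[\xi]$. The surjectivity statement of~\cite{CaGaRa2001} then says two things. First, every compactly supported vertical variation $\delta s$ induces a variation of the form \eqref{eq:delta-xi-CGPR}, with $\eta$ the class of $s^{-1}\delta s$ in $\Ad(P)$. Second, every compactly supported $\eta\in\Gamma(\Ad(P))$ arises in this way.

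For (iii)$\Leftrightarrow$(iv), I would compute in the trivialization, using the split variations \eqref{eq:delta-sigma}--\eqref{eq:delta-nu}. These follow from \eqref{eq:cov-der-A} and \eqref{eq:centered-bracket}. Expanding gives
\[
\delta\mathcal S_{\mathrm{red}}=\int_U\Big(\big\langle\mathsf P_{\mathfrak g},\nabla^{\mathcal A^{\mathfrak g}}\eta^{\mathfrak g}-[\xi^{\mathfrak g},\eta^{\mathfrak g}]\big\rangle+\big\langle\mathsf P_V,{\rm d}\eta^V-\eta^V\spadesuit\mathcal A^{\mathfrak g}+\mathcal A^V\spadesuit\eta^{\mathfrak g}-\xi^V\spadesuit\eta^{\mathfrak g}+\eta^V\spadesuit\xi^{\mathfrak g}\big\rangle\Big){\rm vol}_X,
\]
with contraction over the spacetime index. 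Each term is then moved onto $\eta^{\mathfrak g}$ or $\eta^V$:
\begin{itemize}
\item The derivative terms become $-\langle\operatorname{div}^{\mathcal A^{\mathfrak g}}\mathsf P_{\mathfrak g},\eta^{\mathfrak g}\rangle$ and $-\langle\operatorname{div}\mathsf P_V,\eta^V\rangle$. This uses the definition of the divergences as negative formal adjoints; boundary terms vanish because of compact support.
\item The bracket term becomes $-\langle\mathsf P_{\mathfrak g},[\xi^{\mathfrak g},\eta^{\mathfrak g}]\rangle=\langle\ad^*_{\xi^{\mathfrak g}}\mathsf P_{\mathfrak g},\eta^{\mathfrak g}\rangle$.
\item The $\spadesuit$-terms are converted with \eqref{eq:heart-diamond-spades}. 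The terms linear in $\eta^{\mathfrak g}$ give $\langle\mathcal A^V\diamondsuit\mathsf P_V-\xi^V\diamondsuit\mathsf P_V,\eta^{\mathfrak g}\rangle$. Those linear in $\eta^V$ give $\langle\mathcal A^{\mathfrak g}\heartsuit\mathsf P_V-\xi^{\mathfrak g}\heartsuit\mathsf P_V,\eta^V\rangle$.
\end{itemize}
Since $\eta^{\mathfrak g}$ and $\eta^V$ are arbitrary and independent, vanishing of $\delta\mathcal S_{\mathrm{red}}$ is equivalent to \eqref{eq:centeredeqsg}--\eqref{eq:centeredeqsV}.

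The main obstacle is sign bookkeeping. Two places need care: the sign conventions for $\spadesuit$ versus $\heartsuit$ in \eqref{eq:heart-diamond-spades}, and the ordering $v\spadesuit\xi$ when the first slot carries the one-form index. In particular, $-\langle\mathsf P_V,\eta^V\spadesuit\mathcal A^{\mathfrak g}\rangle=+\langle\mathcal A^{\mathfrak g}\heartsuit\mathsf P_V,\eta^V\rangle$ must come out with the stated sign. A secondary point is the local-versus-global issue. Variations supported in $U$ suffice, since the intrinsic equations are local, and the covariant divergence is tensorial.

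As a cross-check, I would invoke Proposition~\ref{prop:centered-standard-equivalence}. It transports everything to $G\ltimes_\rho V$, where the standard semidirect Euler--Poincar\'e field equations hold. Dualizing $\rho_*(\xi)v=\xi\cdot v-v\cdot\xi$ there should reproduce the heart and diamond terms.
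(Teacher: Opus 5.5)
Your proposal is correct and follows essentially the same route as the paper. Like the paper, you cite the standard covariant Euler--Poincar\'e reduction for (i)$\Leftrightarrow$(ii)$\Leftrightarrow$(iii), then substitute the split constrained variations, integrate by parts, and convert the $\spadesuit$-terms using the heart and diamond identities; your signs for each term agree with the paper's result. The explicit surjectivity remark and the cross-check through Proposition~\ref{prop:centered-standard-equivalence} are harmless additions that the paper's argument does not need.
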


\begin{proof}
The equivalence between $(i)$ and $(ii)$ is classical in geometric field theory~\cite{GoIsMa1998}. Similarly, the equivalence between $(i)$ and $(iii)$ comes from the standard Euler--Poincar\'e reduction procedure (cf.~\cite{CaGaRa2001,CaRaSh2000}). The expression for the variations on the trivialization is a straightforward computation using \eqref{eq:centered-bracket}, \eqref{eq:primal-centered-commutator} and \eqref{eq:cov-der-A}.
Lastly, the equivalence between $(iii)$ and $(iv)$ comes from substituting \eqref{eq:delta-sigma} and \eqref{eq:delta-nu} into the reduced variational principle; namely,
\begin{align*}
0 & =\delta\int_U \ell(\xi^{\mathfrak g},\xi^V)\,{\rm vol}_X\\
& =\int_U\left(
\Big\langle {\mathsf P}_{\mathfrak g},~\nabla^{\mathcal A^{\mathfrak g}}\eta^{\mathfrak g} - [\xi^{\mathfrak g},\eta^{\mathfrak g}]\Big\rangle
+
\Big\langle {\mathsf P}_V,~{\rm d}\eta^V-\eta^V\spadesuit\mathcal A^{\mathfrak g}+\mathcal A^V\spadesuit\eta^{\mathfrak g}-\xi^V\spadesuit\eta^{\mathfrak g}+\eta^V\spadesuit\xi^{\mathfrak g}
\Big\rangle
\right)\mathrm{vol}_X .
\\
& =\int_U\bigg(
\Big\langle-\operatorname{div}^{\mathcal A_{\mathfrak g}}({\mathsf P}_{\mathfrak g})+\ad_{\xi^{\mathfrak g}}^*({\mathsf P}_{\mathfrak g})+\mathcal A^V\diamondsuit{\mathsf P}_V-\xi^V\diamondsuit{\mathsf P}_V,~\eta^{\mathfrak g}\Big\rangle\\
& \qquad+
\Big\langle-\operatorname{div}({\mathsf P}_V)+\mathcal A^{\mathfrak g}\heartsuit{\mathsf P}_V-\xi^{\mathfrak g}\heartsuit{\mathsf P}_V,~\eta^V\Big\rangle\bigg)\mathrm{vol}_X,
\end{align*}
where \eqref{eq:heart-def}, \eqref{eq:diamond-def} and \eqref{eq:primal-centered-commutator}, as well as covariant integration by parts, have been used (note that boundary terms vanish as variations are compactly supported), and terms have been rearranged. Since $\eta^{\mathfrak g}$ and $\eta^V$ are arbitrary compactly
supported sections, the fundamental lemma of the calculus of
variations yields the equations in (iv). Conversely, if these
equations hold, the last integral vanishes for every admissible
variation, which proves the equivalence between (iii) and (iv).
\end{proof}



Equivalently, setting
\[
\mathsf P:=\frac{\delta\ell}{\delta\xi}:T^*X\otimes\Ad(P)\to TX\otimes\Ad(P)^*,\qquad \left\langle\frac{\delta\ell}{\delta\xi}
(\xi),\dot\xi\right\rangle=\left.\frac{d}{dt}\right|_{t=0}\ell(\xi+t\dot\xi),
\]
for each $\xi,\dot\xi\in T_x^*X\otimes\Ad(P)$, the reduced equations admit the intrinsic form
\begin{equation}\label{eq:intrinsic-EP}
\operatorname{div}^{\mathcal A}(\mathsf P)
=
\operatorname{ad}^*_\xi(\mathsf P).
\end{equation}
Its local decomposition with respect to
$\mathfrak g\Join V$ is precisely
\eqref{eq:centeredeqsg}--\eqref{eq:centeredeqsV}.

The theorem above is formulated for a reduced field
$\xi=[j^1s]-\mathcal A$ already induced by a section $s$. Conversely,
an arbitrary (local) reduced field
$\xi\in\Omega^1(X,\Ad(P))$ is locally of this form precisely when the
connection $[j^1 s]=\mathcal A+\xi$ is flat, that is,
\begin{align}\label{eq:reconstructioncompatibility}
\operatorname{Curv}(\mathcal A+\xi)=0.
\end{align}
Global reconstruction may additionally depend on the topology and
holonomy of the principal bundle; see
\cite{CaRaSh2000,CaGaRa2001}.

\begin{remark}
If the principal bundle is trivial and the reference connection is
chosen to be the trivial connection, $\mathcal A=0$, then the reduced equations boil down to
\begin{align*}
& \operatorname{div}({\mathsf P}_{\mathfrak g})=
\ad^{*}_{\xi^{\mathfrak g}}({\mathsf P}_{\mathfrak g})-
\xi^V\diamondsuit{\mathsf P}_V,\\[2pt]
& \operatorname{div}({\mathsf P}_V)=
-\xi^{\mathfrak g}\heartsuit{\mathsf P}_V.
\end{align*}

\end{remark}\medskip

\begin{remark}
For a fixed local section $s$, one may choose the reference connection
to be $\mathcal A=[j^1s]$ and keep this reference fixed when taking
variations. At the section under consideration, one then has $\xi=0$,
and the reduced equations become
\begin{align*}
& \operatorname{div}^{\mathcal A^{\mathfrak g}}({\mathsf P}_{\mathfrak g})=\mathcal A^V\diamondsuit{\mathsf P}_V,\\[2pt]
& \operatorname{div}({\mathsf P}_V)=\mathcal A^{\mathfrak g}\heartsuit{\mathsf P}_V,
\end{align*}
which are the local decomposition of $\operatorname{div}^{\mathcal A}(\mathsf P)=0$ with respect to
$\mathfrak g\Join V$.

\end{remark}\medskip

\subsection{Reduced Multisymplectic Form Formula}\label{sec:multisymplecticformula}

It is standard to associate the Poincar\'e--Cartan geometry with a
first-order Lagrangian field theory and to obtain the corresponding
multisymplectic form formula; see, for example,
\cite{MaPaSh1998,MaPeShWe2001,GoIsMa1998}. Here the underlying multisymplectic
identity is standard, while its expression in reduced
Euler--Poincar\'e variables reflects the centered semidirect-product
geometry. In particular, the reduced symplectic current displays
explicitly the heart and diamond couplings.

We retain the setting and notation of Subsection~\ref{sec:eulerpoincare}, with
$\mathcal L=L\,{\rm vol}_X$. 
Let $\Theta_{\mathcal L}\in\Omega^{n+1}(J^1P)$ be the Poincar\'e--Cartan $(n+1)$-form on $J^1P$ associated with $\mathcal L$, and set $\Omega_{\mathcal L}:=-{\rm d}\Theta_{\mathcal L}$, the associated Poincar\'e--Cartan $(n+2)$-form. The form $\Omega_{\mathcal L}$ is closed and, for $\dim (X)\geq 2$, it is multisymplectic whenever the Lagrangian is regular.

As stated in Theorem \ref{thm:covariant-reduction-centered}, a section $s:X\to P$ is a critical point of
$\mathcal S[s]=\int_X\mathcal L(j^1s)$ if and only if it satisfies the
Euler--Lagrange equations. In multisymplectic form, these can be
expressed as
\begin{equation}\label{eq:EL-multisymplectic}
(j^1s)^*(\iota_Z\Omega_{\mathcal L})=0
\quad
\text{for all $\pi_{X,J^1P}$-vertical vector fields $Z$ on $J^1P$}.
\end{equation}

Let $s$ be a solution and let $U,W$ be vertical first variations along
$s$, that is, vertical variations satisfying the linearized
Euler--Lagrange equations along $s$. Define the associated symplectic
current by
\begin{equation}\label{eq:symplectic-current-cov}
\omega(U,W)
:=
(j^1s)^*
\left(
\iota_{j^1W}\iota_{j^1U}\Omega_{\mathcal L}
\right)
\in\Omega^n(X).
\end{equation}
By the multisymplectic form formula
\cite{MaPaSh1998,MaPeShWe2001,GoIsMa1998}, this current is closed:
\begin{equation}\label{eq:MSFF}
\mathrm d\,\omega(U,W)=0.
\end{equation}

We now express the symplectic current in the reduced variables of
Subsection~\ref{sec:eulerpoincare}. We keep the reference connection
$\mathcal A$ fixed and use $\mathsf P$ for the reduced momentum
evaluated at $\xi=[j^1s]-\mathcal A$,
\[
\mathsf P
=
\frac{\delta\ell}{\delta\xi}(\xi)
\in\Gamma\!\left(TX\otimes\Ad(P)^*\right).
\]
Using the fixed volume form ${\rm vol}_X$, we denote by
$\widehat{\mathsf P}
\in\Omega^n(X,\Ad(P)^*)$
the corresponding momentum $n$-form, defined by
\begin{equation}\label{eq:momentum-n-form}
\langle\widehat{\mathsf P},\eta\rangle
=
\iota_{\langle\mathsf P,\eta\rangle}{\rm vol}_X,
\qquad
\eta\in\Gamma(\Ad(P)).
\end{equation}
Thus $\widehat{\mathsf P}$ is simply the differential-form
representation of $\mathsf P$ determined by ${\rm vol}_X$.
Locally, if $\mathsf P=\mathsf P^\mu\partial_\mu$, then $\widehat{\mathsf P}
=\mathsf P^\mu\,{\rm d}^n x_\mu$, where
${\rm d}^n x_\mu:=\iota_{\partial_\mu}{\rm vol}_X.$

For arbitrary vertical variations $U,W$ along $s$, let
$\eta_U,\eta_W\in\Gamma(\Ad(P))$ denote their corresponding reduced
variation parameters. By \eqref{eq:delta-xi-CGPR},
\begin{equation}\label{eq:linearized-constraint}
\delta_U^{\mathcal A}\xi
=
\nabla^{\mathcal A}\eta_U-[\xi,\eta_U],
\qquad
\delta_W^{\mathcal A}\xi
=
\nabla^{\mathcal A}\eta_W-[\xi,\eta_W].
\end{equation}
Let
\[
\mathbb H_\ell(\xi)
:=
D_\xi\!\left(\frac{\delta\ell}{\delta\xi}\right)
\]
denote the fiber Hessian of $\ell$. The induced momentum variation is
\begin{equation}\label{eq:momentum-linearization}
\delta_U\mathsf P
=
\mathbb H_\ell(\xi)\big[\delta_U^{\mathcal A}\xi\big],
\end{equation}
with the analogous expression for $\delta_W\mathsf P$.

For a vertical variation $U$ along $s$, define the presymplectic
potential current by
\begin{equation}\label{eq:reduced-potential-current}
\theta_s(U)
:=
(j^1s)^*
\left(
\iota_{j^1U}\Theta_{\mathcal L}
\right)
\in\Omega^n(X).
\end{equation}
Its reduced expression is
\begin{equation}\label{eq:reduced-potential-current-local}
\theta_s(U)
=
\langle\widehat{\mathsf P},\eta_U\rangle
=
\iota_{\langle\mathsf P,\eta_U\rangle}{\rm vol}_X.
\end{equation}
where the pairing is taken in the adjoint-bundle factors, so that
$\langle\mathsf P,\eta_U\rangle$ is a vector field on $X$.

Let $s_{\varepsilon,\lambda}$ be a two-parameter variation of
$s=s_{0,0}$ whose infinitesimal variations are $U$ and $W$, with
commuting parameter derivatives. Since
$\Omega_{\mathcal L}=-\mathrm d\Theta_{\mathcal L}$,
\begin{equation}\label{eq:omega-from-theta}
\omega(U,W)
=
\delta_W\theta_s(U)-\delta_U\theta_s(W).
\end{equation}
The reduced variation parameters satisfy the Maurer--Cartan identity
\begin{equation}\label{eq:variation-MC}
\delta_U\eta_W-\delta_W\eta_U
=
[\eta_U,\eta_W].
\end{equation}
Indeed, locally $\eta$ is the right-trivialized variation associated
with the principal $(G\Join V)$-action, and the sign in
\eqref{eq:variation-MC} is consistent with the constrained variation
$\delta^{\mathcal A}\xi
=
\nabla^{\mathcal A}\eta-[\xi,\eta]$
used in \eqref{eq:delta-xi-CGPR}.

\begin{lemma}\label{lem:reduced-symplectic-current}
Let $s:X\to P$ be a section and let $U,W$ be vertical variations along
$s$, with reduced variation parameters
$\eta_U,\eta_W\in\Gamma(\Ad(P))$. Then the symplectic current
\eqref{eq:symplectic-current-cov} admits the reduced expression
\begin{equation}\label{eq:reduced-symplectic-current}
\omega(U,W)
=
\langle\delta_W\Phat,\eta_U\rangle
-
\langle\delta_U\Phat,\eta_W\rangle
-
\langle\Phat,[\eta_U,\eta_W]\rangle.
\end{equation}
\end{lemma}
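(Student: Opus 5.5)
The plan is to start from the identity \eqref{eq:omega-from-theta}, $\omega(U,W)=\delta_W\theta_s(U)-\delta_U\theta_s(W)$, which follows from $\Omega_{\mathcal L}=-\mathrm d\Theta_{\mathcal L}$ and the two-parameter variation $s_{\varepsilon,\lambda}$ with commuting parameter derivatives. Into it I would substitute the reduced expression \eqref{eq:reduced-potential-current-local} for the potential current, $\theta_s(U)=\langle\Phat,\eta_U\rangle$. That expression is valid for every member of the family $s_{\varepsilon,\lambda}$: the Poincar\'e--Cartan form contracted with a vertical jet-prolonged vector and pulled back by $j^1s$ only sees the fiber derivative of $L$ paired with the vertical part. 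By $(G\Join V)$-invariance, this equals the fiber derivative of $\ell$ paired with the right-trivialized variation parameter.

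Concretely, I will regard $\eta_U$ as a field along the two-parameter family, namely the reduced parameter of $\partial_\varepsilon s_{\varepsilon,\lambda}$, and likewise $\eta_W$ for $\partial_\lambda$. Then $\delta_W\theta_s(U)=\partial_\lambda\langle\Phat,\eta_U\rangle$. Since ${\rm vol}_X$ and the pairing in the adjoint-bundle factor are fixed, and the reference connection $\mathcal A$ is held fixed, the Leibniz rule gives
\[
\delta_W\theta_s(U)=\langle\delta_W\Phat,\eta_U\rangle+\langle\Phat,\delta_W\eta_U\rangle,
\]
and symmetrically for $\delta_U\theta_s(W)$. Here $\delta_W\Phat$ is the $n$-form attached via \eqref{eq:momentum-n-form} to $\delta_W\mathsf P$, which is given by \eqref{eq:momentum-linearization}.

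Subtracting, the terms $\langle\delta_W\Phat,\eta_U\rangle-\langle\delta_U\Phat,\eta_W\rangle$ appear directly. The remainder is $\langle\Phat,\delta_W\eta_U-\delta_U\eta_W\rangle$. By the Maurer--Cartan identity \eqref{eq:variation-MC}, $\delta_W\eta_U-\delta_U\eta_W=-[\eta_U,\eta_W]$, which yields exactly \eqref{eq:reduced-symplectic-current}. Note that no field equations are used, consistent with the lemma being stated for arbitrary vertical variations. Closedness of $\omega$ is a separate matter, given by \eqref{eq:MSFF}.

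The main obstacle is justifying \eqref{eq:reduced-potential-current-local} and the sign in \eqref{eq:variation-MC} carefully. These require the following steps.
\begin{enumerate}
\item Write locally $\Theta_{\mathcal L}=\frac{\partial L}{\partial y^a_\mu}{\rm d}y^a\wedge{\rm d}^nx_\mu+(\cdots){\rm vol}_X$. The second term is killed by contraction with vertical $j^1U$ after pullback, up to terms containing ${\rm d}y^a-y^a_\nu{\rm d}x^\nu$, which vanish on holonomic sections.
\item Identify $\frac{\partial L}{\partial y^a_\mu}U^a$ with $\langle\mathsf P^\mu,\eta_U\rangle$. This uses that the derivative of $j^1s\mapsto[j^1s]-\mathcal A$ in the first-jet direction, applied to a vertical vector written as $\eta_U$ right-translated, is linear in the derivatives of $U$ with coefficient giving $\eta_U$.
\item Check \eqref{eq:variation-MC} locally in the matrix-like case, writing $\eta=(\delta s)s^{-1}$ with the convention $\delta^{\mathcal A}\xi=\nabla^{\mathcal A}\eta-[\xi,\eta]$.
\end{enumerate}
I would take both \eqref{eq:reduced-potential-current-local} and \eqref{eq:variation-MC} as given, since they are stated in the text preceding the lemma. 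The proof then reduces to the Leibniz computation above.
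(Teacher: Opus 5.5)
Your proposal is correct and follows the paper's proof: substitute $\theta_s(U)=\langle\Phat,\eta_U\rangle$ into $\omega(U,W)=\delta_W\theta_s(U)-\delta_U\theta_s(W)$, and expand by the Leibniz rule with ${\rm vol}_X$ and $\mathcal A$ held fixed. The leftover term $\langle\Phat,\delta_W\eta_U-\delta_U\eta_W\rangle$ then becomes $-\langle\Phat,[\eta_U,\eta_W]\rangle$ by the Maurer--Cartan identity \eqref{eq:variation-MC}, and your sketched justifications of \eqref{eq:reduced-potential-current-local} and \eqref{eq:variation-MC} are extra, since the paper also takes both as given.
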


\begin{proof}
Since ${\rm vol}_X$ is fixed, differentiating
\eqref{eq:reduced-potential-current-local} and using
\eqref{eq:omega-from-theta}, we obtain
\begin{equation*}
\omega(U,W)
=
\langle\delta_W\Phat,\eta_U\rangle
-
\langle\delta_U\Phat,\eta_W\rangle
+
\langle
\Phat,\delta_W\eta_U-\delta_U\eta_W
\rangle.
\end{equation*}
Then, the result follows from the Maurer--Cartan identity \eqref{eq:variation-MC}.
\end{proof}

We next consider the linearized reduced equations
(cf.~\cite{CaGaRa2001}). Since the reference connection $\mathcal A$ is kept fixed, linearization of \eqref{eq:intrinsic-EP} along $U$ gives
\begin{equation}\label{eq:linearized-EP}
\operatorname{div}^{\mathcal A}(\delta_U\mathsf P)
=
\operatorname{ad}^*_\xi(\delta_U\mathsf P)
+
\operatorname{ad}^*_{\delta_U^{\mathcal A}\xi}\mathsf P ,
\end{equation}
together with
\begin{equation}\label{eq:linearized-constraint}
\delta_U^{\mathcal A}\xi
=
\nabla^{\mathcal A}\eta_U-[\xi,\eta_U],
\end{equation}
and
\begin{equation}\label{eq:momentum-linearization}
\delta_U\mathsf P
=
\mathbb H_\ell(\xi)
\big[\delta_U^{\mathcal A}\xi\big].
\end{equation}
The same relations hold for $W$. We refer to
\eqref{eq:linearized-constraint},
\eqref{eq:momentum-linearization}, and
\eqref{eq:linearized-EP}
as the linearized reduced Euler--Poincar\'e system.

\begin{proposition}\label{prop:reduced-MSFF}
Let $s:X\to P$ be a solution of the Euler--Lagrange equations and let
$\xi=[j^1s]-\mathcal A$ be the corresponding reduced solution of
\eqref{eq:intrinsic-EP}. Let $U,W$ be vertical first variations
along $s$, with reduced variation parameters
$\eta_U,\eta_W\in\Gamma(\Ad(P))$. Then the reduced symplectic current
\eqref{eq:reduced-symplectic-current} is closed:
\begin{equation}\label{eq:reduced-MSFF}
\mathrm d\,\omega(U,W)=0.
\end{equation}
\end{proposition}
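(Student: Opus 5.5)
There are two routes to the closedness of $\omega(U,W)$, and I would present both. The first is immediate. Lemma~\ref{lem:reduced-symplectic-current} shows that the reduced current \eqref{eq:reduced-symplectic-current} is exactly the unreduced current \eqref{eq:symplectic-current-cov}. Hence closedness follows from the classical multisymplectic form formula \eqref{eq:MSFF}, which applies because $s$ solves the Euler--Lagrange equations (Theorem~\ref{thm:covariant-reduction-centered}) and $U,W$ are first variations.

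The second route is a direct verification in reduced variables, which shows how the centered couplings cancel. I would use the identity $\mathrm d\langle\widehat{\mathsf Q},\eta\rangle=\big(\langle\operatorname{div}^{\mathcal A}\mathsf Q,\eta\rangle+\langle\mathsf Q,\nabla^{\mathcal A}\eta\rangle\big){\rm vol}_X$. This holds for any $\mathsf Q\in\Gamma(TX\otimes\Ad(P)^*)$ and $\eta\in\Gamma(\Ad(P))$, and is just the definition of $\operatorname{div}^{\mathcal A}$ as the negative adjoint of $\nabla^{\mathcal A}$, combined with $\mathrm d\iota_Y{\rm vol}_X=(\operatorname{div}Y){\rm vol}_X$.

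Applying this identity to each of the three terms in \eqref{eq:reduced-symplectic-current}, I would do the following.

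First, in $\mathrm d\langle\delta_W\widehat{\mathsf P},\eta_U\rangle$, replace $\operatorname{div}^{\mathcal A}\delta_W\mathsf P$ using \eqref{eq:linearized-EP}. Replace $\nabla^{\mathcal A}\eta_U$ by $\delta^{\mathcal A}_U\xi+[\xi,\eta_U]$ using \eqref{eq:linearized-constraint}. The two $\operatorname{ad}^*_\xi$ contributions then cancel, leaving
\[
\langle\mathsf P,[\delta^{\mathcal A}_W\xi,\eta_U]\rangle\cdot(-1)+\langle\delta_W\mathsf P,\delta^{\mathcal A}_U\xi\rangle,
\]
with the sign fixed by the stated coadjoint convention.

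Second, treat $\mathrm d\langle\delta_U\widehat{\mathsf P},\eta_W\rangle$ symmetrically. The terms $\langle\delta_W\mathsf P,\delta_U^{\mathcal A}\xi\rangle-\langle\delta_U\mathsf P,\delta_W^{\mathcal A}\xi\rangle$ cancel by symmetry of the fiber Hessian $\mathbb H_\ell(\xi)$, via \eqref{eq:momentum-linearization}.

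Third, expand $\mathrm d\langle\widehat{\mathsf P},[\eta_U,\eta_W]\rangle$. Use \eqref{eq:intrinsic-EP} for $\operatorname{div}^{\mathcal A}\mathsf P$. Use that $\nabla^{\mathcal A}$ is a derivation of the bracket: $\nabla^{\mathcal A}[\eta_U,\eta_W]=[\nabla^{\mathcal A}\eta_U,\eta_W]+[\eta_U,\nabla^{\mathcal A}\eta_W]$.

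After substituting the constraint, all remaining terms are of the form $\langle\mathsf P,[\cdot,\cdot]\rangle$ involving $\xi,\eta_U,\eta_W,\delta\xi$. These should cancel by the Jacobi identity for $[\xi,[\eta_U,\eta_W]]$. In the trivialization, one can split every term into $\mathfrak g$- and $V$-parts via \eqref{eq:cov-der-A} and \eqref{eq:heart-diamond-spades}, so that the cancellations appear as heart/diamond pairings. This bookkeeping is optional, since the intrinsic computation already suffices.

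I expect the main obstacle to be sign consistency. The coadjoint convention, the sign in \eqref{eq:variation-MC}, and the constrained variation \eqref{eq:delta-xi-CGPR} must all be compatible for the Jacobi cancellation to work. I would first check this with $\mathcal A=0$ on a matrix group, where everything reduces to right Maurer--Cartan calculus. If a sign mismatch persists, I would fall back on the first route, which needs no such bookkeeping.
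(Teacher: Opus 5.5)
Your proposal is correct, and your second route is the paper's proof. The paper expands $\mathrm d$ of the three terms in \eqref{eq:reduced-symplectic-current} using the divergence/pairing identity. It then substitutes \eqref{eq:intrinsic-EP}, \eqref{eq:linearized-EP} and $\nabla^{\mathcal A}\eta=\delta^{\mathcal A}\xi+[\xi,\eta]$, cancels the coadjoint and bracket terms, and finishes with the symmetry of $\mathbb H_\ell(\xi)$.

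Your first route is genuinely different and also valid. Lemma~\ref{lem:reduced-symplectic-current} holds for arbitrary vertical variations, so the reduced expression coincides with the unreduced current \eqref{eq:symplectic-current-cov}, and \eqref{eq:MSFF} then gives closedness in one line.

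The two routes buy different things:
\begin{itemize}
\item Route 1 inherits everything the Lemma rests on: $\theta_s(U)=\langle\Phat,\eta_U\rangle$, \eqref{eq:omega-from-theta}, and the Maurer--Cartan relation \eqref{eq:variation-MC}. It also needs $\xi$ to come from an actual Euler--Lagrange solution $s$.
\item The paper's direct computation uses none of these. It needs only the reduced equation, its linearization, the constrained form of the variations, the Jacobi identity, and Hessian symmetry. So it proves closedness of \eqref{eq:reduced-symplectic-current} entirely on the reduced side, without ever using flatness of $\mathcal A+\xi$. It also serves as a check that the sign conventions are mutually consistent.
\end{itemize}

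Your sign worry is unfounded, but your bookkeeping description is slightly off. With $\langle\operatorname{ad}^*_\zeta\alpha,\eta\rangle=-\langle\alpha,[\zeta,\eta]\rangle$, the three pieces are:
\begin{itemize}
\item The first piece gives $\langle\delta_W\mathsf P,\delta^{\mathcal A}_U\xi\rangle-\langle\mathsf P,[\delta^{\mathcal A}_W\xi,\eta_U]\rangle$.
\item The second piece gives the same expression with $U\leftrightarrow W$.
\item After Jacobi removes the terms containing $\xi$, the third piece is $\langle\mathsf P,[\delta^{\mathcal A}_U\xi,\eta_W]+[\eta_U,\delta^{\mathcal A}_W\xi]\rangle$.
\end{itemize}
In $\mathrm d\omega$ the leftover $\langle\mathsf P,[\delta^{\mathcal A}\xi,\eta]\rangle$ terms cancel in pairs. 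One pair cancels identically and the other by antisymmetry of the bracket, not by Jacobi. Only the terms containing $\xi$ itself need the Jacobi identity. What survives is exactly the Hessian pair you identified.
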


\begin{proof}
Since ${\rm vol}_X$ is fixed, $\delta_U\Phat$ and $\delta_W\Phat$
are the $n$-form representatives of $\delta_U\mathsf P$ and
$\delta_W\mathsf P$, respectively. Starting from
\eqref{eq:reduced-symplectic-current} and using
\eqref{eq:momentum-n-form}, together with the compatibility of
$\nabla^{\mathcal A}$ with the duality pairing and the Lie bracket,
we obtain
\begin{align*}
\mathrm d\,\omega(U,W)
={}&
\Big(
\langle
\operatorname{div}^{\mathcal A}(\delta_W\mathsf P),\eta_U
\rangle
+
\langle
\delta_W\mathsf P,\nabla^{\mathcal A}\eta_U
\rangle
\\
&\quad-
\langle
\operatorname{div}^{\mathcal A}(\delta_U\mathsf P),\eta_W
\rangle
-
\langle
\delta_U\mathsf P,\nabla^{\mathcal A}\eta_W
\rangle
\\
&\quad-
\langle
\operatorname{div}^{\mathcal A}\mathsf P,[\eta_U,\eta_W]
\rangle
-
\langle
\mathsf P,\nabla^{\mathcal A}[\eta_U,\eta_W]
\rangle
\Big){\rm vol}_X.
\end{align*}
Using \eqref{eq:intrinsic-EP} and \eqref{eq:linearized-EP},
together with
\[
\nabla^{\mathcal A}\eta_U
=
\delta_U^{\mathcal A}\xi+[\xi,\eta_U],
\qquad
\nabla^{\mathcal A}\eta_W
=
\delta_W^{\mathcal A}\xi+[\xi,\eta_W],
\]
the coadjoint and bracket terms cancel by the convention
$\langle\operatorname{ad}^*_\zeta\alpha,\eta\rangle
=-\langle\alpha,[\zeta,\eta]\rangle
$
and the Jacobi identity. Hence
\begin{equation}\label{eq:MSFF-Hessian-step}
\mathrm d\,\omega(U,W)
=
\Big(
\langle
\delta_W\mathsf P,\delta_U^{\mathcal A}\xi
\rangle
-
\langle
\delta_U\mathsf P,\delta_W^{\mathcal A}\xi
\rangle
\Big){\rm vol}_X.
\end{equation}
Using \eqref{eq:momentum-linearization}, the right-hand side becomes
\[
\langle
\mathbb H_\ell(\xi)[\delta_W^{\mathcal A}\xi],
\delta_U^{\mathcal A}\xi
\rangle
-
\langle
\mathbb H_\ell(\xi)[\delta_U^{\mathcal A}\xi],
\delta_W^{\mathcal A}\xi
\rangle,
\]
which vanishes by symmetry of the fiber Hessian of $\ell$.
Therefore $\mathrm d\,\omega(U,W)=0$.
\end{proof}

\begin{corollary}\label{cor:centered-MS-current}
In a local trivialization of $P$, write
\[
\Phat=(\Phat_{\mathfrak g},\Phat_V),
\qquad
\eta_U=(\eta_U^{\mathfrak g},\eta_U^V),
\qquad
\eta_W=(\eta_W^{\mathfrak g},\eta_W^V).
\]
Then
\begin{align}
\omega(U,W)
={}&
\langle
\delta_W\Phat_{\mathfrak g},\eta_U^{\mathfrak g}
\rangle
-
\langle
\delta_U\Phat_{\mathfrak g},\eta_W^{\mathfrak g}
\rangle
+
\langle\delta_W\Phat_V,\eta_U^V\rangle
-
\langle\delta_U\Phat_V,\eta_W^V\rangle
\nonumber\\
&+
\langle
\operatorname{ad}^*_{\eta_U^{\mathfrak g}}\Phat_{\mathfrak g},
\eta_W^{\mathfrak g}
\rangle
-
\langle
\eta_U^{\mathfrak g}\heartsuit\Phat_V,\eta_W^V
\rangle
-
\langle
\eta_U^V\diamondsuit\Phat_V,\eta_W^{\mathfrak g}
\rangle.
\label{eq:centered-MS-current}
\end{align}
Here the coadjoint, heart, and diamond operators act on the dual
factors, leaving the differential-form factor unchanged.

Under the hypotheses of Proposition~\ref{prop:reduced-MSFF},
this current is closed. Consequently, for every compact domain
$\Sigma\subset X$ with smooth boundary,
\begin{equation}\label{eq:centered-MS-flux}
\int_{\partial\Sigma}\omega(U,W)=0.
\end{equation}
\end{corollary}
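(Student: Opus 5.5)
The plan is to start from the intrinsic reduced expression \eqref{eq:reduced-symplectic-current} of Lemma~\ref{lem:reduced-symplectic-current}. I will split each of its three terms according to the decomposition $\Ad(P)\simeq X\times(\mathfrak g\Join V)$ given by the local trivialization. On the dual side this gives $\Ad(P)^*\simeq X\times(\mathfrak g^*\oplus V^*)$, with the pairing being the sum of the $\mathfrak g^*\times\mathfrak g$ and $V^*\times V$ pairings. The pairings are taken in the Lie-algebra factor only, with the $n$-form factor of $\Phat$ carried along passively.

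First I will expand the first two terms of \eqref{eq:reduced-symplectic-current}. Since the splitting of the pairing is linear and pointwise, $\langle\delta_W\Phat,\eta_U\rangle-\langle\delta_U\Phat,\eta_W\rangle$ immediately yields the first line of \eqref{eq:centered-MS-current}.

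The only real work is the bracket term $-\langle\Phat,[\eta_U,\eta_W]\rangle$. I will use \eqref{eq:centered-bracket} and rewrite its $V$-component with the primal centered commutator \eqref{eq:primal-centered-commutator}:
\[
[\eta_U,\eta_W]=\Big([\eta_U^{\mathfrak g},\eta_W^{\mathfrak g}],~\eta_U^V\spadesuit\eta_W^{\mathfrak g}-\eta_W^V\spadesuit\eta_U^{\mathfrak g}\Big).
\]
This follows by grouping $\eta_U^V\cdot\eta_W^{\mathfrak g}-\eta_W^{\mathfrak g}\cdot\eta_U^V$ and $\eta_U^{\mathfrak g}\cdot\eta_W^V-\eta_W^V\cdot\eta_U^{\mathfrak g}$. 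Pairing with $\Phat$ then produces three contributions:
\begin{itemize}
\item By the coadjoint convention, $-\langle\Phat_{\mathfrak g},[\eta_U^{\mathfrak g},\eta_W^{\mathfrak g}]\rangle=\langle\operatorname{ad}^*_{\eta_U^{\mathfrak g}}\Phat_{\mathfrak g},\eta_W^{\mathfrak g}\rangle$.
\item By \eqref{eq:heart-diamond-spades}, $-\langle\Phat_V,\eta_U^V\spadesuit\eta_W^{\mathfrak g}\rangle=-\langle\eta_U^V\diamondsuit\Phat_V,\eta_W^{\mathfrak g}\rangle$.
\item Again by \eqref{eq:heart-diamond-spades}, $+\langle\Phat_V,\eta_W^V\spadesuit\eta_U^{\mathfrak g}\rangle=-\langle\eta_U^{\mathfrak g}\heartsuit\Phat_V,\eta_W^V\rangle$.
\end{itemize}
These are exactly the second line of \eqref{eq:centered-MS-current}. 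The main obstacle is sign bookkeeping: keeping track of which slot of $\spadesuit$ carries $\eta_U$ versus $\eta_W$, and using the paper's coadjoint sign rather than that of \cite{CoJa2015}. I will check each term against \eqref{eq:heart-def}--\eqref{eq:diamond-def} directly.

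Finally, closedness is not recomputed in components. The expression \eqref{eq:centered-MS-current} is just the local form of \eqref{eq:reduced-symplectic-current}, so under the hypotheses of Proposition~\ref{prop:reduced-MSFF} we have $\mathrm d\,\omega(U,W)=0$. For a compact domain $\Sigma$ with smooth boundary, Stokes' theorem then gives $\int_{\partial\Sigma}\omega(U,W)=\int_\Sigma\mathrm d\,\omega(U,W)=0$, which is \eqref{eq:centered-MS-flux}. This uses that $\omega(U,W)$ is a globally defined $n$-form, even though its component formula is only local.
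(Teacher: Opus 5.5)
Your proposal is correct and follows essentially the same route as the paper. Like the paper, you split the three terms of \eqref{eq:reduced-symplectic-current} in the trivialization, write the $V$-component of the bracket as $\eta_U^V\spadesuit\eta_W^{\mathfrak g}-\eta_W^V\spadesuit\eta_U^{\mathfrak g}$, convert with \eqref{eq:heart-diamond-spades} and the coadjoint convention, and then get closedness and the flux identity from Proposition~\ref{prop:reduced-MSFF} and Stokes' theorem. All your signs check out, and your remark that $\omega(U,W)$ is a globally defined $n$-form (so Stokes applies even though the component formula is local) makes explicit a point the paper leaves implicit.
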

\begin{proof}
The $V$-component of the centered bracket is
\[
[\eta_U,\eta_W]_V
=
\eta_U^V\spadesuit\eta_W^{\mathfrak g}
-
\eta_W^V\spadesuit\eta_U^{\mathfrak g}.
\]
Therefore,
\begin{align*}
-\langle\Phat_V,[\eta_U,\eta_W]_V\rangle
={}&
-\langle
\Phat_V,\eta_U^V\spadesuit\eta_W^{\mathfrak g}
\rangle
+
\langle
\Phat_V,\eta_W^V\spadesuit\eta_U^{\mathfrak g}
\rangle
\\
={}&
-\langle
\eta_U^V\diamondsuit\Phat_V,\eta_W^{\mathfrak g}
\rangle
-
\langle
\eta_U^{\mathfrak g}\heartsuit\Phat_V,\eta_W^V
\rangle,
\end{align*}
where the last equality follows from \eqref{eq:heart-diamond-spades}. For the $\mathfrak g$-component, the coadjoint
convention gives
\[
-\langle
\Phat_{\mathfrak g},
[\eta_U^{\mathfrak g},\eta_W^{\mathfrak g}]
\rangle
=
\langle
\operatorname{ad}^*_{\eta_U^{\mathfrak g}}\Phat_{\mathfrak g},
\eta_W^{\mathfrak g}
\rangle.
\]
Substituting these identities into
\eqref{eq:reduced-symplectic-current} yields
\eqref{eq:centered-MS-current}. The conservation law follows from
Proposition~\ref{prop:reduced-MSFF}, and the flux identity follows
from Stokes' theorem.
\end{proof}
\begin{remark}
Formula \eqref{eq:centered-MS-current} separates the canonical
$\mathfrak g$- and $V$-contributions from the terms induced by the
centered Lie algebra structure. The coadjoint term is the usual
Lie-algebraic contribution from the $\mathfrak g$-sector, whereas the
heart and diamond terms encode the two-sided coupling between the
$\mathfrak g$- and $V$-components. Thus the multisymplectic current
does not decompose into two independently conserved sectorial
currents: the centered coupling is already present in the current
itself.
\end{remark}\medskip

\subsection{Noether Theorem and Energy-Momentum Balance Laws}\label{sec:noether}

We now consider the covariant balance laws associated with the reduced
Euler-Poincar\'e equations. The covariant Noether theorem and its
Poincar\'e-Cartan formulation are standard in first-order field theory; see,
for instance, \cite{GoIsMa1998,DeLeonMartinDiegoSantamaria2004}. Conservation
laws in covariant Euler-Poincar\'e reduction on principal bundles were studied
in \cite{CaGaRa2001}. Here we make explicit how these balance laws split for
the centered semidirect product $G\Join V$. For Noether conservation laws associated with particle-relabeling symmetries in Euler--Poincar\'e fluid models, see also
\cite{CoHo2013}.

Let $s:X\to P$ be a solution of the Euler--Lagrange equations and let
$\xi=[j^1s]-\mathcal A\in\Omega^1(X,\Ad(P))$ be the associated reduced field.
As above, write
$\mathsf P=\delta\ell/\delta\xi\in\Gamma(TX\otimes\Ad(P)^*)$
for the reduced covariant momentum. Then $\xi$ and $\mathsf P$ satisfy the
intrinsic Euler--Poincar\'e equation \eqref{eq:intrinsic-EP}.

For $\eta\in\Gamma(\Ad(P))$, define the reduced momentum current by
\begin{equation}\label{eq:reduced-momentum-current}
J_\eta
:=
\langle\Phat,\eta\rangle
=
\iota_{\langle\mathsf P,\eta\rangle}{\rm vol}_X
\in\Omega^n(X).
\end{equation}
Thus $J_\eta$ is obtained by pairing the momentum $n$-form
$\Phat$ with the reduced variation parameter $\eta$. 
Locally, if
$\Phat=\mathsf P^\mu\,{\rm d}^n x_\mu$, then
\[
J_\eta
=
\langle\mathsf P^\mu,\eta\rangle\,{\rm d}^n x_\mu.
\]

\begin{proposition}\label{prop:covariant-balance-laws}
Under the assumptions above, the following statements hold.

\emph{(i)} For every $\eta\in\Gamma(\Ad(P))$,
\begin{equation}\label{eq:reduced-momentum-balance}
\mathrm d J_\eta
=
\left\langle
\mathsf P,
\delta_\eta^{\mathcal A}\xi
\right\rangle
{\rm vol}_X,
\qquad
\delta_\eta^{\mathcal A}\xi
:=
\nabla^{\mathcal A}\eta-[\xi,\eta].
\end{equation}

\emph{(ii)} Let $\chi\in\mathfrak g\Join V$ and consider the infinitesimal
variation generated by the fixed right action of $G\Join V$,
$s_\varepsilon=s\cdot\exp(\varepsilon\chi)$. If $\eta_\chi$ denotes its
reduced variation parameter, then
\begin{equation}\label{eq:fixed-group-transport}
\delta_{\eta_\chi}^{\mathcal A}\xi=0,
\qquad
\mathrm dJ_{\eta_\chi}=0.
\end{equation}
In a local trivialization, write
$\eta_\chi=(\eta_\chi^{\mathfrak g},\eta_\chi^V)$, $\mathsf P=(\mathsf P_{\mathfrak g},\mathsf P_V)$ and $\Phat=(\Phat_{\mathfrak g},\Phat_V)$, and define
\begin{equation}\label{eq:centered-sector-currents}
J_\chi^{\mathfrak g}
:=
\langle
\Phat_{\mathfrak g},
\eta_\chi^{\mathfrak g}
\rangle,
\qquad
J_\chi^V
:=
\langle
\Phat_V,
\eta_\chi^V
\rangle.
\end{equation}
Then $J_{\eta_\chi}=J_\chi^{\mathfrak g}+J_\chi^V$ and
\begin{equation}\label{eq:centered-sector-balance}
\mathrm dJ_\chi^{\mathfrak g}
=
\mathcal E_\chi\,{\rm vol}_X,
\qquad
\mathrm dJ_\chi^V
=
-\mathcal E_\chi\,{\rm vol}_X,
\end{equation}
where
\begin{equation}\label{eq:centered-exchange-rate}
\mathcal E_\chi
:=
\left\langle
(\mathcal A^V-\xi^V)\diamondsuit\mathsf P_V,
\eta_\chi^{\mathfrak g}
\right\rangle
=
-
\left\langle
\eta_\chi^{\mathfrak g}\heartsuit\mathsf P_V,
\mathcal A^V-\xi^V
\right\rangle.
\end{equation}
Thus the two sectorial currents need not be separately conserved, while their
sum is closed.

\emph{(iii)} Let $\widetilde Z\in\mathfrak X(P)$ be a projectable vector field
over $Z\in\mathfrak X(X)$ whose first prolongation is an infinitesimal
symmetry of $\mathcal L$. Along $s$, let
$U_{\widetilde Z}:=\widetilde Z\circ s-Ts(Z)\in\Gamma(s^*VP)$
and let $\eta_{\widetilde Z}\in\Gamma(\Ad(P))$ be its reduced variation
parameter. The covariant Noether current is
\begin{equation}\label{eq:reduced-Noether-current}
\mathcal J_{\widetilde Z}
:=
(j^1s)^*\!\left(
\iota_{j^1\widetilde Z}\Theta_{\mathcal L}
\right)
=
J_{\eta_{\widetilde Z}}
+
\ell\,\iota_Z{\rm vol}_X.
\end{equation}
and it satisfies
\begin{equation}\label{eq:covariant-Noether-conservation}
\mathrm d\mathcal J_{\widetilde Z}=0.
\end{equation}

In particular, let $Z^{H_{\mathcal A}}$ be the horizontal lift of
$Z\in\mathfrak X(X)$ with respect to the reference connection $\mathcal A$.
Its vertical characteristic has reduced parameter
\begin{equation}\label{eq:horizontal-lift-parameter}
\eta_{Z^{H_{\mathcal A}}}
=
-\iota_Z\xi.
\end{equation}
Hence, whenever $Z^{H_{\mathcal A}}$ is an infinitesimal symmetry of
$\mathcal L$,
\begin{equation}\label{eq:energy-momentum-current}
\mathcal J_{Z^{H_{\mathcal A}}}
= -\iota_{\mathsf T_{\mathcal A}(Z)}{\rm vol}_X.\end{equation}
where $T_{\mathcal A}$ is the energy–momentum tensor associated with the $\mathcal A$-horizontal lift,
\begin{equation}\label{eq:canonical-energy-momentum}
\mathsf T_{\mathcal A}(Z)
=
\left\langle
\mathsf P,\iota_Z\xi
\right\rangle
-\ell Z,
\qquad Z\in\mathfrak X(X),
\end{equation}
and 
\begin{equation}\label{eq:energy-momentum-conservation}
\operatorname{div}\bigl(\mathsf T_{\mathcal A}(Z)\bigr)=0.
\end{equation}
\end{proposition}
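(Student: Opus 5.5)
The plan is to derive all three parts from one identity: the divergence of $J_\eta$ computed with the covariant derivative. Everything else then comes from substituting the intrinsic equation \eqref{eq:intrinsic-EP} and, in a trivialization, the split equations \eqref{eq:centeredeqsg}--\eqref{eq:centeredeqsV} from Theorem~\ref{thm:covariant-reduction-centered}.

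\textbf{Part (i).} The pairing is compatible with $\nabla^{\mathcal A}$, and $\operatorname{div}^{\mathcal A}$ is the negative formal adjoint of $\nabla^{\mathcal A}$. Hence $\mathrm dJ_\eta=(\langle\operatorname{div}^{\mathcal A}\mathsf P,\eta\rangle+\langle\mathsf P,\nabla^{\mathcal A}\eta\rangle){\rm vol}_X$; this is the same Leibniz step used in the proof of Proposition~\ref{prop:reduced-MSFF}. Substituting $\operatorname{div}^{\mathcal A}\mathsf P=\operatorname{ad}^*_\xi\mathsf P$ and using $\langle\operatorname{ad}^*_\xi\mathsf P,\eta\rangle=-\langle\mathsf P,[\xi,\eta]\rangle$ gives \eqref{eq:reduced-momentum-balance}.

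\textbf{Part (ii).} The key point is that $s\cdot\exp(\varepsilon\chi)$ has $j^1$-class in $\mathcal C(P)$ equal to that of $s$, because $\mathcal C(P)=J^1P/(G\Join V)$ and the right action is by a fixed group element. With $\mathcal A$ held fixed, the reduced field $\xi_\varepsilon$ is therefore constant, so $\delta^{\mathcal A}_{\eta_\chi}\xi=0$, and (i) gives $\mathrm dJ_{\eta_\chi}=0$.

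For the sectorial split I work in the trivialization and repeat the computation of (i) componentwise. For the $\mathfrak g$-sector,
\[
\mathrm dJ^{\mathfrak g}_\chi=\bigl(\langle\operatorname{div}^{\mathcal A^{\mathfrak g}}\mathsf P_{\mathfrak g},\eta^{\mathfrak g}_\chi\rangle+\langle\mathsf P_{\mathfrak g},\nabla^{\mathcal A^{\mathfrak g}}\eta^{\mathfrak g}_\chi\rangle\bigr){\rm vol}_X .
\]
The $\mathfrak g$-component of $\delta^{\mathcal A}_{\eta_\chi}\xi=0$ is \eqref{eq:delta-sigma}, which gives $\nabla^{\mathcal A^{\mathfrak g}}\eta^{\mathfrak g}_\chi=[\xi^{\mathfrak g},\eta^{\mathfrak g}_\chi]$. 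Combining this with \eqref{eq:centeredeqsg}, the $\operatorname{ad}^*$ terms cancel and only $\langle(\mathcal A^V-\xi^V)\diamondsuit\mathsf P_V,\eta^{\mathfrak g}_\chi\rangle=\mathcal E_\chi$ survives.

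The $V$-sector is the step needing care with signs. I would not redo it independently. Since $J^{\mathfrak g}_\chi+J^V_\chi=J_{\eta_\chi}$ is closed, it follows immediately that $\mathrm dJ^V_\chi=-\mathcal E_\chi{\rm vol}_X$. As a cross-check, I will verify it directly using \eqref{eq:centeredeqsV} and \eqref{eq:delta-nu} with \eqref{eq:heart-diamond-spades}; the $\spadesuit$-terms should pair off against the heart terms. The second expression for $\mathcal E_\chi$ in \eqref{eq:centered-exchange-rate} is just \eqref{eq:heart-diamond-spades} applied with $v=\mathcal A^V-\xi^V$, contracted over the spacetime index.

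\textbf{Part (iii).} The first equality in \eqref{eq:reduced-Noether-current} uses the standard decomposition of $\iota_{j^1\widetilde Z}\Theta_{\mathcal L}$ into its vertical characteristic part plus $L\,\iota_Z{\rm vol}_X$. The vertical part pulls back to $\theta_s(U_{\widetilde Z})=J_{\eta_{\widetilde Z}}$ by \eqref{eq:reduced-potential-current-local}, and $L=\ell$ along $j^1s$. Conservation \eqref{eq:covariant-Noether-conservation} is the classical Noether theorem for first-order theories, which I would cite from~\cite{GoIsMa1998}: symmetry means $\mathcal L_{j^1\widetilde Z}\Theta_{\mathcal L}=0$, and then Cartan's formula together with \eqref{eq:EL-multisymplectic} gives $\mathrm d(j^1s)^*\iota_{j^1\widetilde Z}\Theta_{\mathcal L}=0$.

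For the horizontal lift, $U=Z^{H_{\mathcal A}}\circ s-Ts(Z)$ is minus the vertical part of $Ts(Z)$ relative to $\mathcal A$. Its right-trivialized parameter is therefore $-(\sigma-\mathcal A)(Z)=-\iota_Z\xi$, which is \eqref{eq:horizontal-lift-parameter}; this requires tracking the sign convention of Remark~\ref{rem:reference-connection-sign}. Substituting gives $\mathcal J=-\iota_{\langle\mathsf P,\iota_Z\xi\rangle}{\rm vol}_X+\ell\,\iota_Z{\rm vol}_X=-\iota_{\mathsf T_{\mathcal A}(Z)}{\rm vol}_X$. Since $\mathrm d\iota_Y{\rm vol}_X=\operatorname{div}(Y){\rm vol}_X$, the last claim \eqref{eq:energy-momentum-conservation} follows.

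I expect the main obstacle to be this sign bookkeeping for $\eta_{Z^{H_{\mathcal A}}}$ and for the fixed-action parameter $\eta_\chi$. Their consistency with \eqref{eq:delta-xi-CGPR} and \eqref{eq:variation-MC} has to be checked locally, writing $s=(g,v)$ and computing $(\delta s)s^{-1}$.
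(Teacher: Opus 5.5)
Your proposal is correct and follows the paper's proof essentially step by step. Part (i) uses the Leibniz identity for $J_\eta$ together with \eqref{eq:intrinsic-EP}. Part (ii) uses invariance of $[j^1s]$ under the fixed right action and the componentwise cancellation against \eqref{eq:centeredeqsg}. Part (iii) uses the splitting $j^1\widetilde Z=j^1U_{\widetilde Z}+T(j^1s)(Z)$ with $(j^1s)^*\Theta_{\mathcal L}=\ell\,{\rm vol}_X$, and the affine difference of horizontal lifts.

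The only deviation is in the $V$-sector of (ii). You obtain $\mathrm dJ^V_\chi=-\mathcal E_\chi{\rm vol}_X$ by subtracting the $\mathfrak g$-sector from the closed total current. The paper instead computes it directly from \eqref{eq:delta-nu} and \eqref{eq:centeredeqsV}, which is exactly your intended cross-check; both routes are valid.
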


\begin{proof}
For \emph{(i)}, the definition of the covariant divergence and the
compatibility of $\nabla^{\mathcal A}$ with the duality pairing give
\begin{align*}
\mathrm dJ_\eta
={}&
\left(
\left\langle
\operatorname{div}^{\mathcal A}\mathsf P,
\eta
\right\rangle
+
\left\langle
\mathsf P,
\nabla^{\mathcal A}\eta
\right\rangle
\right){\rm vol}_X.
\end{align*}
Using \eqref{eq:intrinsic-EP} and the coadjoint convention
$\langle\operatorname{ad}^*_\xi\mathsf P,\eta\rangle
=-\langle\mathsf P,[\xi,\eta]\rangle$, we obtain
\[
\mathrm dJ_\eta
=
\left\langle
\mathsf P,
\nabla^{\mathcal A}\eta-[\xi,\eta]
\right\rangle{\rm vol}_X,
\]
which proves \eqref{eq:reduced-momentum-balance}.

For \emph{(ii)}, the reduced connection is unchanged by the fixed right
$G\Join V$-action:
$[j^1s_\varepsilon]=[j^1s]$. Since $\mathcal A$ is fixed, the induced
variation of $\xi=[j^1s]-\mathcal A$ therefore vanishes. By
\eqref{eq:delta-xi-CGPR}, this is precisely the first identity in
\eqref{eq:fixed-group-transport}, and the second follows from
\eqref{eq:reduced-momentum-balance}.

It remains to compute the two centered components. From
\eqref{eq:delta-sigma} and
$\delta_{\eta_\chi}^{\mathcal A}\xi=0$,
\[
\nabla^{\mathcal A^{\mathfrak g}}\eta_\chi^{\mathfrak g}
=
[\xi^{\mathfrak g},\eta_\chi^{\mathfrak g}].
\]
Using \eqref{eq:centeredeqsg},
\begin{align*}
\mathrm dJ_\chi^{\mathfrak g}
={}&
\Big(
\left\langle
\operatorname{div}^{\mathcal A^{\mathfrak g}}
\mathsf P_{\mathfrak g},
\eta_\chi^{\mathfrak g}
\right\rangle
+
\left\langle
\mathsf P_{\mathfrak g},
\nabla^{\mathcal A^{\mathfrak g}}
\eta_\chi^{\mathfrak g}
\right\rangle
\Big){\rm vol}_X
=
\left\langle
(\mathcal A^V-\xi^V)\diamondsuit\mathsf P_V,
\eta_\chi^{\mathfrak g}
\right\rangle{\rm vol}_X,
\end{align*}
since the two terms containing
$[\xi^{\mathfrak g},\eta_\chi^{\mathfrak g}]$
cancel by the coadjoint convention.

Similarly, \eqref{eq:delta-nu} with
$\delta_{\eta_\chi}^{\mathcal A}\xi=0$ gives
\[
{\rm d}\eta_\chi^V
=
\eta_\chi^V\spadesuit
(\mathcal A^{\mathfrak g}-\xi^{\mathfrak g})
-
(\mathcal A^V-\xi^V)\spadesuit\eta_\chi^{\mathfrak g}.
\]
Together with \eqref{eq:centeredeqsV}, this yields
\begin{align*}
\mathrm dJ_\chi^V
=
\Big(
\left\langle
\operatorname{div}\mathsf P_V,
\eta_\chi^V
\right\rangle
+
\left\langle
\mathsf P_V,
{\rm d}\eta_\chi^V
\right\rangle
\Big){\rm vol}_X
=
-\left\langle
(\mathcal A^V-\xi^V)\diamondsuit\mathsf P_V,
\eta_\chi^{\mathfrak g}
\right\rangle{\rm vol}_X,
\end{align*}
where we used
$\langle\alpha,v\spadesuit\zeta\rangle
=\langle v\diamondsuit\alpha,\zeta\rangle
=-\langle\zeta\heartsuit\alpha,v\rangle$.
This proves \eqref{eq:centered-sector-balance}.

For \emph{(iii)}, the covariant Noether theorem gives
$\mathrm d\mathcal J_{\widetilde Z}=0$ on every Euler--Lagrange solution
whenever $j^1\widetilde Z$ is an infinitesimal symmetry of the Lagrangian
density \cite{GoIsMa1998,DeLeonMartinDiegoSantamaria2004}.
To obtain the reduced expression of the Noether current, from the
definition of $U_{\widetilde Z}$ we have
\[
\widetilde Z\circ s
=
U_{\widetilde Z}+Ts(Z).
\]
Accordingly, along the holonomic jet $j^1s$,
\[
j^1\widetilde Z
=
j^1U_{\widetilde Z}+T(j^1s)(Z).
\]
Moreover,
\[
(j^1s)^*\Theta_{\mathcal L}
=
\mathcal L(j^1s)
=
\ell(\xi)\,{\rm vol}_X.
\]
Hence, using naturality of the contraction and
\eqref{eq:reduced-potential-current-local},
\begin{align*}
\mathcal J_{\widetilde Z}
=
(j^1s)^*\!\left(
\iota_{j^1U_{\widetilde Z}}\Theta_{\mathcal L}
\right)
+
\iota_Z\!\left((j^1s)^*\Theta_{\mathcal L}\right)
=
J_{\eta_{\widetilde Z}}
+
\ell\,\iota_Z{\rm vol}_X,
\end{align*}
which proves \eqref{eq:reduced-Noether-current}.

Finally, by the definition of the connection represented by
$\sigma=[j^1s]$, the vector $Ts(Z)$ is the $\sigma$-horizontal lift of
$Z$ along $s$, whereas $Z^{H_{\mathcal A}}\circ s$ is the
$\mathcal A$-horizontal lift. Under the affine identification
\eqref{eq:identificationCP}, the vertical difference
$Ts(Z)-Z^{H_{\mathcal A}}\circ s$ corresponds to
$\iota_Z(\sigma-\mathcal A)=\iota_Z\xi$. Therefore the vertical
characteristic
$Z^{H_{\mathcal A}}\circ s-Ts(Z)$ corresponds to
$-\iota_Z\xi$, proving \eqref{eq:horizontal-lift-parameter}.
Substitution into \eqref{eq:reduced-Noether-current} gives
\eqref{eq:energy-momentum-current} and
\eqref{eq:canonical-energy-momentum}. If
$Z=\partial_\nu$ and $Z^{H_{\mathcal A}}$ is an infinitesimal symmetry,
then \eqref{eq:covariant-Noether-conservation} gives
\eqref{eq:energy-momentum-conservation}.
\end{proof}
\begin{remark}
In a volume coordinate chart, the energy–momentum tensor has the local expression 
\begin{equation}\label{eq:canonical-energy-momentum-local}
\mathsf T_{\mathcal A}{}^\mu{}_{\nu}
:=
\left\langle
\mathsf P^\mu,
\xi_\nu
\right\rangle
-\delta^\mu{}_{\nu}\ell.
\end{equation}
If the horizontal lifts of the translations
$\partial_\nu$ are infinitesimal symmetries of $\mathcal L$, then, from \eqref{eq:energy-momentum-conservation},
\begin{equation}\label{eq:energy-momentum-conservation-local}
\partial_\mu
\mathsf T_{\mathcal A}{}^\mu{}_{\nu}
=0.
\end{equation}
\end{remark}

\begin{remark}\label{rem:noether-generator-splitting}
The splitting of the sectorial currents in
\eqref{eq:centered-sector-currents} should not be confused with splitting
the fixed infinitesimal generator itself. Indeed, for
$\chi=(\chi^{\mathfrak g},\chi^V)\in\mathfrak g\Join V$, set
\[
\chi_{\mathfrak g}:=(\chi^{\mathfrak g},0),
\qquad
\chi_V:=(0,\chi^V).
\]
Since the Lagrangian is invariant under the full $G\Join V$-action,
each of these generators separately defines an infinitesimal symmetry.
Hence, by Proposition~\ref{prop:covariant-balance-laws} \emph{(ii)},
\[
\mathrm dJ_{\eta_{\chi_{\mathfrak g}}}=0,
\qquad
\mathrm dJ_{\eta_{\chi_V}}=0.
\]
By linearity,
\[
\eta_\chi
=
\eta_{\chi_{\mathfrak g}}+\eta_{\chi_V},
\qquad
J_{\eta_\chi}
=
J_{\eta_{\chi_{\mathfrak g}}}
+
J_{\eta_{\chi_V}}.
\]
These separately conserved currents should not, however, be identified
with $J_\chi^{\mathfrak g}$ and $J_\chi^V$ in
\eqref{eq:centered-sector-currents}. In general,
$\eta_{\chi_{\mathfrak g}}$ and $\eta_{\chi_V}$ need not coincide with
$(\eta_\chi^{\mathfrak g},0)$ and $(0,\eta_\chi^V)$, respectively,
because the centered semidirect-product adjoint action mixes the two
components. Thus splitting the symmetry generator before forming the
reduced current is different from projecting the reduced current into
its $\mathfrak g$- and $V$-sectorial contributions. The latter satisfy
the exchange law \eqref{eq:centered-sector-balance} and therefore need
not be separately conserved.
\end{remark}\medskip

The currents in Proposition~\ref{prop:covariant-balance-laws} and the
symplectic current \eqref{eq:symplectic-current-cov} arise from the same
Poincar\'e--Cartan geometry but play different roles. A Noether current is
obtained by contracting $\Theta_{\mathcal L}$ with one infinitesimal symmetry,
whereas the symplectic current is obtained by contracting
$\Omega_{\mathcal L}$ with two first variations. In particular, a spacetime symmetry is not itself a vertical first variation, although its vertical characteristic defines one along a solution. Nevertheless, the associated Noether current should not in general be identified with the symplectic current obtained by taking this first variation as one of the arguments in the multisymplectic form formula.


\subsection{Lie-Poisson Field Equations}
\label{sec:LiePoisson}

We now describe the Hamiltonian counterpart of the covariant
Euler--Poincar\'e equations on the centered semidirect product $G\Join V$. We shall first review the geometric formalism of covariant Hamiltonian field theory, as explained in \cite{GoIsMa1998}. We then particularize the full-structure-group Lie--Poisson reduction
of~\cite{BerbelCastrillon2023,CaMa2003} to centered semidirect-product
groups. A recent extension of the covariant Hamiltonian framework to
Hamiltonian field theories invariant under a subgroup of the structure
group is developed in~\cite{BerbelCastrillonSubgroup2026}. Finally, we
show that, for hyperregular theories, the reduced Lagrangian and
Hamiltonian formulations are equivalent.

\medskip

We briefly recall the Hamiltonian setting for first-order field theories. Thus, throughout the first part of this subsection, \(P\to X\) denotes an arbitrary fiber bundle over an oriented manifold \(X\) of dimension \(n+1\). The affine dual jet bundle is \[ J^1P^*=\operatorname{Aff}\bigl(J^1P,\Lambda^{n+1}T^*X\bigr), \] that is, for \(y\in P_x\), $x\in X$, the fiber \(J^1_yP^*\) consists of affine maps $J^1_yP\longrightarrow \Lambda^{n+1}_xT^*X$. The use of affine morphisms gives \(J^1P^*\) one additional affine coordinate with respect to \(J^1P\). Indeed, $\mathrm{rank}( J^1 P^*) = \mathrm{rank}( J^1 P ) + 1$. This bundle is a multisymplectic space as it carries a canonical \((n+1)\)-form \(\Theta\), and therefore a canonical multisymplectic \((n+2)\)-form $ \Omega=-{\rm d}\Theta$.

The corresponding polysymplectic momentum bundle is 
\[ \Pi = TX\otimes_P V^*P\otimes_P \Lambda^{n+1}T^*X . \] It has the same fiber rank as \(J^1P\to P\) and there is a natural affine projection \[ J^1P^*\rightarrow \Pi , \qquad \varphi\mapsto \hat{\varphi}, \] where \(\hat{\varphi}\) denotes the linear part of the affine map \(\varphi\in J^1_yP^*\). This projection is an affine bundle modeled on the rank-one bundle \(\Lambda^{n+1}T^*X\to X\). While \(J^1P^*\) carries the canonical multisymplectic form, \(\Pi\) is the natural bundle of polymomenta.

A Hamiltonian system may be described by a  section \(\delta:\Pi\to J^1P^*\) of the affine projection above. Pulling back the canonical forms on \(J^1P^*\) gives \[ \Theta_\delta=\delta^*\Theta, \qquad \Omega_\delta=-{\rm d}\Theta_\delta . \] 
Then, a section \(p:X\to\Pi\) is said to be a solution of the Hamiltonian system if 
\begin{equation} \label{eq:hamilton}
 p^*\bigl(i_Y\Omega_\delta\bigr)=0   
\end{equation}
for every vertical vector field \(Y\) with respect to \(\Pi\to X\), in analogy with equation \eqref{eq:EL-multisymplectic}.

In order to express the Hamiltonian section \(\delta\) as a Hamiltonian density on $\Pi$, one must choose a reference section of the affine bundle \(J^1P^*\to\Pi\). A connection $\mathcal A$ on the fiber bundle $P\to X$, provides such a reference section $\delta_{\mathcal{A}}:\Pi\to J^1P^*$, as detailed in \cite{GoIsMa1998}. 
We use the Hamiltonian density
$$\mathcal H:=\delta_{\mathcal A}-\delta,$$ so that
$\delta=\delta_{\mathcal A}-\mathcal H$.
For a hyperregular Lagrangian density $\mathcal L$, this
convention agrees with the Legendre Hamiltonian density
$\mathcal H(p)=\langle p,v-\mathcal A\rangle-\mathcal L(v)$,
where $F\mathcal L:J^1P\to\Pi$ is the fibre derivative
of $\mathcal L$ and $v=(F\mathcal L)^{-1}(p)$.
Indeed, the affine Legendre transform is the map
$j\mapsto\mathcal L(v)+\langle p,j-v\rangle$,
whereas the reference section is
$j\mapsto\langle p,j-\mathcal A\rangle$.
Thus the affine Legendre section minus the reference section equals
$-\mathcal H(p)$, consistently with
$\delta=\delta_{\mathcal A}-\mathcal H$.
In local volume coordinates, and for a zero reference connection,
the pulled-back canonical form is
$\Theta_\delta=p_a^\mu\,dq^a\wedge d^n x_\mu-H\,\mathrm{vol}_X$.
With this sign, the Hamiltonian system may equivalently be specified
by the triple $(\Pi,\mathcal A,\mathcal H)$.

Alternatively, there is an equivalent bracket formulation for these Hamiltonian systems. We choose the covariant bracket convention in which the bracket
with the Legendre Hamiltonian appears on the left-hand side of the
following evolution identity.
Fix an oriented volume form $\mathrm{vol}_X$ on $X$. A section $p$ of $\Pi\to X$ is a solution of $(\Pi,\mathcal{A},\mathcal{H})$, with $\mathcal{H}=H\mathrm{vol}_X$, if and only if for any horizontal Poisson $n$-form $F$ the identity
\begin{equation} \label{eq:hamilton-bracket-equations}
\{F,H\}\mathrm{vol}_X\circ p={\rm d}(p^*F)-({\rm d}^hF)\circ p
\end{equation}
holds, where ${\rm d}^hF$ is the horizontal differential of $F$ with respect to the extension on $\Pi$ obtained from the connection $\mathcal{A}$ and any Riemannian connection on $X$ as explained in \cite[Proposition~5.1]{CaMa2003}.

\medskip

We now return to the principal $(G\Join V)$-bundle setting.
The reduced polymomentum bundle is
\[
\bar\Pi:=\Pi/(G\Join V)
=TX\otimes\Ad(P)^*\otimes\Lambda^{n+1}T^*X
\simeq\Ad(P)^*\otimes\Lambda^nT^*X,
\]
where the last identification is the canonical contraction
$v\otimes\alpha\otimes\nu\mapsto\alpha\otimes\iota_v\nu$.
We denote a reduced multimomentum section by
$\widehat{\mathsf P}\in\Gamma(\bar\Pi)$ and use the same symbol
for its representation as an $\Ad(P)^*$-valued $n$-form.
Using the fixed volume form, we write
\[
\widehat{\mathsf P}=\mathsf P\otimes{\rm vol}_X,
\qquad
\mathsf P\in\Gamma(TX\otimes\Ad(P)^*).
\]
In the differential-form representation, this reads
$\widehat{\mathsf P}=\iota_{\mathsf P}{\rm vol}_X$.
We use the same symbol $h$ for a function on $\bar\Pi$ and its
representative on $TX\otimes\Ad(P)^*$ under this volume-dependent
identification. All derivatives $\delta h/\delta\mathsf P$ below
refer to the latter representative.

The reduced observables are the projections of
$(G\Join V)$-invariant horizontal Poisson $n$-forms on $\Pi$.
For $\zeta\in\Gamma(\Ad(P))$, let $\theta_\zeta$ be the horizontal
$n$-form on $\bar\Pi$ characterized by
\[
\widehat{\mathsf P}^{\,*}\theta_\zeta
=\langle\widehat{\mathsf P},\zeta\rangle
\]
for every section $\widehat{\mathsf P}$.
The reduced observables have the form~\cite{CaMa2003}
\begin{equation}\label{eq:reduced-Poisson-forms}
f=\theta_\zeta+\pi_{X,\bar\Pi}^*\omega+Z,
\end{equation}
where $\zeta\in\Gamma(\Ad(P))$, $\omega\in\Omega^n(X)$,
and $Z$ is a closed horizontal $n$-form on $\bar\Pi$.

For a function $h$ on $\bar\Pi$ and a reduced Poisson $n$-form
$f$ as in \eqref{eq:reduced-Poisson-forms}, the Lie--Poisson bracket,
expressed at $\widehat{\mathsf P}=\mathsf P\otimes{\rm vol}_X$, is
\begin{equation}\label{eq:LP-bracket}
\{f,h\}_{\mathrm{LP}}(\widehat{\mathsf P})
=\left\langle
\mathsf P,
\left[\zeta,\frac{\delta h}{\delta\mathsf P}(\mathsf P)\right]
\right\rangle,
\end{equation}
where
\[
\frac{\delta h}{\delta\mathsf P}:
TX\otimes\Ad(P)^*
\longrightarrow T^*X\otimes\Ad(P)
\]
is the fibre derivative of the representative $h$, defined by
\[
\frac{\delta h}{\delta \mathsf P}(\mathsf P)(\mathsf Q)
=\left.\frac{d}{d\varepsilon}\right|_{\varepsilon=0}
h(\mathsf P+\varepsilon\mathsf Q),
\qquad
\mathsf P,\mathsf Q\in T_xX\otimes\Ad(P)_x^*.
\]

Let $\mathcal A$ be a principal connection on $P\to X$, and let
$\mathcal H=H\,{\rm vol}_X$ be a $(G\Join V)$-invariant
Hamiltonian density on $\Pi$. Its reduced density is written as
$
\mathsf h(\widehat{\mathsf P})=h(\mathsf P)\,{\rm vol}_X.
$
For any solution $p:X\to\Pi$ of $(\Pi,\mathcal A,\mathcal H)$,
let $\widehat{\mathsf P}=\kappa\circ p$, where
$\kappa:\Pi\to\bar\Pi$ is the quotient projection.
For every reduced horizontal Poisson $n$-form $f$, one has
\begin{equation}\label{eq:cov-LP-functionals}
\widehat{\mathsf P}^{\,*}\{f,h\}_{\mathrm{LP}}\,{\rm vol}_X
={\rm d}\bigl(\widehat{\mathsf P}^{\,*}f\bigr)
-\widehat{\mathsf P}^{\,*}({\rm d}^hf),
\end{equation}
where ${\rm d}^h$ is the horizontal differential on $\bar\Pi$
defined in \cite{CaMa2003}.
This condition is equivalent to the intrinsic covariant Lie--Poisson equations
\begin{equation}\label{eq:cov-LP-intrinsic}
\operatorname{div}^{\mathcal A}\mathsf P
=\operatorname{ad}^*_{\delta h/\delta\mathsf P}\mathsf P,
\end{equation}
where the divergence is taken with respect to $\mathcal A$.

The sign can be checked on the linear observable $f=\theta_\zeta$.
By our convention for the coadjoint action,
$\{\theta_\zeta,h\}_{\mathrm{LP}}
=\langle\operatorname{ad}^*_{\delta h/\delta\mathsf P}
\mathsf P,\zeta\rangle$.
The right-hand side of the reduced evolution identity is
$\langle\operatorname{div}^{\mathcal A}\mathsf P,\zeta\rangle
\mathrm{vol}_X$.
Since $\zeta$ is arbitrary, this gives the displayed
Lie-Poisson equation.

As discussed in Subsection \ref{subsec:reduction-principal-bundles-centered}, after choosing a  trivialization $P\simeq X\times(G\Join V)$, the adjoint bundle is identified with
$\Ad(P)\simeq X\times(\mathfrak g\Join V)$. Consequently, $\Ad(P)^*\simeq X\times(\mathfrak g^*\oplus V^*)$. Thus, the reduced polymomentum bundle is
\[
    \bar\Pi \simeq TX\otimes(\mathfrak g^*\oplus V^*)\otimes \Lambda^{n+1}T^*X .
\]

In this trivialization, the representative $\mathsf P$ is written as
\[
\mathsf P=(\mathsf P_{\mathfrak g},\mathsf P_V),
\qquad
\mathsf P_{\mathfrak g}
\in\Gamma(TX\otimes\mathfrak g^*),
\qquad
\mathsf P_V
\in\Gamma(TX\otimes V^*).
\]

Similarly, in this trivialization, sections $\zeta = (\zeta^{\mathfrak{g}},\zeta^V) \in \Gamma(\Ad(P))$ appearing in the description of reduced horizontal Poisson $n$-forms split as $\zeta^{\mathfrak{g}}\in C^\infty(X,\mathfrak g)$ and $\zeta^V\in C^\infty(X, V)$.
Accordingly, the bracket in equation \eqref{eq:LP-bracket} decomposes into two components, which in terms of the heart and diamond operators read as
\begin{align}
\{f,h\}_{\mathrm{LP}}(\widehat{\mathsf P})
&=\left\langle\mathsf P,
\left[\zeta,\frac{\delta h}{\delta\mathsf P}\right]
\right\rangle\\
&=\left\langle\mathsf P_{\mathfrak g},
\left[\zeta^{\mathfrak g},\frac{\delta h}{\delta\mathsf P_{\mathfrak g}}\right]
\right\rangle
+\left\langle\mathsf P_V,
\zeta^V\spadesuit\frac{\delta h}{\delta\mathsf P_{\mathfrak g}}
-\frac{\delta h}{\delta\mathsf P_V}\spadesuit\zeta^{\mathfrak g}
\right\rangle\\
&=-\left\langle\operatorname{ad}^*_{\zeta^{\mathfrak g}}\mathsf P_{\mathfrak g},
\frac{\delta h}{\delta\mathsf P_{\mathfrak g}}\right\rangle
+\left\langle\zeta^{\mathfrak g}\heartsuit\mathsf P_V,
\frac{\delta h}{\delta\mathsf P_V}\right\rangle
+\left\langle\zeta^V\diamondsuit\mathsf P_V,
\frac{\delta h}{\delta\mathsf P_{\mathfrak g}}\right\rangle.
\label{eq:LP-heart-diamond}
\end{align}

Equivalently, with our coadjoint convention, the centered coadjoint action is
\[
\operatorname{ad}^{*}_{(\chi^{\mathfrak g},\chi^V)}(\mathsf P_{\mathfrak g},\mathsf P_V)
=
\left(
\operatorname{ad}^{*}_{\chi^{\mathfrak g}}\mathsf P_{\mathfrak g}-\chi^V\diamondsuit \mathsf P_V,
-\chi^{\mathfrak g}\heartsuit \mathsf P_V
\right).
\] Thus, the intrinsic equation \eqref{eq:cov-LP-intrinsic} has the expression 
\begin{align}
& \operatorname{div}^{\mathcal A^{\mathfrak g}}({\mathsf P}_{\mathfrak g})
-\mathcal A^V\diamondsuit{\mathsf P}_V
=
\operatorname{ad}^{*}_{\frac{\delta h}{\delta \mathsf P_{\mathfrak g}}}
({\mathsf P}_{\mathfrak g})
-
\frac{\delta h}{\delta \mathsf P_V}\diamondsuit{\mathsf P}_V,
\label{eq:cov-LP-g}\\[2pt]
& \operatorname{div}({\mathsf P}_V)
-\mathcal A^{\mathfrak g}\heartsuit{\mathsf P}_V
=
-\frac{\delta h}{\delta \mathsf P_{\mathfrak g}}\heartsuit{\mathsf P}_V.
\label{eq:cov-LP-v}
\end{align}

As will follow from the hyperregular Legendre transformation below,
\[
\frac{\delta h}{\delta\mathsf P_{\mathfrak g}}
=
\xi^{\mathfrak g},
\qquad
\frac{\delta h}{\delta\mathsf P_V}
=
\xi^V,
\]
so that equations
\eqref{eq:cov-LP-g}-\eqref{eq:cov-LP-v}
become precisely the Euler--Poincar\'e equations
\eqref{eq:centeredeqsg}-\eqref{eq:centeredeqsV},
after rearranging the terms involving the reference connection.

\paragraph{Hyperregular Lagrangians and Legendre transformation.}

Let $\mathcal L=L\,{\rm vol}_X:J^1P\rightarrow\Lambda^{n+1}T^*X$
be a hyperregular $(G\Join V)$-invariant first-order Lagrangian density,
and let $\ell:\mathcal C(P)\rightarrow\mathbb R$
be the corresponding reduced Lagrangian function defined in
\eqref{eq:reducedlagrangian}. Fix the reference connection
$\mathcal A\in\Gamma(\mathcal C(P))$.
Using the affine identification \eqref{eq:identificationCP}, we regard
the reduced Lagrangian as a function $\ell:T^*X\otimes\Ad(P)\rightarrow\mathbb R$, with reduced variable
\[
\xi=\sigma-\mathcal A\in T^*X\otimes\Ad(P).
\]
Its reduced Legendre transformation is
\begin{equation}
\label{eq:reduced-Legendre}
\mathbb F\ell:
T^*X\otimes\Ad(P)
\longrightarrow
TX\otimes\Ad(P)^*,
\qquad
\xi\mapsto
\mathsf P=\frac{\delta\ell}{\delta\xi}.
\end{equation}
The corresponding density-valued Legendre transformation is
\[
\widehat{\mathbb F\ell}(\xi)
=\frac{\delta\ell}{\delta\xi}\otimes{\rm vol}_X
=\widehat{\mathsf P}\in\bar\Pi.
\]
Hyperregularity means that the reduced Legendre transformation
\eqref{eq:reduced-Legendre} is a fiberwise diffeomorphism.

For a section $s$, set $\xi=[j^1s]-\mathcal A$ and
$p=\mathbb F\mathcal L\circ j^1s$. The reduced momenta satisfy
\[
\widehat{\mathsf P}=\kappa\circ p
=\widehat{\mathbb F\ell}\circ\xi,
\qquad
\mathsf P=\mathbb F\ell\circ\xi.
\]
Similarly, using the fixed volume form ${\rm vol}_X$, the reduced
Hamiltonian function is defined by
\begin{equation}
\label{eq:reduced-Hamiltonian-Legendre}
h(\mathsf P)
=
\langle\mathsf P,\xi\rangle-\ell(\xi),
\qquad
\xi=(\mathbb F\ell)^{-1}(\mathsf P).
\end{equation}
The usual Legendre duality relation gives
\begin{equation}
\label{eq:Legendre-duality}
\frac{\delta h}{\delta\mathsf P}
=
\xi.
\end{equation}
Consequently, the inverse reduced Legendre transformation takes
values in the vector bundle modelling the affine bundle
$\mathcal C(P)$:
\[
\mathbb Fh:
TX\otimes\Ad(P)^*
\rightarrow
T^*X\otimes\Ad(P),
\qquad
\mathsf P\mapsto
\frac{\delta h}{\delta\mathsf P}.
\]

\begin{proposition}
\label{prop:cov-LP}
Let
$\mathcal L=L\,{\rm vol}_X:J^1P\to\Lambda^{n+1}T^*X$
be a $(G\Join V)$-invariant first-order Lagrangian density with
hyperregular reduced Lagrangian $\ell$, and let $h$ be the associated
reduced Hamiltonian defined by the covariant Legendre transformation
\eqref{eq:reduced-Hamiltonian-Legendre}.
Then the covariant Euler--Poincar\'e equations
\eqref{eq:centeredeqsg}-\eqref{eq:centeredeqsV}
are equivalent to the covariant Lie--Poisson equations
\eqref{eq:cov-LP-g}-\eqref{eq:cov-LP-v}.
In particular, covariant reduction on $G\Join V$ admits mutually
consistent Lagrangian (Euler--Poincar\'e) and Hamiltonian
(Lie--Poisson) formulations.
\end{proposition}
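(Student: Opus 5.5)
The plan is a direct substitution argument based on the reduced Legendre transformation \eqref{eq:reduced-Legendre}. First, hyperregularity makes $\mathbb F\ell$ a fiberwise diffeomorphism $T^*X\otimes\Ad(P)\to TX\otimes\Ad(P)^*$. Sections $\xi$ and sections $\mathsf P=\mathbb F\ell\circ\xi$ are therefore in bijective correspondence, and the correspondence is compatible with the local splitting $\mathsf P=(\mathsf P_{\mathfrak g},\mathsf P_V)$, $\xi=(\xi^{\mathfrak g},\xi^V)$ induced by the trivialization $\Ad(P)\simeq X\times(\mathfrak g\Join V)$.

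Next I will verify the Legendre duality \eqref{eq:Legendre-duality} componentwise. Differentiate $h(\mathsf P)=\langle\mathsf P,\xi(\mathsf P)\rangle-\ell(\xi(\mathsf P))$ along $\mathsf Q$ and use $\delta\ell/\delta\xi=\mathsf P$. The terms involving $D\xi[\mathsf Q]$ cancel, which leaves $\delta h/\delta\mathsf P=\xi$. Splitting the pairing over $\mathfrak g^*\oplus V^*$ then gives $\delta h/\delta\mathsf P_{\mathfrak g}=\xi^{\mathfrak g}$ and $\delta h/\delta\mathsf P_V=\xi^V$.

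Substituting these identities into \eqref{eq:cov-LP-g}--\eqref{eq:cov-LP-v} gives
\[
\operatorname{div}^{\mathcal A^{\mathfrak g}}(\mathsf P_{\mathfrak g})=\operatorname{ad}^*_{\xi^{\mathfrak g}}(\mathsf P_{\mathfrak g})+\mathcal A^V\diamondsuit\mathsf P_V-\xi^V\diamondsuit\mathsf P_V,\qquad
\operatorname{div}(\mathsf P_V)=\mathcal A^{\mathfrak g}\heartsuit\mathsf P_V-\xi^{\mathfrak g}\heartsuit\mathsf P_V .
\]
These are exactly \eqref{eq:centeredeqsg}--\eqref{eq:centeredeqsV}. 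Each step is an equality of expressions once $\mathsf P=\mathbb F\ell(\xi)$ and $\xi=\mathbb Fh(\mathsf P)$, so the argument runs in both directions. A solution $\xi$ of the Euler--Poincar\'e equations yields a solution $\mathsf P$ of the Lie--Poisson equations, and conversely $\xi:=\delta h/\delta\mathsf P$ recovers a solution of the Euler--Poincar\'e equations. As a consistency check, the same result can be read intrinsically: \eqref{eq:cov-LP-intrinsic} with $\delta h/\delta\mathsf P=\xi$ is literally \eqref{eq:intrinsic-EP}.

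The main obstacle is bookkeeping rather than analysis: the reference-connection terms must match in sign. Concretely, I must check that the local splitting of $\operatorname{div}^{\mathcal A}$ on $\Ad(P)^*$ is the adjoint of \eqref{eq:cov-der-A}. Its $\mathfrak g^*$-component should be $\operatorname{div}^{\mathcal A^{\mathfrak g}}\mathsf P_{\mathfrak g}-\mathcal A^V\diamondsuit\mathsf P_V$ and its $V^*$-component $\operatorname{div}\mathsf P_V-\mathcal A^{\mathfrak g}\heartsuit\mathsf P_V$. I would verify this by pairing with $\eta=(\eta^{\mathfrak g},\eta^V)$, integrating by parts, and using \eqref{eq:heart-diamond-spades}, exactly as in the proof of Theorem~\ref{thm:covariant-reduction-centered}. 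I would check the centered coadjoint formula against the paper's sign convention for $\operatorname{ad}^*$ in the same way.
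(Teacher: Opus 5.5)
Your proposal is correct and follows essentially the same route as the paper. Both use the Legendre duality $\delta h/\delta\mathsf P=\xi$ split into its $\mathfrak g$- and $V$-components, substitute it directly into the Lie--Poisson equations, rearrange the reference-connection terms, and invoke invertibility of $\mathbb F\ell$ for the converse. Your extra checks are not needed for the substitution itself, but they are correct and confirm consistency with the intrinsic equations $\operatorname{div}^{\mathcal A}\mathsf P=\operatorname{ad}^*_\xi\mathsf P$. These checks are the derivation of the duality relation and the verification that the local splittings of $\operatorname{div}^{\mathcal A}$ and of the centered coadjoint action match the displayed component equations.
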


\begin{proof}
By hyperregularity and the Legendre duality relation
\eqref{eq:Legendre-duality},
\[
\frac{\delta h}{\delta\mathsf P_{\mathfrak g}}
=
\xi^{\mathfrak g},
\qquad
\frac{\delta h}{\delta\mathsf P_V}
=
\xi^V.
\]
Substituting these identities into
\eqref{eq:cov-LP-g}-\eqref{eq:cov-LP-v} gives
\[
\operatorname{div}^{\mathcal A^{\mathfrak g}}
(\mathsf P_{\mathfrak g})
-
\mathcal A^V\diamondsuit\mathsf P_V
=
\operatorname{ad}^{*}_{\xi^{\mathfrak g}}
(\mathsf P_{\mathfrak g})
-
\xi^V\diamondsuit\mathsf P_V,
\]
and
\[
\operatorname{div}(\mathsf P_V)
-
\mathcal A^{\mathfrak g}\heartsuit\mathsf P_V
=
-\xi^{\mathfrak g}\heartsuit\mathsf P_V.
\]
Rearranging the terms involving the reference connection
$\mathcal A$ yields exactly the Euler--Poincar\'e equations
\eqref{eq:centeredeqsg}-\eqref{eq:centeredeqsV}.
Conversely, the same identities together with the invertibility of
the reduced Legendre transformation \eqref{eq:reduced-Legendre}
transform the Euler--Poincar\'e equations into
\eqref{eq:cov-LP-g}-\eqref{eq:cov-LP-v}.

This is the specialization to the centered semidirect product
$G\Join V$ of the hyperregular Euler--Poincar\'e/Lie--Poisson
equivalence established in \cite[Theorem~6.7]{CaMa2003}.
\end{proof}

This equivalence concerns the reduced differential equations.
A reduced Hamiltonian solution reconstructs locally to an unreduced
solution only when the connection
$\mathcal A+\delta h/\delta\mathsf P$ is flat, where the addition
is the affine addition in the bundle of connections.
Thus the Hamiltonian reconstruction condition is
$\operatorname{Curv}(\mathcal A+\delta h/\delta\mathsf P)=0$.
As on the Lagrangian side \eqref{eq:reconstructioncompatibility}, global reconstruction also requires the
appropriate topological and holonomy conditions.

\begin{remark}
The reduction considered in Sections~\ref{sec:eulerpoincare},~\ref{sec:multisymplecticformula} and~\ref{sec:LiePoisson} is performed
with respect to the natural action of the fixed structure group
$G\Join V$ on the principal bundle $P\to X$. In Section~\ref{sec:noether}, we additionally consider
projectable infinitesimal symmetries in order to formulate the
covariant Noether and energy--momentum currents.

Neither type of symmetry should be confused with gauge invariance
under $x$-dependent fiberwise transformations. The latter requires
the framework of Lie group bundles and generalized principal
connections developed in~\cite{CastrillonRodriguezGeneralized2023,CastrillonRodriguezGauge2024}.
Extending centered semidirect-product reduction to that setting lies
beyond the scope of the present work.
\end{remark}\medskip

\section{Applications of the Centered Reduction Framework}\label{sec:applications}

This section illustrates the general framework of Section~\ref{sec:covariantreduction} through two complementary examples. We first consider the matrix centered product $GL(r)\Join\operatorname{Mat}(r)$, where the heart and diamond couplings can be written explicitly. We then study $\mathrm{SO}(3)\Join\operatorname{Mat}(3)$ and identify its scalar, vector, and symmetric-traceless sectors, clarifying their relation with microstructural and liquid-crystal variables.

\subsection{Matrix-valued Theory on $\mathrm{GL}(r)\Join\mathrm{Mat}(r)$}

Let us illustrate the covariant Euler--Poincar\'e theory on centered
semidirect products by considering a matrix-valued
field theory, which directly extends
the finite-dimensional model in~\cite{CoJa2015} to the covariant (field-theoretic)
setting. 

Let $G=\mathrm{GL}(r)$, which linearly acts on the vector space of all $r\times r$ matrices $V=\mathrm{Mat}(r)$ by left and right
multiplication. These actions clearly commute, yielding the centered semidirect product~\eqref{eq:centered-product}
\[
G \Join V = \mathrm{GL}(r)\Join \mathrm{Mat}(r).
\]

Recall that $\mathfrak{g}=\mathfrak{gl}(r)$ consists of the vector space $\mathrm{Mat}(r)$ endowed with the matrix commutator. Hence, we may identify $\mathfrak g^*$ and $V^*$ with $\mathrm{Mat}(r)$ with the dual pairing given by the trace, i.e., \(\langle v,w\rangle=\operatorname{tr}(v^\top w)\) for each \(v,w\in\mathrm{Mat}(r)\). Under this identification, the heart and diamond operators were computed in~\cite[\S4.1]{CoJa2015}; namely, $A\heartsuit v = A^\top v-vA^\top $ and $v\diamondsuit w = v^\top w - w v^\top$, for each $A\in\mathfrak{gl}(r)$ and $v,w\in\mathrm{Mat}(r)$. Similarly, the coadjoint action of $\mathfrak{gl}(r)$ reads $\operatorname{ad}^*_A(\alpha) = \alpha A^\top - A^\top \alpha$, for each $A,\alpha\in\mathfrak{gl}(r)$.

Let $(X,\gamma)$ be an oriented, $(n+1)$-dimensional semi-Riemannian manifold with volume form $\mathrm{vol}_X$. Typically, $X$ is a spacetime, i.e., $\gamma$ is Lorentzian metric. We work on the trivial right principal bundle
$P=X\times(\operatorname{GL}(r)\Join 
\operatorname{Mat}(r))\to X$, with its fixed product trivialization. In addition, let
\begin{align*}
\mathbb M:\mathfrak{gl}(r)\to\mathfrak{gl}(r)^*=\mathrm{Mat}(r),\qquad\mathbb N:\mathrm{Mat}(r)\to\mathrm{Mat}(r)^*=\mathrm{Mat}(r),
\end{align*}
be symmetric and positive definite constitutive isomorphisms.

For a section $s=(g,u)$, we write its right logarithmic derivative as
$(ds)s^{-1}=(A,v)$, where $A=A_\mu dx^\mu$ and $v=v_\mu dx^\mu$.
We use the metric volume form $\mathrm{vol}_X$ and local coordinates
$(x^\mu)$, with $\gamma=\gamma_{\mu\nu}dx^\mu\otimes dx^\nu$. With the previous notations, the reduced Lagrangian $\ell:T^*X\otimes(\mathfrak{gl}(r)\Join\mathrm{Mat}(r))\to\mathbb R$ is defined as
\begin{align*}
\ell(A,v)=\frac{1}{2}\gamma^{\mu\nu}\left(\langle\mathbb M\,A_\mu,A_\nu\rangle+\langle\mathbb N\,v_\mu,v_\nu\rangle\right),
\end{align*}
where we write $A=A_\mu\,dx^\mu\in T^*X\otimes\mathfrak{gl}(r)$ and $v=v_\mu\, dx^\mu\in T^*X\otimes\mathrm{Mat}(r)$. This defines a right $(\mathrm{GL}(r)\Join\mathrm{Mat}(r))$-invariant Lagrangian density as $\mathcal L(j^1s)=\ell((\mathrm{d}s)s^{-1})\,\mathrm{vol}_X$, for each $s\in\Gamma(P\to X)$. 

The reference connection only determines the affine coordinates
used to describe this fixed model. Indeed, for an arbitrary reference
connection $\mathcal A$, the reduced variable is
$\xi=(ds)s^{-1}-\mathcal A$, and the same Lagrangian is represented by
$\ell_{\mathcal A}(\xi)=\ell(\mathcal A+\xi)$ in the chosen trivialization.
If $\mathcal A'=\mathcal A+\beta$, then $\xi'=\xi-\beta$ and
$\ell_{\mathcal A'}(\xi')=\ell_{\mathcal A}(\xi)$.
Thus changing the reference does not change the underlying model,
provided the reduced Lagrangian is transformed accordingly.
For the explicit calculations below, we choose the zero reference
connection, so that $\xi=(A,v)$.

The reduced momenta $(\alpha,\varpi)\in
\Gamma\!\left(
TX\otimes
\bigl(\mathfrak{gl}(r)\Join\operatorname{Mat}(r)\bigr)^*
\right)
\simeq
\Gamma\!\left(
TX\otimes
\bigl(\mathfrak{gl}(r)^*\oplus\operatorname{Mat}(r)^*\bigr)
\right)$ corresponding to a reduced field $(A,v)\in\Omega^1(X,\mathfrak{gl}(r)\Join\mathrm{Mat}(r))$ read $\alpha=\alpha^\mu\,\partial_\mu$ and $\varpi=\varpi^\mu\,\partial_\mu$, where
\begin{align}\label{eq:matrixcomponents}
& \alpha^\mu=\frac{\delta\ell}{\delta A_\mu}=\gamma^{\mu\nu}\,\mathbb M\, A_\nu,\qquad\varpi^\mu=\frac{\delta\ell}{\delta v_\mu}=\gamma^{\mu\nu}\,\mathbb N\,v_{\nu}.
\end{align}
With the zero reference connection, the general reduced
Euler-Poincar\'e equations specialize to
\[
\begin{aligned}
\operatorname{div}(\alpha)
&=\alpha^\mu A_\mu^\top-A_\mu^\top\alpha^\mu
-v_\mu^\top\varpi^\mu+\varpi^\mu v_\mu^\top,\\
\operatorname{div}(\varpi)
&=-A_\mu^\top\varpi^\mu+\varpi^\mu A_\mu^\top.
\end{aligned}
\] Here the divergence is taken with respect to the metric volume form:
for a matrix-valued vector field $Z=Z^\mu\partial_\mu$,
$\operatorname{div}Z=\upsilon^{-1}\partial_\mu(\upsilon Z^\mu)$,
where $\upsilon=\sqrt{|\det(\gamma_{\alpha\beta})|}$.

By substituting~\eqref{eq:matrixcomponents}, we obtain a coupled system of first-order
PDEs for the reduced field $(A,v)$. As explained in Section~\ref{sec:eulerpoincare}, reconstruction of a solution $(A,v)$
as $(ds)s^{-1}$ additionally requires flatness. In the present
trivialization, this amounts to
$\partial_\mu A_\nu-\partial_\nu A_\mu-[A_\mu,A_\nu]=0$
and
$\partial_\mu v_\nu-\partial_\nu v_\mu-[A_\mu,v_\nu]
+[A_\nu,v_\mu]=0$.
On a simply connected domain, these conditions allow reconstruction
of $s$ after prescribing its value at one point.

For this system we can write explicitly the reduced multisymplectic current of 
Section~\ref{sec:multisymplecticformula}.
Let $s$ be a solution with reduced field $(A,v)$, and let $U,W$
be vertical first variations along $s$, with reduced parameters $\eta_U=(B,b),$ and
$\eta_W=(C,c)$. With the trivial reference connection, the constrained variations given by equation \eqref{eq:linearized-constraint} are
\begin{align*}
\delta_U A_\mu
&=\partial_\mu B-[A_\mu,B],\\
\delta_U v_\mu
&=\partial_\mu b-[A_\mu,b]+[B,v_\mu],
\end{align*}
and similarly for $W$, replacing $(B,b)$ by $(C,c)$.
Keeping $\gamma$, $\mathbb M$, and $\mathbb N$ fixed under
variation, the constitutive relations \eqref{eq:matrixcomponents}
give
\[
\delta_U\alpha^\mu
=\gamma^{\mu\nu}\mathbb M(\delta_U A_\nu),
\qquad
\delta_U\varpi^\mu
=\gamma^{\mu\nu}\mathbb N(\delta_U v_\nu),
\]
with analogous expressions for $W$.
These are the specialization of the Hessian relation
\eqref{eq:momentum-linearization}. 

The centered bracket of the variation parameters is 
$[\eta_U,\eta_W] =
\bigl([B,C],[B,c]-[C,b]\bigr)$.
Thus, identifying $\mathsf P^\mu=(\alpha^\mu,\varpi^\mu)$,
equation \eqref{eq:reduced-symplectic-current} yields
\begin{align}
\omega^\mu(U,W)
={}&
\langle\delta_W\alpha^\mu,B\rangle
-\langle\delta_U\alpha^\mu,C\rangle
\nonumber
+\langle\delta_W\varpi^\mu,b\rangle
-\langle\delta_U\varpi^\mu,c\rangle
\nonumber\\
&-
\langle\alpha^\mu,[B,C]\rangle
-
\langle\varpi^\mu,[B,c]-[C,b]\rangle,
\label{eq:matrix-MS-current}
\end{align}
where all pairings are the trace pairings introduced above.
The last two terms are precisely the coadjoint, heart, and
diamond contributions in \eqref{eq:centered-MS-current}:
\begin{equation*}
-\langle\alpha^\mu,[B,C]\rangle
-\langle\varpi^\mu,[B,c]-[C,b]\rangle
=
\langle\operatorname{ad}_B^*\alpha^\mu,C\rangle
-\langle B\heartsuit\varpi^\mu,c\rangle
-\langle b\diamondsuit\varpi^\mu,C\rangle .
\end{equation*}
In particular, although the quadratic Lagrangian has no mixed
constitutive terms, the current contains mixed contributions
from the centered Lie algebra structure.

By Proposition~\ref{prop:reduced-MSFF}, the current
\eqref{eq:matrix-MS-current} is closed. With
$\upsilon=\sqrt{|\det(\gamma_{\alpha\beta})|}$ as above,
its local conservation law is
\begin{equation}
\partial_\mu\bigl(\upsilon\,\omega^\mu(U,W)\bigr)=0.
\label{eq:matrix-MS-conservation}
\end{equation} In volume coordinates, this reduces to
$\partial_\mu\omega^\mu(U,W)=0$.
For this quadratic Lagrangian, the Hessian cancellation in
\eqref{eq:MSFF-Hessian-step} follows directly from the symmetry
of $\gamma^{\mu\nu}$ and the self-adjointness of
$\mathbb M$ and $\mathbb N$.

We next specialize the balance laws of
Section~\ref{sec:noether}, retaining the trivial reference
connection. Let $s=(g,u)$ denote a solution of the unreduced system, and consider a fixed infinitesimal generator
$\chi=(\chi^\g,\chi^V)\in\mathfrak{gl}(r)\Join\mathrm{Mat}(r)$.
The variation $s_\varepsilon=s\cdot\exp(\varepsilon\chi)$
has reduced parameter
$\eta_\chi=\operatorname{Ad}_{(g,u)}\chi=(B_\chi,b_\chi)$,
where
\begin{equation}\label{eq:matrix-Noether-parameter}
B_\chi=g\chi^\g g^{-1},
\qquad
b_\chi=g\chi^V g^{-1}+[ug^{-1},B_\chi].
\end{equation}

In the current formulas below, we use
$d^n x_\mu:=\iota_{\partial_\mu}\mathrm{vol}_X$. The associated Noether current is
$J_{\eta_\chi}=J_\chi^{\mathfrak g}+J_\chi^V$, with
\begin{equation}\label{eq:matrix-sector-currents}
J_\chi^{\mathfrak g}
=
\langle\alpha^\mu,B_\chi\rangle\,\mathrm d^n x_\mu,
\qquad
J_\chi^V
=
\langle\varpi^\mu,b_\chi\rangle\,\mathrm d^n x_\mu.
\end{equation}
By \eqref{eq:fixed-group-transport},
$\mathrm dJ_{\eta_\chi}=0$.
However, the sectorial currents obey the exchange laws
\begin{equation}\label{eq:matrix-sector-balance}
\mathrm dJ_\chi^{\mathfrak g}
=
\mathcal E_\chi\,{\rm vol}_X,
\qquad
\mathrm dJ_\chi^V
=
-\mathcal E_\chi\,{\rm vol}_X,
\end{equation}
where \eqref{eq:centered-exchange-rate} becomes
\begin{equation}
\mathcal E_\chi
=
-\langle v_\mu\diamondsuit\varpi^\mu,B_\chi\rangle
=
\langle\varpi^\mu,[B_\chi,v_\mu]\rangle
=
\gamma^{\mu\nu}
\langle\mathbb N v_\nu,[B_\chi,v_\mu]\rangle.
\label{eq:matrix-exchange-rate}
\end{equation}
Thus the centered coupling produces an explicit exchange
between the two sectorial currents while preserving their sum.

Formula \eqref{eq:matrix-Noether-parameter} also illustrates
Remark~\ref{rem:noether-generator-splitting}: even when $\chi^V=0$, the reduced parameter generally has
$b_\chi=[ug^{-1},B_\chi]\ne0$.
Hence the conserved current generated by $(\chi^g,0)$ should not
be identified with the projected sectorial contribution $J_\chi^g$.

Finally, the energy-momentum tensor
\eqref{eq:canonical-energy-momentum} is
\begin{align}
\mathsf T^\mu{}_\nu
&=
\langle\alpha^\mu,A_\nu\rangle
+
\langle\varpi^\mu,v_\nu\rangle
-\delta^\mu{}_\nu\ell=
\gamma^{\mu\lambda}
\left(
\langle\mathbb M A_\lambda,A_\nu\rangle
+
\langle\mathbb N v_\lambda,v_\nu\rangle
\right)
-\delta^\mu{}_\nu\ell.
\label{eq:matrix-energy-momentum}
\end{align}
If $\gamma^{\mu\nu}$, $\mathbb M$, and $\mathbb N$ are
independent of $x$,
the trivial horizontal lifts of spacetime translations
are infinitesimal symmetries. Their Noether currents and
the corresponding conservation laws are therefore
\[
\mathcal J_{(\partial_\nu)^{H}}
=
-\mathsf T^\mu{}_\nu\,\mathrm d^n x_\mu,
\qquad
\partial_\mu\mathsf T^\mu{}_\nu=0.
\]

The configuration variable $u(x)\in\operatorname{Mat}(r)$ provides
a matrix-valued internal descriptor, whereas $v_\mu$ is the
vector-space component of the reduced jet.
The field $g(x)$ may be interpreted as an internal frame or
microdeformation. Interpreting it as the gradient of a macroscopic
deformation requires additional compatibility conditions that are
not imposed here.

The centered semidirect product structure provides a geometric
framework for describing an internal matrix variable subject to
commuting left and right transformations, which may be interpreted
as changes of the surrounding frame and internal rearrangements,
respectively. The covariant Euler-Poincar\'e equations describe
the coupled evolution of the frame and microstructural degrees
of freedom. The sectorial currents $J_\chi^g$ and $J_\chi^V$
exhibit the associated momentum exchange while their sum is
conserved; the energy-momentum tensor $T^\mu{}_\nu$ gives the
translation balance laws under the stated invariance assumptions;
and the multisymplectic current $\omega^\mu$ expresses the
conservation law for pairs of first variations.
These distinct balance laws originate from the same underlying
variational and Poincar\'e-Cartan geometry.
In this sense, the present field theory can be viewed as a geometric
continuum analogue of the finite-dimensional centered semidirect
product dynamics in Colombo-Jacobs~\cite{CoJa2015}.

\subsection{Two-sided Matrix Microstructure and its Relation to
Micropolar and Liquid Crystal Variables}
\label{subsec:micropolar-LC}

Micropolar continua and liquid crystals provide an important point of
comparison for centered reduction, but the comparison has to be made at the
level of the underlying symmetry.  In a micropolar medium the internal
orientation is described by a microrotation, whereas in a uniaxial nematic the
physical order parameter is an unsigned director, hence a point of
$\mathbb{RP}^{2}\simeq \mathrm{SO}(3)/\mathrm{O}(2)$ (with the standard embedding
$\mathrm{O}(2)\subset \mathrm{SO}(3)$).  More generally, alignment tensors
belong to $\mathrm{SO}(3)$-orbits in the space of symmetric traceless matrices.  These
models are therefore naturally treated by reduction with broken symmetry,
advected order parameters, or affine Euler--Poincar\'e reduction
\cite{HoMaRa1998,GBRa2009,GBRa2012,GayBalmazTronci2010,
GayBalmazRatiuTronci2013}; see also the classical continuum treatments
\cite{Capriz1989,Eringen1999,deGennesProst1993,Virga1994}.  The purpose of the
present subsection is more specific: we determine the two-sided matrix model
which is covered by Theorem~\ref{thm:covariant-reduction-centered},
and then identify precisely its vector and alignment-tensor sectors.

This distinction is structural.  The reduction theory of Gay-Balmaz and
Tronci treats directors and alignment tensors as order parameters on
$\mathrm{SO}(3)$-orbits, and the comparison of Gay-Balmaz, Ratiu, and Tronci relates
the Ericksen-Leslie and micropolar descriptions only after the appropriate
order-parameter and compatibility data have been specified
\cite{GayBalmazTronci2010,GBRa2012,GayBalmazRatiuTronci2013}.  Those results do
not identify an arbitrary matrix semidirect-product model with nematodynamics.
Here we instead start from the actual centered action, compute its effective
representation, and only then compare its invariant sectors with the standard
liquid-crystal variables.

We begin with an obstruction which rules out a direct centered construction on
the director space.

\begin{proposition}
\label{prop:director-centered-obstruction}
Let the left action of $\mathrm{SO}(3)$ on $\mathbb R^{3}$ be the standard
representation.  Every smooth linear right action of $\mathrm{SO}(3)$ on $\mathbb R^3$
which commutes with this left action is trivial.  Consequently,
$\mathrm{SO}(3)\Join\mathbb R^3$ cannot carry two independent nontrivial actions with
the standard director representation as its left action.
\end{proposition}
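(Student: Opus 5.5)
The plan is to use Schur's lemma. A smooth linear right action of $\mathrm{SO}(3)$ on $\mathbb R^3$ has the form $v\cdot h=\sigma(h)v$ for some smooth map $\sigma:\mathrm{SO}(3)\to\operatorname{GL}(3)$. This map is an anti-homomorphism: $\sigma(h_1h_2)=\sigma(h_2)\sigma(h_1)$. Commutation with the standard left action, $(g\cdot v)\cdot h=g\cdot(v\cdot h)$, says exactly that $\sigma(h)g=g\sigma(h)$ for all $g\in\mathrm{SO}(3)$ and $h\in\mathrm{SO}(3)$. So every $\sigma(h)$ lies in the commutant of the standard representation.

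The first step is to compute this commutant. The standard representation of $\mathrm{SO}(3)$ on $\mathbb R^3$ is absolutely irreducible: its complexification is the irreducible spin-$1$ representation. By Schur's lemma the commutant in $\operatorname{End}(\mathbb R^3)$ is therefore $\mathbb R\,I$. If one prefers to avoid invoking absolute irreducibility, this can be checked directly. Commuting with all rotations about the coordinate axes forces a matrix to be diagonal with equal entries, since each such rotation group fixes only its own axis. Hence $\sigma(h)=\lambda(h)I$ with $\lambda(h)\neq0$.

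The second step is to determine $\lambda$. Because the scalars commute, the anti-homomorphism property makes $\lambda:\mathrm{SO}(3)\to\mathbb R^\times$ a continuous group homomorphism. Two facts pin it down:
\begin{itemize}
\item $\mathrm{SO}(3)$ is connected, so the image of $\lambda$ lies in $\mathbb R_{>0}$.
\item $\mathrm{SO}(3)$ is compact, so the image is a compact subgroup of $\mathbb R_{>0}$, which must be $\{1\}$.
\end{itemize}
An alternative route is to note that $\mathfrak{so}(3)$ is perfect, $[\mathfrak{so}(3),\mathfrak{so}(3)]=\mathfrak{so}(3)$, so $\lambda_*=0$, and connectedness then gives $\lambda\equiv1$. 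Either way $v\cdot h=v$, so the right action is trivial.

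For the consequence, note that with a trivial right action the centered product reduces to $(g_1,v_1)(g_2,v_2)=(g_1g_2,\,g_1v_2+v_1)$. This is the ordinary semidirect product $\mathrm{SO}(3)\ltimes\mathbb R^3$, and the effective representation $\rho$ of Proposition~\ref{prop:centered-standard-equivalence} is just the left action. No genuinely two-sided structure remains.

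The step requiring the most care is the commutant computation, specifically the real versus complex Schur issue. Over $\mathbb R$ an irreducible commutant could a priori be $\mathbb C$ or $\mathbb H$. I will rule this out either by the explicit axis-rotation argument or by observing that the representation has odd dimension $3$, which is incompatible with a $\mathbb C$- or $\mathbb H$-module structure.
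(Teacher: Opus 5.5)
Your proposal is correct and follows the same route as the paper: Schur's lemma reduces the commuting right action to a scalar character $\lambda:\mathrm{SO}(3)\to\mathbb R^\times$, and connectedness plus compactness then force $\lambda\equiv1$. Your additional justifications fill in details the paper leaves implicit. These are the odd-dimension or absolute-irreducibility argument for why the real commutant is only $\mathbb R I$, the perfectness alternative, and the explicit identification of the resulting product with $\mathrm{SO}(3)\ltimes\mathbb R^3$. One small point: in the axis-rotation argument, the fixed-axis observation only shows the matrix is diagonal, and the equality of the diagonal entries comes from the quarter-turns that permute the axes.
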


\begin{proof}
The commutant of the standard real representation of $\mathrm{SO}(3)$ consists of the
scalar endomorphisms.  Hence a commuting right action has the form
$v\cdot Q=\lambda(Q)v$.  The right-action identity implies that $\lambda$ is a
smooth character of $\mathrm{SO}(3)$ with values in $\mathbb R^{\times}$.  Since
$\mathrm{SO}(3)$ is connected and compact, its image is a connected compact subgroup of
$\mathbb R^{\times}$, and is therefore $\{1\}$.  Thus $v\cdot Q=v$ for every
$Q\in \mathrm{SO}(3)$.
\end{proof}

A natural matrix space on which both actions remain explicit is
$V=\operatorname{Mat}(3)$.  We consider
$H_{\rm mat}:=\mathrm{SO}(3)\Join\operatorname{Mat}(3)$ with
\begin{equation}
R\cdot A:=RA,
\qquad
A\cdot Q:=AQ,
\qquad
R,Q\in \mathrm{SO}(3),\quad A\in\operatorname{Mat}(3).
\label{eq:SO3-Mat3-actions}
\end{equation}
The actions commute by associativity.  As recalled in the Introduction, the
centered product is isomorphic to an ordinary semidirect product.  Here the
isomorphism and the induced representation are
\begin{equation}
\Phi:H_{\rm mat}\rightarrow
\mathrm{SO}(3)\ltimes_{\rho}\operatorname{Mat}(3),
\qquad
\Phi(R,A)=(R,B),\quad B:=AR^{-1},
\qquad
\rho(R)B=RBR^{-1}.
\label{eq:matrix-centered-semidir-isomorphism}
\end{equation}
Indeed, if $(R_i,A_i)$ corresponds to $(R_i,B_i)$, then
\begin{equation}
\Phi\big((R_1,A_1)(R_2,A_2)\big)
=
\big(R_1R_2,B_1+R_1B_2R_1^{-1}\big).
\label{eq:matrix-semidir-product}
\end{equation}
Thus $B$, rather than a fixed column of $A$, is the matrix variable on which
the effective $\mathrm{SO}(3)$-representation acts.

For the explicit field equations, take the trivial principal
$H_{\rm mat}$-bundle and the zero reference connection.  Let
$s=(R,A):X\to H_{\rm mat}$ and use the right logarithmic derivative, which is
the convention underlying the constrained variation
\eqref{eq:delta-xi-CGPR}:
\begin{equation}
\Xi_\mu=(\Omega_\mu,W_\mu)
:=(\partial_\mu s)s^{-1}.
\label{eq:matrix-right-reduced-jet}
\end{equation}
A direct use of the inverse and multiplication in
\eqref{eq:centered-product} gives
\begin{align}
& \Omega_\mu=(\partial_\mu R)R^{-1}\in\mathfrak{so}(3),
\nonumber\\
& W_\mu=(\partial_\mu A)R^{-1}-\Omega_\mu B
 =\partial_\mu B+B\Omega_\mu-\Omega_\mu B
 \in\operatorname{Mat}(3).
\label{eq:matrix-reduced-components}
\end{align}
In particular, $W_\mu=\partial_\mu B-\rho_*(\Omega_\mu)B$.  This formula also
provides a direct consistency check with the general theory.  If
$\eta=(\Sigma,Z)=\delta s\,s^{-1}$, then
\begin{align}
& \delta\Omega_\mu=\partial_\mu\Sigma-[\Omega_\mu,\Sigma],
\nonumber\\
& \delta W_\mu=\partial_\mu Z+Z\spadesuit\Omega_\mu
-W_\mu\spadesuit\Sigma,
\label{eq:matrix-constrained-variations}
\end{align}
which is exactly \eqref{eq:delta-sigma}-\eqref{eq:delta-nu} for
$\mathcal A=0$.

We identify $\mathfrak{so}(3)^*$ with $\mathfrak{so}(3)$ and
$\operatorname{Mat}(3)^*$ with $\operatorname{Mat}(3)$ by the Frobenius
pairing.  Since the first entry of the diamond operator belongs to
$\operatorname{Mat}(3)$ while its value belongs to
$\mathfrak{so}(3)^*$, the orthogonal projection onto skew matrices cannot be
omitted.  The operators \eqref{eq:heart-def}-\eqref{eq:diamond-def} become
\begin{align}
& \Omega\heartsuit N=\Omega^{\mathsf T}N-N\Omega^{\mathsf T}
=N\Omega-\Omega N,
\label{eq:matrix-heart-so3}\\
& W\diamondsuit N=\operatorname{skew}\!\left(W^{\mathsf T}N-NW^{\mathsf T}\right),
\qquad
\operatorname{skew}(C):=\frac12(C-C^{\mathsf T}),
\label{eq:matrix-diamond-so3}
\end{align}
and, with the coadjoint convention fixed in Section~2,
$\operatorname{ad}^*_{\Omega}M=[\Omega,M]$.  These formulas follow directly
from
\[
\langle W\diamondsuit N,\Omega\rangle
=\operatorname{tr}\!\left(N^{\mathsf T}(W\Omega-\Omega W)\right),
\qquad
\langle\Omega\heartsuit N,W\rangle
=\operatorname{tr}\!\left(N^{\mathsf T}(\Omega W-W\Omega)\right).
\]

Let $(X,\gamma)$ be an oriented semi-Riemannian manifold,
with $\mathrm{vol}_X$ its metric volume form.  Choose symmetric
nondegenerate constitutive isomorphisms $\mathbb J:\mathfrak{so}(3)\rightarrow\mathfrak{so}(3)^*$ and $\mathbb K:\operatorname{Mat}(3)\rightarrow
\operatorname{Mat}(3)^*$. The quadratic reduced Lagrangian
\begin{equation}
\ell(\Omega,W)
=\frac12\gamma^{\mu\nu}
\left(
\langle\mathbb J\Omega_\mu,\Omega_\nu\rangle
+\langle\mathbb K W_\mu,W_\nu\rangle
\right)
\label{eq:matrix-microstructure-lagrangian}
\end{equation}
defines the right $H_{\rm mat}$-invariant first-order density
$\mathcal L(j^1s)=\ell((\mathrm ds)s^{-1})\,{\rm vol}_X$ in the displayed
trivialization.  Its
covariant momenta are
\begin{equation}
M^\mu:=\frac{\delta\ell}{\delta\Omega_\mu}
=\gamma^{\mu\nu}\mathbb J\Omega_\nu,
\qquad
N^\mu:=\frac{\delta\ell}{\delta W_\mu}
=\gamma^{\mu\nu}\mathbb K W_\nu.
\label{eq:matrix-microstructure-momenta}
\end{equation}
The reduced Legendre map is fiberwise invertible whenever $\mathbb J$ and
$\mathbb K$ are invertible.  Hence this example also falls under
Proposition~\ref{prop:cov-LP}.

\begin{proposition}
\label{prop:centered-matrix-microstructure}
The Euler-Poincar\'e equations associated with
\eqref{eq:matrix-microstructure-lagrangian} are
\begin{align}
& \operatorname{div}M= [\Omega_\mu,M^\mu]
-\operatorname{skew}\!\left(
W_\mu^{\mathsf T}N^\mu-N^\mu W_\mu^{\mathsf T}
\right),
\label{eq:matrix-microstructure-EP-g}\\
& \operatorname{div}N=[\Omega_\mu,N^\mu].
\label{eq:matrix-microstructure-EP-V}
\end{align}
Here and below the repeated spacetime index is contracted, and the divergence
is the one determined by the volume form, as in Section~2.
\end{proposition}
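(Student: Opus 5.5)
The plan is to specialize Theorem~\ref{thm:covariant-reduction-centered}, in the form of the remark with trivial bundle and zero reference connection,
\[
\operatorname{div}(\mathsf P_{\mathfrak g})=\operatorname{ad}^*_{\xi^{\mathfrak g}}(\mathsf P_{\mathfrak g})-\xi^V\diamondsuit\mathsf P_V,\qquad
\operatorname{div}(\mathsf P_V)=-\xi^{\mathfrak g}\heartsuit\mathsf P_V,
\]
with $\xi=(\Omega,W)$ given by the right logarithmic derivative \eqref{eq:matrix-right-reduced-jet} and $\mathsf P=(M,N)$ as in \eqref{eq:matrix-microstructure-momenta}. The consistency check \eqref{eq:matrix-constrained-variations} confirms that the constrained variations for this parametrization are exactly \eqref{eq:delta-sigma}--\eqref{eq:delta-nu} with $\mathcal A=0$. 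Hence the theorem applies directly. The contraction over the spacetime index is implicit, so every term becomes a sum over $\mu$.

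Next, insert the explicit operators. The coadjoint term is $\operatorname{ad}^*_{\Omega_\mu}M^\mu=[\Omega_\mu,M^\mu]$, under the Frobenius identification and the coadjoint convention of Section~2. This follows from $\langle \operatorname{ad}^*_\Omega M,\eta\rangle=-\operatorname{tr}(M^{\mathsf T}[\Omega,\eta])$ together with skew-symmetry of $M,\Omega,\eta$ and cyclicity of the trace. The diamond term is $W_\mu\diamondsuit N^\mu=\operatorname{skew}(W_\mu^{\mathsf T}N^\mu-N^\mu W_\mu^{\mathsf T})$ by \eqref{eq:matrix-diamond-so3}. Substituting both gives \eqref{eq:matrix-microstructure-EP-g}.

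For the $V$-equation, \eqref{eq:matrix-heart-so3} gives $\Omega_\mu\heartsuit N^\mu=N^\mu\Omega_\mu-\Omega_\mu N^\mu$. Therefore $-\Omega_\mu\heartsuit N^\mu=[\Omega_\mu,N^\mu]$, which is \eqref{eq:matrix-microstructure-EP-V}.

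The main point needing care is verifying the operator formulas and signs. I would recheck \eqref{eq:matrix-diamond-so3} from the definition: $\langle W\diamondsuit N,\Omega\rangle=\operatorname{tr}(N^{\mathsf T}(W\Omega-\Omega W))$. Cyclic rearrangement gives $\operatorname{tr}((W^{\mathsf T}N-NW^{\mathsf T})^{\mathsf T}\Omega)$. Since $\Omega$ is skew, only the skew part of $W^{\mathsf T}N-NW^{\mathsf T}$ pairs nontrivially, which justifies the projection. The heart formula is checked similarly using $\Omega^{\mathsf T}=-\Omega$.

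Finally, I would note that the divergence is the one associated with the metric volume form, i.e.\ the negative formal adjoint of ${\rm d}$. This matches the remark's $\operatorname{div}$ for $\mathcal A=0$.
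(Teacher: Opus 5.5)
Your proposal is correct and follows the paper's proof: you specialize the split Euler--Poincar\'e equations \eqref{eq:centeredeqsg}--\eqref{eq:centeredeqsV} to $\mathcal A=0$, then substitute the Frobenius-pairing formulas $\operatorname{ad}^*_\Omega M=[\Omega,M]$, \eqref{eq:matrix-heart-so3} and \eqref{eq:matrix-diamond-so3}. Your explicit sign checks go beyond the paper's one-line substitution argument, which leaves them to the reader: the skew projection in the diamond term and the identity $-\Omega_\mu\heartsuit N^\mu=[\Omega_\mu,N^\mu]$.
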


\begin{proof}
Substitution of \eqref{eq:matrix-heart-so3}-
\eqref{eq:matrix-diamond-so3} into
\eqref{eq:centeredeqsg}-\eqref{eq:centeredeqsV}, with
$\mathcal A=0$, gives \eqref{eq:matrix-microstructure-EP-g}-
\eqref{eq:matrix-microstructure-EP-V}.
\end{proof}

Equations \eqref{eq:matrix-microstructure-EP-g}-
\eqref{eq:matrix-microstructure-EP-V} are the dynamical equations.  A pair
$(\Omega,W)$ represents the reduced jet of an $H_{\rm mat}$-valued field only
if it also satisfies the reconstruction equations.  With the right
logarithmic convention \eqref{eq:matrix-right-reduced-jet}, these are the
Maurer-Cartan equations
\begin{align}
& 0=\mathcal F^{\Omega}_{\mu\nu}:=\partial_\mu\Omega_\nu-\partial_\nu\Omega_\mu
-[\Omega_\mu,\Omega_\nu],
\label{eq:matrix-reconstruction-g}\\
& 0=\mathcal F^{W}_{\mu\nu}:=\partial_\mu W_\nu-\partial_\nu W_\mu
-\rho_*(\Omega_\mu)W_\nu
+\rho_*(\Omega_\nu)W_\mu.
\label{eq:matrix-reconstruction-V}
\end{align}
On a simply connected domain they are also sufficient for local
reconstruction, up to the choice of one value of $s$.

The relation with liquid-crystal variables is most transparent in the
semidirect coordinate $B$ of
\eqref{eq:matrix-centered-semidir-isomorphism}.  The $\mathrm{SO}(3)$-module
$\operatorname{Mat}(3)$ has the invariant decomposition
\begin{equation}
\operatorname{Mat}(3)
=\mathbb R I\oplus\mathfrak{so}(3)\oplus\operatorname{Sym}_0(3).
\label{eq:matrix-so3-decomposition}
\end{equation}
Accordingly, write
\begin{equation}
B=\beta I+\widehat n+Q,
\qquad
\beta=\frac13\operatorname{tr}B,\qquad
\widehat n=\operatorname{skew}B,\qquad
Q=\operatorname{sym}B-\frac13\operatorname{tr}(B)I.
\label{eq:matrix-order-parameters}
\end{equation}
Under $\rho(R)B=RBR^{-1}$, the three components transform as
$\beta\mapsto\beta,
\, n\mapsto Rn,
\, Q\mapsto RQR^{\mathsf T}$. Thus $n$ transforms as a vector and $Q$ as a symmetric traceless tensor.
For $n\ne0$, the line $[n]=[n/\|n\|]\in\mathbb{RP}^2$
defines a projective director.
However, invariance under $n\mapsto-n$ alone does not remove the
amplitude $\|n\|$.
A director-only theory requires fixing the norm and identifying the
two lifts, or imposing the corresponding scale independence on the
constitutive law and its jet dependence.
These additional conditions are not part of the matrix model considered here.

If $\Omega_\mu=\widehat\omega_\mu$, equation
\eqref{eq:matrix-reduced-components} decomposes exactly as
\begin{equation}
W_\mu
=(\partial_\mu\beta)I
+\widehat{\mathcal D_\mu n}
+\mathcal D_\mu Q,
\qquad
\mathcal D_\mu n:=\partial_\mu n-\omega_\mu\times n,
\qquad
\mathcal D_\mu Q:=\partial_\mu Q-[\Omega_\mu,Q].
\label{eq:matrix-order-covariant-gradients}
\end{equation} 

Consequently, if $\mathbb K$ preserves the decomposition
\eqref{eq:matrix-so3-decomposition}, the matrix part of
\eqref{eq:matrix-microstructure-lagrangian} separates into scalar, vector, and
alignment-tensor channels.  In the isotropic case it has the form
\begin{equation}
\ell_{\rm int}
=\frac12\gamma^{\mu\nu}
\left(
\kappa_0\,\partial_\mu\beta\,\partial_\nu\beta
+\kappa_1\,\mathcal D_\mu n\cdot\mathcal D_\nu n
+\kappa_2\,\operatorname{tr}
\big((\mathcal D_\mu Q)^{\mathsf T}\mathcal D_\nu Q\big)
\right).
\label{eq:matrix-order-sector-lagrangian}
\end{equation}
The constants $\kappa_i$ include the normalization factors of the
Frobenius pairing. More explicitly, if $K$ acts by the scalars
$\lambda_0$, $\lambda_1$, and $\lambda_2$ on
$\mathbb RI$, $\mathfrak{so}(3)$, and $\operatorname{Sym}_0(3)$,
respectively, then
$\kappa_0=3\lambda_0$, $\kappa_1=2\lambda_1$,
and $\kappa_2=\lambda_2$. The last two terms are quadratic covariant-gradient densities for
an unconstrained vector and a symmetric traceless matrix.
They provide a comparison with director and alignment-tensor gradient
energies, without imposing the order-parameter constraints of those theories.
Indices on covariant derivatives are raised with $\gamma$;
the divergence includes the metric volume density and the base
Levi--Civita connection. For constant $\kappa_i$, the projected
equations are
\begin{equation}
\operatorname{div}(\kappa_0(d\beta)^\sharp)=0,
\qquad
\mathcal D_\mu(\kappa_1\mathcal D^\mu n)=0,
\qquad
\mathcal D_\mu(\kappa_2\mathcal D^\mu Q)=0.
\label{eq:matrix-order-sector-equations}
\end{equation}
These equations describe the covariant dynamics of the unconstrained
vector and tensor channels. The general rotational equation contains
an internal torque, whose value depends on the constitutive map;
in particular, this torque vanishes for the isotropic constitutive
choice just described.

\begin{remark}
\label{rem:scope-liquid-crystal-comparison}
Equation \eqref{eq:matrix-order-covariant-gradients} is an exact representation
statement and \eqref{eq:matrix-order-sector-lagrangian} gives reduced
densities covered by the centered theory.  It does not, by itself, identify the
full model with Ericksen-Leslie, Eringen, or Oseen-Frank dynamics.  The unit
and projective constraints on a director and the bulk potential of a
$Q$-tensor depend on the order parameter itself, not only on the reduced jet
$(\Omega,W)$.  They therefore require the broken-symmetry or advected-parameter
extensions developed in
\cite{GBRa2009,GBRa2012,GayBalmazTronci2010,
GayBalmazRatiuTronci2013}.  In particular, the column map $A\mapsto Ad_0$ is
not equivariant for the effective representation $B\mapsto RBR^{-1}$, and
$W_\mu d_0$ is not the covariant derivative of $Ad_0$.  The invariant
decomposition \eqref{eq:matrix-so3-decomposition}, rather than a fixed column
of $A$, is what yields the mathematically consistent vector and alignment
sectors.
\end{remark}\medskip

\begin{remark}
\label{rem:advantages-matrix-formulation}
The three summands of $\operatorname{Mat}(3)$ under conjugation are
pairwise non-isomorphic irreducible real $\operatorname{SO}(3)$-modules,
each occurring with multiplicity one. Consequently, an
$\operatorname{SO}(3)$-equivariant linear constitutive map preserves
each summand and acts on it by a scalar. Such a map cannot couple
distinct summands. Off-diagonal constitutive couplings require
additional anisotropic data or an enlarged representation with
equivalent summands.

There is also a useful dynamical consequence. If $K$ is self-adjoint
and equivariant, invariance of its quadratic form gives
$\langle KW,[\Sigma,W]\rangle=0$ for every
$\Sigma\in\mathfrak{so}(3)$.
Polarization and contraction with the symmetric tensor
$\gamma^{\mu\nu}$ therefore yield
$W_\mu\diamondsuit N^\mu=0$.
Thus the internal torque in the rotational equation vanishes for
this constitutive choice. If $J$ is isotropic as well, the contracted
term $[\Omega_\mu,M^\mu]$ also vanishes.
The general reduced equations retain these terms because anisotropic
constitutive maps need not have these cancellations.

The representation decomposition and covariant derivatives remain
useful in both cases. Unit-director constraints, projective
identification, bulk potentials, and anisotropic Oseen--Frank moduli
are additional constitutive or broken-symmetry data; they are not
imposed by the centered product itself.
\end{remark}\medskip

Finally, the specialization of the reduced multisymplectic current is also
completely explicit.  For first variations $\mathsf U,\mathsf V$ with reduced
parameters
$\eta_{\mathsf U}=(\Sigma_{\mathsf U},Z_{\mathsf U})$ and
$\eta_{\mathsf V}=(\Sigma_{\mathsf V},Z_{\mathsf V})$, Corollary~
\ref{cor:centered-MS-current} gives
\begin{align}
\omega^\mu_{\rm mat}(\mathsf U,\mathsf V)
={}&
\langle\delta_{\mathsf V}M^\mu,\Sigma_{\mathsf U}\rangle
-\langle\delta_{\mathsf U}M^\mu,\Sigma_{\mathsf V}\rangle
+\langle\delta_{\mathsf V}N^\mu,Z_{\mathsf U}\rangle
-\langle\delta_{\mathsf U}N^\mu,Z_{\mathsf V}\rangle
\nonumber\\
&+\langle\operatorname{ad}^*_{\Sigma_{\mathsf U}}M^\mu,
\Sigma_{\mathsf V}\rangle
-\langle\Sigma_{\mathsf U}\heartsuit N^\mu,Z_{\mathsf V}\rangle
-\langle Z_{\mathsf U}\diamondsuit N^\mu,
\Sigma_{\mathsf V}\rangle.
\label{eq:matrix-multisymplectic-current}
\end{align}
The last two terms are the centered coupling contributions to the
symplectic current. For first variations satisfying the linearized
field equations, the current is closed. In coordinates with
$\mathrm{vol}_X=\upsilon(x)\,dx^0\wedge\cdots\wedge dx^n$,
this reads $\partial_\mu(\upsilon\,\omega_{\mathrm{mat}}^\mu)=0$;
in volume coordinates it reduces to
$\partial_\mu\omega_{\mathrm{mat}}^\mu=0$.
The corresponding boundary-flux identity is the one established
in Section~\ref{sec:multisymplecticformula}.

The Noether and energy-momentum constructions of Section~\ref{sec:noether}
specialize in the same way. For a fixed generator
$\chi=(\chi^g,\chi^V)$, set
$\Sigma_\chi=R\chi^gR^{-1}$ and
$Z_\chi=R\chi^VR^{-1}+[B,\Sigma_\chi]$.
The associated current is
$J_\chi=(\langle M^\mu,\Sigma_\chi\rangle
+\langle N^\mu,Z_\chi\rangle)d^n x_\mu$,
where $d^n x_\mu=\iota_{\partial_\mu}\mathrm{vol}_X$,
and satisfies $dJ_\chi=0$.
Its projected sectorial currents have exchange density
$-\langle W_\mu\diamondsuit N^\mu,\Sigma_\chi\rangle$,
which vanishes for the equivariant constitutive choice discussed above.
The energy--momentum tensor is
$T^\mu{}_{\nu}=\langle M^\mu,\Omega_\nu\rangle
+\langle N^\mu,W_\nu\rangle-\delta^\mu{}_{\nu}\ell$.
For constant metric coefficients and spatially and temporally
homogeneous constitutive maps, translations give
$\partial_\mu T^\mu{}_{\nu}=0$.

\section{Centered Reduction for a First-order Theory with Second-order Jet Group
Symmetry}
\label{sec:higher-order-gauge}

The group of second-order jets is a basic geometric example of a centered
semidirect product: its two actions are forced by the two terms in the
second-order chain rule.  It was the main geometric motivation in
\cite{CoJa2015}, while the underlying jet geometry is developed systematically
in \cite{KMS1993,Saunders1989}.  Three notions must be kept separate: the
second-order jet used as an internal structure group, the first jet $j^1s$ of
the dynamical field, and a higher-order variational Lagrangian on $J^kP$.
The construction below uses the first two and remains entirely within the
first-order theory of Section~\ref{sec:covariantreduction}; the third belongs to the higher-order
Euler-Poincar\'e theory of \cite{CastrillonRodriguezHigherOrder2026}.

\subsection{Geometric Setting}\label{sec5.1}

Let $E=\mathbb R^r$ and set $V_2:=S^2E^*\otimes E$.
Thus an element $T\in V_2$ is a symmetric bilinear map
$T:E\times E\to E$.  Denote by $G_r^2$ the group of second-order jets at the
origin of local diffeomorphisms $\varphi:E\to E$ satisfying $\varphi(0)=0$.
The map
\begin{equation}
G_r^2\simeq\operatorname{GL}(E)\times V_2,\qquad j_0^2\varphi\longmapsto
\big(g,T\big):=\big(D\varphi(0),D^2\varphi(0)\big)
\label{eq:jet2-coordinate-identification}
\end{equation}
defines a diffeomorphism of smooth manifolds.
For $g\in\operatorname{GL}(E)$ and $T\in V_2$, define
\begin{equation}
(g\cdot T)(u,v):=g\big(T(u,v)\big),
\qquad
(T\cdot g)(u,v):=T(gu,gv),
\label{eq:jet2-left-right-actions}
\end{equation}
for each $(u,v)\in E\times E$. Clearly, the previous expressions define a left and a right action, respectively, that commute.  Moreover, the chain rule gives
\begin{align}
D^2(\varphi\circ\psi)(0)(u,v)
={}&D\varphi(0)D^2\psi(0)(u,v)+D^2\varphi(0)
\big(D\psi(0)u,D\psi(0)v\big),
\label{eq:second-order-chain-rule}
\end{align}
or, equivalently,
\begin{equation}
(g_1,T_1)(g_2,T_2)
=\big(g_1g_2,~g_1\cdot T_2+T_1\cdot g_2\big),
\label{eq:jet2-centered-product}
\end{equation}
where we denoted $g_1=D\varphi(0)$, $g_2=D\psi(0)$, $T_1=D^2\varphi(0)$ and $T_2=D^2\psi(0)$. In addition, concatenation denotes composition, i.e., $g_1g_2=g_1\circ g_2$.
It follows that~\eqref{eq:jet2-coordinate-identification} defines an isomorphism of Lie groups,
\begin{equation}
G_r^2\simeq\operatorname{GL}(E)\Join V_2,
\label{eq:jet2-centered-identification}
\end{equation}
where the multiplication on $G_r^2$ is given by (the jet lift of) the composition of diffeomorphisms.

\begin{remark}
Following Proposition~\ref{prop:centered-standard-equivalence}, the associated ordinary semidirect-product representation is the natural
tensor representation
\begin{equation}
\rho(g)B:=(g\cdot B)\cdot g^{-1},
\qquad
(\rho(g)B)(u,v)=gB(g^{-1}u,g^{-1}v),
\label{eq:jet2-effective-representation}
\end{equation}
for each $(g,T)\in\operatorname{GL}(E)\Join V_2$ and $(u,v)\in E\times E$, where we denote $B:=T\cdot g^{-1}\in V_2$. The centered-to-semidirect isomorphism is thus
$(g,T)\mapsto\Phi(g,T)=(g,B)$.
\end{remark}\medskip

At the infinitesimal level, for $a\in\mathfrak{gl}(E)$ and $S\in V_2$,
\begin{equation}
a\cdot S=a\circ S,
\qquad
(S\cdot a)(u,v)=S(au,v)+S(u,av),
\qquad
\rho_*(a)S=a\cdot S-S\cdot a,
\label{eq:jet2-infinitesimal-actions}
\end{equation}
for each $(u,v)\in E\times E$.

Given a basis $\{e_i\mid 1\leq i\leq r\}$ of $E$, with dual basis $\{e^i\mid 1\leq i\leq r\}$ of $E^*$, the infinitesimal actions read
\begin{equation}
(a\cdot S)^i{}_{jk}=a^i{}_{\ell}S^{\ell}{}_{jk},
\qquad
(S\cdot a)^i{}_{jk}
=S^i{}_{\ell k}a^{\ell}{}_{j}
+S^i{}_{j\ell}a^{\ell}{}_{k}.
\label{eq:jet2-infinitesimal-actions-indices}
\end{equation}
where we write $a=a^i{}_j\,e^j\otimes e_i\in\mathfrak{gl}(E)$ and $S=S^i{}_{jk}\,e^j\otimes e^k\otimes e_i\in V_2$. Recall that we use the Einstein summation convention. Moreover, we retain the canonical dualities rather than identifying the vector spaces with their corresponding dual spaces. Hence, the dual pairings read
\[
\langle\Pi,S\rangle=\Pi_i{}^{jk}S^i{}_{jk},
\qquad
\langle P,a\rangle=P_i{}^j a^i{}_j,
\]
for each $\Pi=\Pi_i{}^{jk}\,e_j\otimes e_k\otimes e^i\in V_2^*= S^2 E\otimes E^*$ and $P=P_i{}^j e_j\otimes e^i\in\mathfrak{gl}(E)^*$. 
Using \eqref{eq:jet2-infinitesimal-actions-indices}, the heart~\eqref{eq:heart-def} and diamond~\eqref{eq:diamond-def} operators are given by
\begin{align}
& (a\heartsuit\Pi)_i{}^{jk}=a^{\ell}{}_{i}\Pi_{\ell}{}^{jk}
-\Pi_i{}^{\ell k}a^j{}_{\ell}
-\Pi_i{}^{j\ell}a^k{}_{\ell},
\label{eq:jet2-heart-indices}\\
& (S\diamondsuit\Pi)_a{}^b=2\Pi_i{}^{bk}S^i{}_{ak}
-\Pi_a{}^{jk}S^b{}_{jk}.
\label{eq:jet2-diamond-indices}
\end{align}
The factor two in \eqref{eq:jet2-diamond-indices} is a consequence of the
symmetry of the two covariant indices. Note that these equations do not require the choice of a metric on $E$.

Let $(X,\gamma)$ be an oriented, $(n+1)$-dimensional, semi-Riemannian manifold with volume form $\operatorname{vol}_X$, and consider the trivial principal $G_r^2$-bundle  $P=X\times G_r^2\to X$. A field is thus a smooth map $s=(g,T):X\to G_r^2\simeq\mathrm{GL}(E)\Join V_2$. 
We choose the zero reference connection on the trivial principal
bundle. The reduced field is then the right logarithmic derivative
$\Xi=(A,C):=(ds)s^{-1}\in
\Omega^1(X,\mathfrak{gl}(E)\Join V_2)$.

Given local coordinates $(x^\mu)_{1\leq\mu\leq n+1}$ on $X$, the components of the reduced section,
\begin{equation*}
\Xi_\mu=(A_\mu,C_\mu):=(\partial_\mu s)s^{-1}:X\to\mathfrak{gl}(E)\Join V_2,  
\end{equation*}
are given by
\begin{equation}
\begin{cases}
A_\mu
&=(\partial_\mu g)g^{-1},
\nonumber\\
C_\mu
&=(\partial_\mu T)\cdot g^{-1}-A_\mu\cdot B
=\partial_\mu B+B\cdot A_\mu-A_\mu\cdot B
=\partial_\mu B-\rho_*(A_\mu)B,    
\end{cases}
\label{eq:jet2-right-reduced-field}
\end{equation}
where we denoted $B:=T\cdot g^{-1}$. Thus $A_\mu$ transports the contravariant index of the quadratic jet and both
covariant indices enter through the right action in $C_\mu$.  This is the
field-theoretic form of the same chain-rule coupling as
\eqref{eq:second-order-chain-rule}.

\subsection{Euler-Poincar\'e Field Equations on the Second-order Jet Group}

Let $\{e_i\mid1\leq i\leq r\}$ be a basis of $E$ and $(x^\mu)_{1\leq\mu\leq n+1}$ be local coordinates on $X$, as above. Let us consider a reduced Lagrangian density,
\begin{equation}
\ell_{\rm jet}(A,C)
=\frac12\gamma^{\mu\nu}
\left(
\langle\mathbb I_1A_\mu,A_\nu\rangle
+\langle\mathbb I_2C_\mu,C_\nu\rangle
\right),\qquad(A,C):X\to T^*X\otimes(\mathfrak{gl}(E)\Join V_2),
\label{eq:jet2-reduced-lagrangian}
\end{equation}
where $\mathbb I_1:\mathfrak{gl}(E)\longrightarrow\mathfrak{gl}(E)^*$ and $\mathbb I_2:V_2\longrightarrow V_2^*$ are symmetric and nondegenerate linear maps. This defines a right
$G_r^2$-invariant first-order Lagrangian density as $\mathcal L(j^1s)=\ell_{\rm jet}((\mathrm ds)s^{-1})\,{\rm vol}_X$.

\begin{remark}
On a nontrivial principal bundle, a global quadratic constitutive
law is specified by a symmetric nondegenerate bundle morphism
$\mathbb I:\operatorname{Ad}(P)\to\operatorname{Ad}(P)^*$.
A local splitting identifies its components with the linear-jet and
quadratic-jet sectors, but mixed constitutive blocks may be present.
The block-diagonal choice $\mathbb I_1\oplus\mathbb I_2$ is an
additional constitutive assumption in the chosen splitting.
\end{remark}\medskip

Momenta $P=(P_A,P_C):X\to TX\otimes(\mathfrak{gl}(E)^*\Join V_2^*)$ locally read $P_A=P_A^\mu\partial_\mu$ and
$P_C=P_C^\mu\partial_\mu$ with
\begin{equation}
P_A^\mu=\gamma^{\mu\nu}\mathbb I_1A_\nu:X\to\mathfrak{gl}(E)^*,
\qquad
P_C^\mu=\gamma^{\mu\nu}\mathbb I_2C_\nu:X\to V_2^*.
\label{eq:jet2-momenta}
\end{equation}

Since $\gamma$, $\mathbb I_1$, and $\mathbb I_2$ are nondegenerate,
the reduced Legendre map is invertible, with
$A_\mu=\gamma_{\mu\nu}\mathbb I_1^{-1}P_A^\nu$ and
$C_\mu=\gamma_{\mu\nu}\mathbb I_2^{-1}P_C^\nu$.
The reduced Hamiltonian is
$h_{\mathrm{jet}}=\tfrac12\gamma_{\mu\nu}
(\langle P_A^\mu,\mathbb I_1^{-1}P_A^\nu\rangle
+\langle P_C^\mu,\mathbb I_2^{-1}P_C^\nu\rangle)$.
Thus the Hamiltonian equations of Section~\ref{sec:LiePoisson} give exactly the
Euler-Poincar\'e equations below, subject to the same reconstruction
conditions.

\begin{proposition}
\label{prop:jet2-field-equations}
The covariant Euler-Poincar\'e equations for the reduced Lagrangian density
\eqref{eq:jet2-reduced-lagrangian} are
\begin{align}
& \operatorname{div}(P_A)=\operatorname{ad}^*_{A}(P_A)
-C\diamondsuit P_C,
\label{eq:jet2-EP-linear}\\
& \operatorname{div}(P_C)=-A\heartsuit P_C.
\label{eq:jet2-EP-quadratic}
\end{align}
Locally, they read
\begin{align}
&(\operatorname{div}P_A)_a{}^b=(P_A^\mu)_a{}^j(A_\mu)^b{}_j
-(A_\mu)^i{}_a(P_A^\mu)_i{}^b-2(P_C^\mu)_i{}^{bk}(C_\mu)^i{}_{ak}
+(P_C^\mu)_a{}^{jk}(C_\mu)^b{}_{jk},
\label{eq:jet2-EP-linear-indices}\\
&(\operatorname{div}P_C)_i{}^{jk}=-(A_\mu)^{\ell}{}_{i}(P_C^\mu)_{\ell}{}^{jk}+(P_C^\mu)_i{}^{\ell k}(A_\mu)^j{}_{\ell}
+(P_C^\mu)_i{}^{j\ell}(A_\mu)^k{}_{\ell}.
\label{eq:jet2-EP-quadratic-indices}    
\end{align}
\end{proposition}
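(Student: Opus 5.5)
The plan is to specialize Theorem~\ref{thm:covariant-reduction-centered} to the jet group. We use the trivial bundle $P=X\times G_r^2$ and the zero reference connection, with the identification \eqref{eq:jet2-centered-identification} $G_r^2\simeq\operatorname{GL}(E)\Join V_2$. For $\mathcal A=0$, the remark following Theorem~\ref{thm:covariant-reduction-centered} already reduces \eqref{eq:centeredeqsg}--\eqref{eq:centeredeqsV} to
$\operatorname{div}(\mathsf P_{\mathfrak g})=\operatorname{ad}^*_{\xi^{\mathfrak g}}\mathsf P_{\mathfrak g}-\xi^V\diamondsuit\mathsf P_V$ and $\operatorname{div}(\mathsf P_V)=-\xi^{\mathfrak g}\heartsuit\mathsf P_V$. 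These hold with the spacetime index contracted. With $\xi=\Xi=(A,C)$ and $\mathsf P=(P_A,P_C)$ given by \eqref{eq:jet2-momenta}, they are exactly \eqref{eq:jet2-EP-linear}--\eqref{eq:jet2-EP-quadratic}. Computing the momenta is routine: differentiate the quadratic form \eqref{eq:jet2-reduced-lagrangian} using the symmetry of $\gamma^{\mu\nu}$, $\mathbb I_1$ and $\mathbb I_2$.

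For the index form, I would compute the three operators in the basis $\{e_i\}$ from their defining pairings.

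First, the coadjoint term. Since $\langle\operatorname{ad}^*_aP,b\rangle=-\langle P,[a,b]\rangle$, we have
$$-P_i{}^j(a^i{}_\ell b^\ell{}_j-b^i{}_\ell a^\ell{}_j).$$
Reading off the coefficient of $b^b{}_a$ gives
$$(\operatorname{ad}^*_aP)_a{}^b=P_a{}^j a^b{}_j-a^i{}_aP_i{}^b,$$
which matches the first two terms of \eqref{eq:jet2-EP-linear-indices}. I expect the index placement here to be the only delicate point, because the pairing $P_i{}^ja^i{}_j$ contracts the lower index of $P$ with the upper index of $a$.

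Second, the diamond and heart terms. I would invoke \eqref{eq:jet2-diamond-indices} for $C_\mu\diamondsuit P_C^\mu$ and \eqref{eq:jet2-heart-indices} for $A_\mu\heartsuit P_C^\mu$. Inserting these with the overall minus signs from \eqref{eq:jet2-EP-linear}--\eqref{eq:jet2-EP-quadratic} yields the remaining terms $-2(P_C^\mu)_i{}^{bk}(C_\mu)^i{}_{ak}+(P_C^\mu)_a{}^{jk}(C_\mu)^b{}_{jk}$ and the right-hand side of \eqref{eq:jet2-EP-quadratic-indices}.

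If one wants to verify \eqref{eq:jet2-diamond-indices} rather than assume it, use \eqref{eq:diamond-def} together with \eqref{eq:jet2-infinitesimal-actions-indices}:
$$\langle S\diamondsuit\Pi,a\rangle=\Pi_i{}^{jk}(S^i{}_{\ell k}a^\ell{}_j+S^i{}_{j\ell}a^\ell{}_k-a^i{}_\ell S^\ell{}_{jk}).$$
The two right-action terms coincide by the symmetry of $\Pi$ and $S$ in their paired indices, which produces the factor $2$. Relabeling to read off the coefficient of $a^b{}_a$ then gives the formula. The heart operator is handled analogously from \eqref{eq:heart-def}.

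The main obstacle is purely bookkeeping. One must keep the sign convention for $\operatorname{ad}^*$ straight, since it is opposite to that of \cite{CoJa2015}. One must also handle the symmetrization factor correctly, so that the final signs match \eqref{eq:jet2-EP-linear-indices}.
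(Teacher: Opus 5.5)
Your proposal is correct and follows the same route as the paper. Like the paper, you specialize \eqref{eq:centeredeqsg}--\eqref{eq:centeredeqsV} to $G=\operatorname{GL}(E)$, $V=V_2$, $\mathcal A=0$, and read off the index form from \eqref{eq:jet2-heart-indices}--\eqref{eq:jet2-diamond-indices} together with the local coadjoint formula $(\operatorname{ad}^*_AP)_a{}^b=P_a{}^jA^b{}_j-A^i{}_aP_i{}^b$. One cosmetic slip: with the pairing $P_i{}^ja^i{}_j$, the component $(\cdot)_a{}^b$ is the coefficient of $b^a{}_b$ (not $b^b{}_a$), and using $a,b$ both as Lie algebra elements and as indices obscures this, but your final formulas are right.
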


\begin{proof}
Equations \eqref{eq:jet2-EP-linear}-
\eqref{eq:jet2-EP-quadratic} are
\eqref{eq:centeredeqsg}-\eqref{eq:centeredeqsV} with
$G=\operatorname{GL}(E)$, $V=V_2$, and $\mathcal A=0$. The local expressions \eqref{eq:jet2-EP-linear-indices}-\eqref{eq:jet2-EP-quadratic-indices} readily follow
from \eqref{eq:jet2-heart-indices}-\eqref{eq:jet2-diamond-indices}, as well as the local expression for the coadjoint representation, $(\operatorname{ad}^*_A P)_a{}^b
=P_a{}^jA^b{}_j-A^i{}_aP_i{}^b$.
\end{proof}

As in the preceding example, the Euler-Poincar\'e equations must be accompanied
by the reconstruction condition~\eqref{eq:reconstructioncompatibility}. Following Remark~\ref{rem:reference-connection-sign}, locally it is given by~\eqref{eq:right-MC-sign-convention}. In coordinates, we can write
\begin{align}
& \mathcal F^A_{\mu\nu}:=\partial_\mu A_\nu-\partial_\nu A_\mu-[A_\mu,A_\nu]=0,
\label{eq:jet2-reconstruction-linear}\\
& \mathcal F^C_{\mu\nu}:=\partial_\mu C_\nu-\partial_\nu C_\mu
-\rho_*(A_\mu)C_\nu+\rho_*(A_\nu)C_\mu=0,
\label{eq:jet2-reconstruction-quadratic}
\end{align}
for each $1\leq\mu,\nu\leq n+1$. From~\eqref{eq:jet2-infinitesimal-actions} and~\eqref{eq:jet2-infinitesimal-actions-indices}, the second equation may be equivalently written as
\begin{align*}
\mathcal F^C_{\mu\nu}
={}&\partial_\mu C_\nu-\partial_\nu C_\mu
-A_\mu\cdot C_\nu-C_\mu\cdot A_\nu+A_\nu\cdot C_\mu+C_\nu\cdot A_\mu=0.
\end{align*}

Consequently, on a simply connected domain, a solution of
\eqref{eq:jet2-EP-linear}-\eqref{eq:jet2-EP-quadratic} satisfying
\eqref{eq:jet2-reconstruction-linear}-
\eqref{eq:jet2-reconstruction-quadratic} reconstructs an unreduced
$G_r^2$-valued field, uniquely after fixing one value of $s$.  This explicitly
realizes, for the second-order jet group, the reconstruction statement following
\eqref{eq:intrinsic-EP}.

\begin{remark}
\label{rem:jet-versus-variational-order}
The adjective ``second-order'' refers to the internal structure group $G_r^2$,
not to the variational order.  In particular,
\eqref{eq:jet2-reduced-lagrangian} is first order in $s$ and lies in the setting
of Theorem~\ref{thm:covariant-reduction-centered}.  A Lagrangian on $J^kP$,
$k\geq2$, has a different reduced configuration space and is governed by the
higher-order equations of \cite{CastrillonRodriguezHigherOrder2026}.
\end{remark}\medskip

\begin{remark}
\label{rem:jet2-not-yang-mills}
The reduced variable $\Xi=(A,C)$ should not be confused with an arbitrary
$G_r^2$-gauge potential.  In the present first-order reduction it is induced
by a section of the principal bundle and must satisfy
\eqref{eq:jet2-reconstruction-linear}-
\eqref{eq:jet2-reconstruction-quadratic}.  A Yang-Mills density depending on
the curvature of an independent connection is a Lagrangian on a jet of the
connection bundle; expressed in terms of $s$, it is of higher variational
order.  Moreover, a curvature-squared density vanishes identically on the
reconstructible locus \eqref{eq:jet2-reconstruction-linear}-
\eqref{eq:jet2-reconstruction-quadratic}.  It is therefore not an example of
the first-order theory established in this paper.
\end{remark}\medskip

The relevance of $G_r^2$ beyond this intrinsic example is well established:
jet data underlie locally affine descriptions in diffeomorphic image
registration and the hierarchy of jetlet models for regularized ideal fluids
\cite{SommerNielsenDarknerPennec2013,
CotterElderingHolmJacobsMeier2016}.  Passing from the internal jet-group field
considered here to a spatially holonomic second gradient of a deformation
would require additional contact or Spencer compatibility conditions; it
should not be inferred from the group-valued field alone.  The present model
isolates exactly the contribution supplied by centered covariant reduction:
the chain-rule group law, its heart and diamond couplings, the corresponding
Euler-Poincar\'e equations, and their multisymplectic conservation law.

\subsection{A Nonlinear $G_1^2$ Wave Reduction and an Exact Conversion Front}
\label{subsec:jet2-conversion-wave}

We now exhibit a nontrivial dynamical consequence of the centered coupling.
The internal jet dimension is reduced to one, while the base is the
$1+1$-dimensional Minkowski space
$X=\mathbb R^{1,1}$ with coordinates $(t,x)$ and metric
$\gamma=\operatorname{diag}(-1,1)$.  In this case
\[
G_1^2\simeq \mathbb R^\times\Join\mathbb R,
\qquad
g\cdot S=gS,
\qquad
S\cdot g=g^2S.
\]
Hence, for $a,S,\Pi\in\mathbb R$,
\begin{equation}
\rho_*(a)S=-aS,
\qquad
a\heartsuit\Pi=-a\Pi,
\qquad
S\diamondsuit\Pi=S\Pi.
\label{eq:jet1-actions}
\end{equation}
Write the reduced one-form as
\[
\Xi=A+C,
\qquad
A=A_t\,{\rm d}t+A_x\,{\rm d}x,
\qquad
C=C_t\,{\rm d}t+C_x\,{\rm d}x,
\]
where $A$ and $C$ take values in the linear- and quadratic-jet sectors,
respectively.  For constants $\kappa_1,\kappa_2>0$, consider
\begin{equation}
\ell_{1+1}(A,C)
=\frac{\kappa_1}{2}\big(-A_t^2+A_x^2\big)
+\frac{\kappa_2}{2}\big(-C_t^2+C_x^2\big).
\label{eq:jet1-lagrangian}
\end{equation}
Equations \eqref{eq:jet2-EP-linear}-
\eqref{eq:jet2-reconstruction-quadratic} reduce to
\begin{align}
\partial_tA_x-\partial_xA_t&=0,
\label{eq:jet1-MC-A}\\
\kappa_1\big(\partial_tA_t-\partial_xA_x\big)
&=-\kappa_2\big(C_t^2-C_x^2\big),
\label{eq:jet1-EP-A}\\
\partial_tC_x-\partial_xC_t
&=-A_tC_x+A_xC_t,
\label{eq:jet1-MC-C}\\
\partial_tC_t-\partial_xC_x
&=A_tC_t-A_xC_x.
\label{eq:jet1-EP-C}
\end{align}
The principal part consists of two first-order wave systems with
characteristic speeds $\pm1$; in particular,
\eqref{eq:jet1-MC-A}-\eqref{eq:jet1-EP-C} is a semilinear
symmetric-hyperbolic system.  Notice especially the sign on the right-hand
side of \eqref{eq:jet1-EP-C}, which is fixed by
$a\heartsuit\Pi=-a\Pi$ and by the Lorentzian raising of the spacetime index.

On the zero-integration-constant branch specified below, the
traveling-wave system admits two first integrals.

\begin{proposition}
\label{prop:jet1-exact-conversion-wave}
Let $|c|<1$ and set $z=x-ct$.  On the zero-integration-constant branch
$A_t=-cA_x$, define
\begin{equation}
a(z):=A_x(z),
\qquad
U(z):=C_t(z)+cC_x(z),
\qquad
V(z):=cC_t(z)+C_x(z),
\qquad
K:=\frac{\kappa_2}{\kappa_1(1-c^2)^2}.
\label{eq:jet1-traveling-variables}
\end{equation}
Then \eqref{eq:jet1-MC-A}-\eqref{eq:jet1-EP-C} is equivalent to
\begin{equation}
a'=K(U^2-V^2),
\qquad
U'=-aU,
\qquad
V'=aV.
\label{eq:jet1-traveling-system}
\end{equation}
This system has the first integrals
\begin{equation}
UV=q_*,
\qquad
a^2+K(U^2+V^2)=h,
\label{eq:jet1-traveling-integrals}
\end{equation}
where $q_*$ and $h$ are constants.  On the invariant branch $U=0$, for every
$k>0$ and $z_0\in\mathbb R$ it admits the exact solution
\begin{align}
A_x(z)&=-k\tanh\theta,
&A_t(z)&=ck\tanh\theta,
\label{eq:jet1-exact-A}\\
C_x(z)&=k\sqrt{\frac{\kappa_1}{\kappa_2}}\,\operatorname{sech}\theta,
&C_t(z)&=-ck\sqrt{\frac{\kappa_1}{\kappa_2}}\,
\operatorname{sech}\theta,
\label{eq:jet1-exact-C}
\end{align}
where $\theta=k(z-z_0)$.  It is the reduced jet of an explicit
$G_1^2$-valued traveling field.  On the component $g>0$, one such
reconstruction is
\begin{equation}
g(z)=g_0\operatorname{sech}\theta,
\qquad
B(z)=\sqrt{\frac{\kappa_1}{\kappa_2}}\sinh\theta,
\qquad
T(z)=g(z)^2B(z)
=g_0^2\sqrt{\frac{\kappa_1}{\kappa_2}}
\operatorname{sech}\theta\tanh\theta,
\label{eq:jet1-exact-reconstruction}
\end{equation}
where $g_0>0$ and $B=T\cdot g^{-1}=Tg^{-2}$.
\end{proposition}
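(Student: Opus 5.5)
The plan is to substitute a traveling-wave ansatz into \eqref{eq:jet1-MC-A}--\eqref{eq:jet1-EP-C}. Every field is taken to be a function of $z=x-ct$ alone, so that $\partial_t=-c\,\frac{d}{dz}$ and $\partial_x=\frac{d}{dz}$.

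\textbf{Step 1: the branch and the reduced system.}
Equation \eqref{eq:jet1-MC-A} becomes $-(A_t+cA_x)'=0$. Hence $A_t+cA_x$ is constant, and the zero-integration-constant branch is exactly $A_t=-cA_x$.

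On this branch, \eqref{eq:jet1-EP-A} has left-hand side $\kappa_1(c^2-1)a'$. I will use the algebraic identity
\[
U^2-V^2=(1-c^2)\bigl(C_t^2-C_x^2\bigr).
\]
With it, \eqref{eq:jet1-EP-A} becomes $a'=K(U^2-V^2)$, which is where $K=\kappa_2/(\kappa_1(1-c^2)^2)$ enters.

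For the quadratic sector, \eqref{eq:jet1-MC-C} has left-hand side $-(C_t+cC_x)'=-U'$. Its right-hand side is $-A_tC_x+A_xC_t=a(cC_x+C_t)=aU$, so $U'=-aU$. Likewise, \eqref{eq:jet1-EP-C} has left-hand side $-V'$ and right-hand side $A_tC_t-A_xC_x=-aV$, so $V'=aV$.

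The converse direction holds because $(C_t,C_x)\mapsto(U,V)$ is linear with determinant $1-c^2\neq0$ when $|c|<1$. Every step above is therefore reversible, which gives equivalence with \eqref{eq:jet1-traveling-system}.

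\textbf{Step 2: first integrals and the invariant branch.}
Direct differentiation gives the first integrals:
\[
(UV)'=-aUV+aUV=0,
\qquad
\bigl(a^2+K(U^2+V^2)\bigr)'=2aK(U^2-V^2)+K(-2aU^2+2aV^2)=0.
\]
The set $U=0$ is invariant because $U'=-aU$. On it the system reduces to $a'=-KV^2$ and $V'=aV$.

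I will check the proposed solution $a=-k\tanh\theta$, $V=(k/\sqrt K)\operatorname{sech}\theta$. Indeed $a'=-k^2\operatorname{sech}^2\theta=-KV^2$, and $V'=-(k^2/\sqrt K)\operatorname{sech}\theta\tanh\theta=aV$. This is the heteroclinic--pulse orbit on the level set $h=k^2$, $q_*=0$.

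To recover the original components, I invert the linear change of variables. The condition $U=0$ gives $C_t=-cC_x$, and then $V=(1-c^2)C_x$. Hence
\[
C_x=\frac{k}{(1-c^2)\sqrt K}\operatorname{sech}\theta=k\sqrt{\kappa_1/\kappa_2}\,\operatorname{sech}\theta,
\]
and $A_t=-cA_x$ gives \eqref{eq:jet1-exact-A}--\eqref{eq:jet1-exact-C}.

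\textbf{Step 3: reconstruction.}
I will verify \eqref{eq:jet1-exact-reconstruction} directly against \eqref{eq:jet2-right-reduced-field} rather than integrating the flatness equations.

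For the linear sector, $A_\mu=\partial_\mu g/g$. With $g=g_0\operatorname{sech}\theta$ this gives $A_x=-k\tanh\theta$ and $A_t=-c A_x$.

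For the quadratic sector, by \eqref{eq:jet1-actions} we have $C_\mu=\partial_\mu B+A_\mu B$. Using $B=\sqrt{\kappa_1/\kappa_2}\sinh\theta$,
\[
C_x=\sqrt{\kappa_1/\kappa_2}\,k\bigl(\cosh\theta-\tanh\theta\sinh\theta\bigr)=k\sqrt{\kappa_1/\kappa_2}\,\operatorname{sech}\theta.
\]
Similarly $C_t=-cC_x$, using $A_t=-cA_x$ and $\partial_t=-c\partial_z$. Finally, $T=B\cdot g=g^2B$ from $S\cdot g=g^2S$.

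Since the reduced jet is $(ds)s^{-1}$ of an explicit field, the Maurer--Cartan conditions hold automatically. The only point needing care is the sign conventions in \eqref{eq:jet1-actions} and the Lorentzian raising of indices in Step 1. I expect that bookkeeping, specifically obtaining the factor $(1-c^2)^2$ in $K$ through the $U^2-V^2$ identity, to be the main obstacle; the remaining steps are routine.
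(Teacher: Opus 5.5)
Your proposal is correct and follows the paper's proof essentially step for step. Both arguments proceed by traveling-wave substitution, the identity $C_t^2-C_x^2=(U^2-V^2)/(1-c^2)$, direct differentiation for the first integrals, verification on $U=0$, and reconstruction via $C_x=B'+A_xB$. Your remark that $(C_t,C_x)\mapsto(U,V)$ has determinant $1-c^2\neq0$ usefully makes explicit the converse direction of the equivalence, which the paper leaves implicit.
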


\begin{proof}
Equation \eqref{eq:jet1-MC-A} gives
$(A_t+cA_x)'=0$, and we take the constant to vanish.  Substitution into
\eqref{eq:jet1-MC-C} and \eqref{eq:jet1-EP-C} gives, respectively,
$U'=-aU$ and $V'=aV$.  Moreover,
\[
C_t^2-C_x^2=\frac{U^2-V^2}{1-c^2},
\]
so \eqref{eq:jet1-EP-A} yields the first equation in
\eqref{eq:jet1-traveling-system}.  Direct differentiation gives
$(UV)'=0$ and
\[
\frac{{\rm d}}{{\rm d}z}
\big(a^2+K(U^2+V^2)\big)=0,
\]
which proves \eqref{eq:jet1-traveling-integrals}.  On $U=0$, equations
\eqref{eq:jet1-exact-A}-\eqref{eq:jet1-exact-C} correspond to
$a=-k\tanh\theta$ and
$V=kK^{-1/2}\operatorname{sech}\theta$, and hence solve
\eqref{eq:jet1-traveling-system}.  Finally, $A_x=g'g^{-1}$, and the reconstruction formula for $C_\mu$
in Subsection~\ref{sec5.1}, together with $\rho_*(a)B=-aB$, gives
\[
C_x=B'+A_xB.
\]
The functions in \eqref{eq:jet1-exact-reconstruction} give precisely
\eqref{eq:jet1-exact-A}-\eqref{eq:jet1-exact-C}; their $t$-components are
$-c$ times their $x$-components because the reconstructed field depends only
on $z$.
\end{proof}

The solution also makes the sector-to-sector exchange quantitative.
We use the stress-energy convention
$T_{\mu\nu}=2\,\partial\ell_{1+1}/\partial\gamma^{\mu\nu}
-\gamma_{\mu\nu}\ell_{1+1}$.
This is the tensor obtained by lowering the first index of the
energy-momentum tensor in Subsection~\ref{sec:noether}.
Its linear-jet and quadratic-jet contributions are
\begin{equation}
\mathcal T^A_{\mu\nu}
=\kappa_1\left(A_\mu A_\nu
-\frac12\gamma_{\mu\nu}A_\alpha A^\alpha\right),
\qquad
\mathcal T^C_{\mu\nu}
=\kappa_2\left(C_\mu C_\nu
-\frac12\gamma_{\mu\nu}C_\alpha C^\alpha\right).
\label{eq:jet1-split-stress-energy}
\end{equation}
On every solution one has
\begin{equation}
\partial^\mu\mathcal T^A_{\mu\nu}=\mathcal Q_\nu,
\qquad
\partial^\mu\mathcal T^C_{\mu\nu}=-\mathcal Q_\nu,
\qquad
\mathcal Q_\nu
=\kappa_2\big(C_t^2-C_x^2\big)A_\nu.
\label{eq:jet1-sector-exchange}
\end{equation}
Indeed, equations \eqref{eq:jet1-MC-A}-\eqref{eq:jet1-EP-C}
are equivalently
\[
\partial_\mu A^\mu
=\frac{\kappa_2}{\kappa_1}(C_t^2-C_x^2),
\qquad
\partial_\mu C^\mu=A_\mu C^\mu,
\qquad
\partial_\mu C_\nu-\partial_\nu C_\mu
=-A_\mu C_\nu+A_\nu C_\mu,
\]
and direct differentiation of \eqref{eq:jet1-split-stress-energy} gives
\eqref{eq:jet1-sector-exchange}.
Thus the two sectors generally exchange energy and momentum,
while their sum satisfies a conservation law.  For
\eqref{eq:jet1-exact-A}-\eqref{eq:jet1-exact-C}, the sector energy densities
and off-diagonal stresses are
\begin{align}
e_A&=\frac{\kappa_1k^2}{2}(1+c^2)\tanh^2\theta,
&e_C&=\frac{\kappa_1k^2}{2}(1+c^2)\operatorname{sech}^2\theta,
\label{eq:jet1-sector-energies}\\
(\mathcal T^A)_{tx}
&=-c\kappa_1k^2\tanh^2\theta,
&(\mathcal T^C)_{tx}
&=-c\kappa_1k^2\operatorname{sech}^2\theta.
\label{eq:jet1-sector-momenta}
\end{align}
Consequently,
\begin{equation}
e_A+e_C=\frac{\kappa_1k^2}{2}(1+c^2),
\qquad
\mathcal T_{tx}=-c\kappa_1k^2,
\label{eq:jet1-total-stress-energy}
\end{equation}
pointwise along the wave.  As $|z-z_0|\to\infty$, the quadratic-jet energy tends to zero
and the linear-jet energy approaches the total energy density,
whereas at $z=z_0$ the energy lies entirely in the quadratic-jet
sector. The profile is therefore an exact traveling conversion front: a
heteroclinic $\tanh$ profile in the linear jet is accompanied by a localized
$\operatorname{sech}$ pulse in the quadratic jet, while total energy and
momentum remain conserved.  We deliberately do not call it a soliton, since
no orbital-stability or scattering property is asserted.  Moreover, the
solution has nonzero asymptotic background energy and is therefore not a
finite-energy excitation of the vacuum on the whole line; its content is the
localized redistribution of a fixed total stress-energy between the two jet
sectors.

The sectorial decomposition is tied to the chosen linear-jet and
quadratic-jet variables and to the constitutive quadratic forms.
For $c\ne0$, a fixed spatial point experiences the reversible
sequence of energy redistribution from the linear sector to the
quadratic sector and back as the profile passes.
Both spatial asymptotic states have the same sectorial energies;
there is no permanent net conversion left behind the front.
For $c=0$, the same redistribution is stationary.

\begin{remark}
\label{rem:jet1-hyperbolic-plane}
On the component $g>0$, introduce $\phi:=\log g$ and $q:=gB=\frac{T}{g}$. Then $A=\mathrm d\phi$ and
$C=e^{-\phi}\mathrm dq$, so the unreduced density corresponding to
\eqref{eq:jet1-lagrangian} is the wave-map density for the target metric
\begin{equation}
h=\kappa_1\,{\rm d}\phi^2
+\kappa_2 e^{-2\phi}\,{\rm d}q^2,
\label{eq:jet1-hyperbolic-metric}
\end{equation}
which has constant Gaussian curvature $-1/\kappa_1$.  For
\eqref{eq:jet1-exact-reconstruction}, the rescaled coordinate
$\widetilde q:=\sqrt{\kappa_2/\kappa_1}\,q$ satisfies
\[
e^{2\phi}+\widetilde q^{\,2}=g_0^2.
\]
Thus the reconstructed profile traces a hyperbolic geodesic semicircle.  The
traveling solution is therefore a geodesic wave in intrinsic target
coordinates; the centered splitting resolves its constant total
stress-energy into the complementary linear- and quadratic-jet contributions
displayed in \eqref{eq:jet1-sector-energies}-
\eqref{eq:jet1-total-stress-energy}.
\end{remark}\medskip

The internal Noether currents can also be written explicitly.
In the coordinates $(\phi,q)$, the right action of $(a,b)$
in the identity component of the jet group is
$(\phi,q)\mapsto(\phi+\log a,\,aq+b/a)$, with $a>0$.
Its infinitesimal generators may be chosen as
$K_1=\partial_q$ and $K_2=\partial_\phi+q\partial_q$.
The construction in Subsection~\ref{sec:noether} therefore gives
\begin{equation}
\begin{aligned}
J_1^\mu&=\kappa_2 e^{-2\phi}\partial^\mu q,\\
J_2^\mu&=\kappa_1\partial^\mu\phi
+\kappa_2q e^{-2\phi}\partial^\mu q,
\qquad \partial_\mu J_a^\mu=0,\quad a=1,2.
\end{aligned}
\label{eq:jet-internal-noether-currents}
\end{equation}
Here $\partial^\mu=\gamma^{\mu\nu}\partial_\nu$.
Conservation of the corresponding integrated quantities on a finite
spatial interval includes the boundary flux; it must not be inferred
without checking the boundary conditions.

The reduced multisymplectic current is obtained from
\eqref{eq:centered-MS-current} by the substitutions
\[
(\mathsf P_{\mathfrak g},\mathsf P_V)=(P_A,P_C),
\qquad
(\eta^{\mathfrak g},\eta^V)=(\Sigma,Z).
\]
The terms specific to the quadratic-jet sector are
\begin{equation}
-\langle\Sigma_{\mathsf U}\heartsuit P_C^\mu,
Z_{\mathsf V}\rangle
-\langle Z_{\mathsf U}\diamondsuit P_C^\mu,
\Sigma_{\mathsf V}\rangle,
\label{eq:jet2-multisymplectic-mixed-terms}
\end{equation}
whose index structure is given by
\eqref{eq:jet2-heart-indices}-\eqref{eq:jet2-diamond-indices}.  Hence the
two ways in which a linear jet acts on a quadratic jet--on its output and on
its two inputs--appear simultaneously in both the field equations and the
conserved multisymplectic flux.  For first variations satisfying the linearized equations,
the current $\omega_{\mathrm{jet}}
=\omega_{\mathrm{jet}}^\mu d^n x_\mu$ is closed.
On the Minkowski base considered here, this gives the local
conservation law $\partial_\mu\omega_{\mathrm{jet}}^\mu=0$.

For $r=1$, the complete current can be written explicitly.
For two first variations $U$ and $W$, with reduced parameters
$(\Sigma_U,Z_U)$ and $(\Sigma_W,Z_W)$, it is
\begin{equation*}
\omega_{\mathrm{jet}}^\mu(U,W)
={}\kappa_1\bigl(
(\delta_W A^\mu)\Sigma_U
-(\delta_U A^\mu)\Sigma_W
\bigr)+\kappa_2\bigl(
(\delta_W C^\mu)Z_U
-(\delta_U C^\mu)Z_W
\bigr)+\kappa_2 C^\mu
\bigl(\Sigma_U Z_W-Z_U\Sigma_W\bigr).
\end{equation*}
Here the reduced variations satisfy
$\delta_U A_\mu=\partial_\mu\Sigma_U$ and
$\delta_U C_\mu=\partial_\mu Z_U+A_\mu Z_U-C_\mu\Sigma_U$,
with analogous formulas for $W$.
This expression follows from the reduced multisymplectic
current in Subsection~\ref{sec:multisymplecticformula}.
Its conservation concerns pairs of solutions of the
linearized equations and is distinct from the Noether
and stress-energy conservation laws.

\paragraph{Relation with geodesic wave maps.}
The wave-map interpretation places the reconstructed front
within the geometric framework of \cite{ShatahStruwe1998}.
Related stability results for totally geodesic wave maps
were obtained by Abbrescia and Chen~\cite{AbbresciaChen2021} under different
dimensional and structural assumptions; those results
are not invoked here.
In the present example, the wave-map description provides
coordinates adapted to the reconstructed front, whereas
the centered reduction identifies its linear- and
quadratic-jet contributions and their exchange.
These adapted coordinates allow us to describe the
linearized perturbations and prescribe compatible
initial data for the reduced evolution considered below.

Let $(s,r)$ be Fermi coordinates about the complete geodesic
traced by the reconstructed front. In the remainder of this subsection, $s$ denotes the longitudinal
Fermi coordinate, rather than the group-valued section. With
$\widetilde q=\sqrt{\kappa_2/\kappa_1}\,q$, the metric
\eqref{eq:jet1-hyperbolic-metric} takes the form
\begin{equation}
h=\kappa_1\bigl({\rm d}r^2+\cosh^2r\,{\rm d}s^2\bigr),
\label{eq:jet1-fermi-metric}
\end{equation}
and the exact conversion front is
\begin{equation}
s_0(t,x)=k(x-ct-z_0),
\qquad
r_0(t,x)=0.
\label{eq:jet1-fermi-background}
\end{equation}

A concrete choice of the normal orientation in these Fermi coordinates
is obtained by setting
$\mathcal D=\cosh r\cosh s-\sinh r$ and
\begin{equation}
e^\phi=\frac{g_0}{\mathcal D},\qquad
\widetilde q=\frac{g_0\cosh r\sinh s}{\mathcal D},\qquad
q=\sqrt{\frac{\kappa_1}{\kappa_2}}\,\widetilde q.
\label{eq:fermi-to-jet-coordinates}
\end{equation}
For $r=0$, these expressions give
$e^\phi=g_0\operatorname{sech}s$ and
$\widetilde q=g_0\tanh s$, consistently with the exact reconstruction.
They transform the upper-half-plane metric
$\kappa_1\bigl((d\widetilde q)^2+(d(e^\phi))^2\bigr)/e^{2\phi}$
into the Fermi metric above.
The corresponding group variables are $g=e^\phi$ and $T=gq$,
and their reduced fields are
$A_\mu=\partial_\mu\phi$ and
$C_\mu=\sqrt{\kappa_1/\kappa_2}\,e^{-\phi}
\partial_\mu\widetilde q$.
The opposite orientation of the normal coordinate is obtained by
replacing $r$ with $-r$.

The full wave-map equations in these coordinates are
\begin{align}
r_{tt}-r_{xx}
+\sinh r\cosh r\bigl(s_x^2-s_t^2\bigr)&=0,
\label{eq:jet1-fermi-r}\\
\partial_t\bigl(\cosh^2r\,s_t\bigr)
-\partial_x\bigl(\cosh^2r\,s_x\bigr)&=0.
\label{eq:jet1-fermi-s}
\end{align}

Let $|c|<1$ and $k>0$. Linearize
\eqref{eq:jet1-fermi-r}-\eqref{eq:jet1-fermi-s} about
\eqref{eq:jet1-fermi-background} by writing $s=s_0+\varepsilon\sigma$ and 
$r=\varepsilon\rho$. Then the tangential perturbation $\sigma$ and the normal
perturbation $\rho$ satisfy, respectively, the wave and
Klein-Gordon equations
\begin{equation}
\sigma_{tt}-\sigma_{xx}=0,
\qquad
\rho_{tt}-\rho_{xx}+m^2\rho=0,
\qquad
m^2:=k^2(1-c^2)>0.
\label{eq:jet1-linearized-wave-KG}
\end{equation}

The associated Cauchy problem consists of solving these
equations on $\mathbb R_t\times\mathbb R_x$ with prescribed
initial displacements and velocities
\[
\sigma(0,x)=\sigma_0(x),\qquad
\sigma_t(0,x)=\sigma_1(x),\qquad
\rho(0,x)=\rho_0(x),\qquad
\rho_t(0,x)=\rho_1(x).
\]
Its natural energy space is $\mathcal E
=
\bigl(\dot H^1(\mathbb R)\times L^2(\mathbb R)\bigr)
\times
\bigl(H^1(\mathbb R)\times L^2(\mathbb R)\bigr)$. Here
$\dot H^1(\mathbb R)
=
\{u\in H^1_{\mathrm{loc}}(\mathbb R):
\partial_xu\in L^2(\mathbb R)\}/\mathbb R$ is the homogeneous Sobolev space, in which functions differing
by an additive constant are identified, with norm
$\|[u]\|_{\dot H^1}=\|\partial_xu\|_{L^2}$.

\begin{proposition}
\label{prop:jet1-linear-stability} For every initial state
$(\sigma_0,\sigma_1,\rho_0,\rho_1)\in\mathcal E$,
the Cauchy problem has a unique global energy solution,
with $\sigma$ understood modulo constants, and the solution
depends continuously on the initial state. Moreover,
\begin{equation}
E_{\rm lin}(t)
=\frac{\kappa_1}{2}\int_{\mathbb R}
\left(
\sigma_t^2+\sigma_x^2
+\rho_t^2+\rho_x^2+m^2\rho^2
\right){\rm d}x
=E_{\rm lin}(0).
\label{eq:jet1-linearized-energy}
\end{equation}
Thus the linearized flow preserves the norm
$\|(\sigma,\sigma_t,\rho,\rho_t)\|_{\mathcal E,m}
=(2E_{\rm lin}/\kappa_1)^{1/2}$.
In particular, it controls the tangential derivatives and,
since $m>0$, the full $H^1\times L^2$ norm of the normal
perturbation.

The conversion front has no exponentially growing linear
Fourier modes\footnote{These conclusions concern the linearized
equations and do not assert nonlinear orbital or
asymptotic stability.}.
\end{proposition}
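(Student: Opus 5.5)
The plan is to treat the system \eqref{eq:jet1-linearized-wave-KG} as two decoupled, constant-coefficient linear hyperbolic equations on $\mathbb R$. For each, well-posedness in the energy space and energy conservation follow by Fourier analysis; no use of the nonlinear structure is needed beyond the derivation of \eqref{eq:jet1-linearized-wave-KG} itself.

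Before that, I would record the linearization briefly, as a check that $m^2=k^2(1-c^2)$. Substituting $s=s_0+\varepsilon\sigma$, $r=\varepsilon\rho$ into \eqref{eq:jet1-fermi-r}, the term $\sinh r\cosh r\,(s_x^2-s_t^2)$ becomes, to first order in $\varepsilon$, $\varepsilon\rho\,k^2(1-c^2)$. This uses $\partial_x s_0=k$ and $\partial_t s_0=-ck$. In \eqref{eq:jet1-fermi-s}, $\cosh^2 r=1+O(\varepsilon^2)$, so the first-order part is $\sigma_{tt}-\sigma_{xx}=0$, because $s_0$ is linear in $(t,x)$.

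\textbf{Existence, uniqueness and continuous dependence.} Take the spatial Fourier transform. For $\rho$, with $\omega(\zeta)=\sqrt{\zeta^2+m^2}$, the solution is
\[
\hat\rho(t,\zeta)=\cos(\omega t)\hat\rho_0+\omega^{-1}\sin(\omega t)\hat\rho_1 .
\]
Since $m>0$, $\omega^2\simeq 1+\zeta^2$ uniformly. Hence the map $(\rho_0,\rho_1)\mapsto(\rho,\rho_t)(t)$ is bounded, indeed isometric, on $H^1\times L^2$ with the weighted norm $\int(\omega^2|\hat\rho|^2+|\hat\rho_t|^2)\,d\zeta$. This gives a $C(\mathbb R;H^1\times L^2)$ solution.

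For $\sigma$ in $\dot H^1\times L^2$, I would avoid the zero-frequency issue by working with $(\sigma_x,\sigma_t)$. Setting $u=\sigma_x$ and $w=\sigma_t$ gives the $L^2$ symmetric-hyperbolic system $u_t=w_x$, $w_t=u_x$. Equivalently, the Riemann invariants $w\pm u$ are transported with speeds $\mp1$, and translation is an isometry of $L^2$. One then recovers $\sigma$ modulo constants by $\sigma(t,\cdot)=\sigma_0+\int_0^t w$, which lies in $H^1_{\rm loc}$ and is well defined in $\dot H^1$.

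Uniqueness follows from the energy identity applied to the difference of two solutions. For $\sigma$, uniqueness is understood modulo constants: a solution with zero data in $\dot H^1\times L^2$ has $\sigma_x=\sigma_t=0$, hence is constant. Continuous dependence is immediate from linearity and the isometries above.

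\textbf{Energy identity.} For smooth compactly supported data, multiply the equations by $\sigma_t$ and $\rho_t$ and integrate by parts. This gives
\[
\frac{d}{dt}\int(\sigma_t^2+\sigma_x^2)\,dx=0,\qquad \frac{d}{dt}\int(\rho_t^2+\rho_x^2+m^2\rho^2)\,dx=0.
\]
Equivalently, Plancherel applied to the explicit formulas gives the same. Multiplying by $\kappa_1/2$ yields \eqref{eq:jet1-linearized-energy}, and density extends it to all of $\mathcal E$. The norm statement is then a restatement. Since $m>0$, $\min(1,m^2)\|\rho\|_{H^1}^2\le\int(\rho_x^2+m^2\rho^2)$, so the energy controls the full $H^1\times L^2$ norm of the normal perturbation.

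\textbf{No exponential growth.} Insert plane waves $e^{\lambda t+i\zeta x}$ with $\zeta\in\mathbb R$. The dispersion relations are $\lambda^2=-\zeta^2$ and $\lambda^2=-(\zeta^2+m^2)$, so $\lambda\in i\mathbb R$ in every case and there are no modes with $\operatorname{Re}\lambda>0$.

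The main subtlety is the correct functional setting for $\sigma$, namely the quotient by constants and how $\sigma$ itself is reconstructed. Working with $(\sigma_x,\sigma_t)$ is the device I would use to handle it. Everything else is routine, given that \eqref{eq:jet1-linearized-wave-KG} is taken as the linearization already derived in the text.
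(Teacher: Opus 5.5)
Your proposal is correct and follows essentially the same route as the paper. Both linearize about the front to obtain the wave and Klein--Gordon equations with $m^2=k^2(1-c^2)$, solve them by explicit Fourier (or d'Alembert) formulas, derive the energy identity from the $\sigma_t,\rho_t$ multiplier for smooth data and extend it by density (which also gives uniqueness and continuous dependence), and read off real frequencies from $\omega^2=\zeta^2$ and $\omega^2=\zeta^2+m^2$. The only difference is cosmetic: you handle $\sigma$ through the first-order system for $(\sigma_x,\sigma_t)$ and its Riemann invariants rather than the paper's Fourier formula plus density, which makes the quotient by constants in $\dot H^1$ slightly more transparent.
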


\begin{proof}
The wave-map density associated with
\eqref{eq:jet1-fermi-metric} is
\[
\frac{\kappa_1}{2}
\left[-r_t^2+r_x^2
+\cosh^2r\bigl(-s_t^2+s_x^2\bigr)\right],
\]
whose Euler--Lagrange equations are
\eqref{eq:jet1-fermi-r}-\eqref{eq:jet1-fermi-s}.

For the background solution,
$s_{0,x}=k$ and $s_{0,t}=-kc$, so
$s_{0,x}^2-s_{0,t}^2=k^2(1-c^2)=m^2$.
Using
\[
\sinh(\varepsilon\rho)\cosh(\varepsilon\rho)
=\varepsilon\rho+O(\varepsilon^3),
\qquad
\cosh^2(\varepsilon\rho)=1+O(\varepsilon^2),
\]
the normal equation becomes
\[
\varepsilon\bigl(\rho_{tt}-\rho_{xx}+m^2\rho\bigr)
+O(\varepsilon^2)=0.
\]
In the tangential equation, the factor $\cosh^2r$
has no term of first order in $\varepsilon$.
Since $s_0$ solves the wave equation, its linearization is
$\sigma_{tt}-\sigma_{xx}=0$.
This proves \eqref{eq:jet1-linearized-wave-KG}.

For smooth compactly supported initial data, multiply the
two linearized equations by $\kappa_1\sigma_t$ and
$\kappa_1\rho_t$, respectively, and add them. This gives
the local energy identity
\[
\partial_t\left[
\frac{\kappa_1}{2}
\left(\sigma_t^2+\sigma_x^2
+\rho_t^2+\rho_x^2+m^2\rho^2\right)
\right]
=
\kappa_1\partial_x
\left(\sigma_t\sigma_x+\rho_t\rho_x\right).
\]
Finite propagation speed ensures that the boundary terms
vanish upon integration over $\mathbb R$, yielding
\eqref{eq:jet1-linearized-energy}.

To describe the Fourier modes explicitly, consider
complex plane waves
\[
\sigma(t,x)=a\,e^{i(\xi x-\omega t)},
\qquad\text{or}\qquad
\rho(t,x)=b\,e^{i(\xi x-\omega t)},
\]
where $\xi\in\mathbb R$ is the spatial wave number.
Substitution into the respective equations gives
\[
\omega^2=\xi^2
\quad\text{and}\quad
\omega^2=\xi^2+m^2.
\]
Both dispersion relations have real frequencies, so
no such mode grows exponentially in time.
At $\xi=0$, the tangential equation admits constant
and affine functions of time; this is a zero-frequency
mode, not an exponential instability. Spatially constant
nonzero velocities are excluded by the $L^2(\mathbb R)$
condition on $\sigma_t$, and constant displacements are
identified in $\dot H^1(\mathbb R)$.

Finally, the Fourier solution formulas for the wave and
Klein-Gordon equations give global solutions for smooth
data. The conserved energy estimate extends these
solutions by density to all initial states in $\mathcal E$,
and gives uniqueness and continuous dependence in the
energy norm.
\end{proof}

This linearized description provides a reference for interpreting
perturbations of the reconstructed front. In particular, initially
vanishing tangential data remain zero at linear order, so any
tangential response generated from purely normal initial data
reflects nonlinear coupling. In the experiment below, Fermi
coordinates are used to specify the initial field and its velocity;
the subsequent evolution is computed in the reduced variables,
from which the group-valued field is reconstructed.

\paragraph{A nonlinear perturbation experiment.}
We now illustrate the reduction--reconstruction procedure
beyond the exact traveling-wave branch. We evolve the reduced
jet variables directly and recover the group-valued field by
integrating the reconstruction equations. Fermi coordinates
are used to prescribe compatible initial data and to compute
an independent wave-map solution for comparison.

Writing $s=s_0+\sigma$, we prescribe
\begin{equation}
r(0,x)=\varepsilon\exp\!\left[-(x/w)^2\right],
\qquad
r_t(0,x)=\sigma(0,x)=\sigma_t(0,x)=0.
\label{eq:jet1-numerical-data}
\end{equation}
The map \eqref{eq:fermi-to-jet-coordinates} determines the
initial group variables and their velocities. Their reduced
jets are calculated analytically from
$A_\mu=\partial_\mu\phi$ and
$C_\mu=e^{-\phi}\partial_\mu q$.
Thus the initial reduced data arise from a prescribed field
and its velocity, rather than from independent perturbations
of the four jet components.

The numerical evolution uses the closed first-order system
\begin{equation}
\begin{aligned}
\partial_t A_t
 &=\partial_x A_x
   -\frac{\kappa_2}{\kappa_1}(C_t^2-C_x^2),\\
\partial_t A_x
 &=\partial_x A_t,\\
\partial_t C_t
 &=\partial_x C_x+A_tC_t-A_xC_x,\\
\partial_t C_x
 &=\partial_x C_t-A_tC_x+A_xC_t.
\end{aligned}
\label{eq:jet1-numerical-reduced-system}
\end{equation}
The first and third equations are the reduced dynamical
equations; the second and fourth are the reconstruction
compatibility conditions. In $1+1$ dimensions, these conditions
supply the evolution equations for the spatial jet components.

From this reduced evolution, we recover the field by integrating
\begin{equation}
\partial_t\phi=A_t,
\qquad
\partial_t q=e^\phi C_t,
\qquad
g=e^\phi,\quad T=gq,
\label{eq:jet1-numerical-reconstruction}
\end{equation}
with the prescribed initial values.
These reconstruction variables do not feed back into
\eqref{eq:jet1-numerical-reduced-system}.
To check compatibility numerically, we also reconstruct at
each time by integrating
$\partial_x\phi=A_x$ and $\partial_xq=e^\phi C_x$,
starting from the temporally reconstructed value at $x=0$.
The two procedures correspond to different integration paths
from the same initial reference point. Their agreement in the
continuum follows from the compatibility equations; their
difference at finite resolution measures the combined
evolution and reconstruction errors.

We use $k=\kappa_1=\kappa_2=g_0=1$, $c=1/2$,
$\varepsilon=0.08$, and $w=2$ on $[-60,60]$ up to $t=18$.
Fourth-order centered spatial differences and classical
fourth-order Runge--Kutta time stepping are applied with
$N=2400$ cells, hence $2401$ nodes,
$\Delta x=0.05$, and $\Delta t=0.01$.
The complete reduced fields are not treated as periodic:
ghost values at both ends are supplied by the exact
unperturbed front at each Runge--Kutta stage.
Repeating the computation on $[-80,80]$ with the same mesh
spacing changes the final reduced and reconstructed
variables by less than $4\times10^{-14}$ on $[-25,25]$.
Spatial reconstruction uses cubic-spline quadrature.

Figure~\ref{fig:jet1-conversion-front-simulation} shows the
reduced jet profiles, the reconstructed group variables,
and their sector energy densities.
The perturbed evolution produces additional spatial structure
in the quadratic-jet component while retaining a propagating
front over the reported time interval.
The reconstructed variables provide the corresponding
group-valued realization of this reduced dynamics.

\begin{figure}[htbp]
\centering
\includegraphics[width=\textwidth]{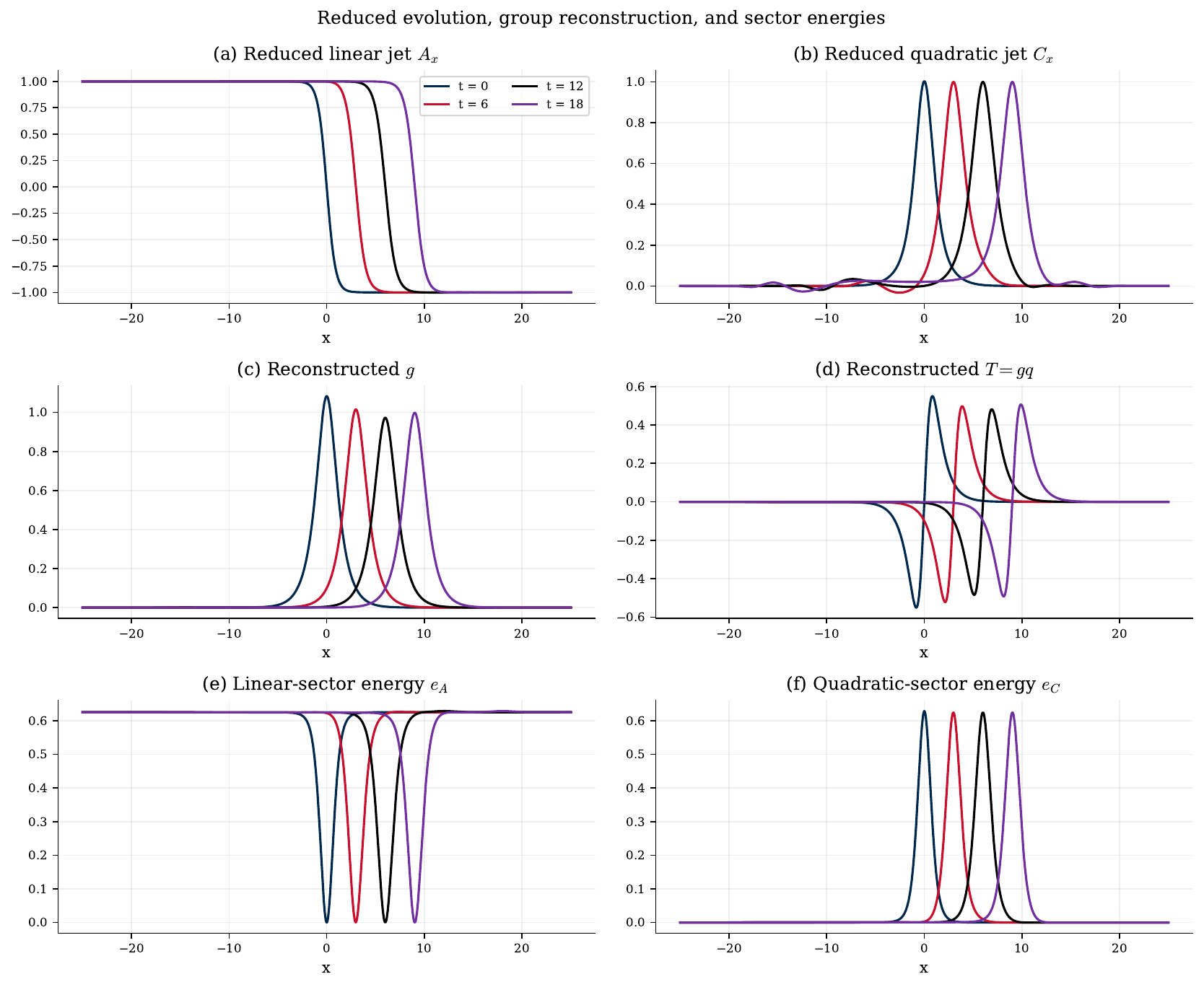}
\caption{Reduced evolution and reconstruction of a perturbed
$G_1^2$ conversion front at $t=0,6,12,18$.
(a,b) Spatial components $A_x$ and $C_x$, obtained by evolving
\eqref{eq:jet1-numerical-reduced-system}.
(c,d) Group variables $g$ and $T$, obtained through
\eqref{eq:jet1-numerical-reconstruction}.
(e,f) Linear- and quadratic-jet energy densities
$e_A=\kappa_1(A_t^2+A_x^2)/2$ and
$e_C=\kappa_2(C_t^2+C_x^2)/2$.
All panels describe the same nonlinear solution;
colors distinguish the four displayed times.
The parameters and numerical resolution are given in the text.}
\label{fig:jet1-conversion-front-simulation}
\end{figure}

The total energy and momentum densities are evaluated
directly in reduced variables:
\begin{equation}
e=\frac{\kappa_1}{2}(A_t^2+A_x^2)
  +\frac{\kappa_2}{2}(C_t^2+C_x^2),
\qquad
p=\kappa_1A_tA_x+\kappa_2C_tC_x.
\label{eq:jet1-numerical-invariants}
\end{equation}
Their constant background values are
$e_0=\kappa_1k^2(1+c^2)/2$ and $p_0=-\kappa_1ck^2$.
Over the $181$ stored times, the maximum relative drifts
of the integrals of $e-e_0$ and $p-p_0$ are
$1.24\times10^{-9}$ and $1.44\times10^{-9}$, respectively,
normalized by the absolute initial value of each integral.
The sector balances distinguish exchange from transport:
\[
\partial_t e_A-\partial_x p_A=S_A,
\qquad
\partial_t e_C-\partial_x p_C=-S_A,
\]
where
$p_A=\kappa_1A_tA_x$,
$p_C=\kappa_2C_tC_x$, and
$S_A=-\kappa_2A_t(C_t^2-C_x^2)$.
Thus changes in each sector include both boundary flux
and internal exchange, whereas the exchange terms cancel
in the total energy balance.
These jet sectors are distinct from the normal and tangential
directions of the Fermi parametrization.

We assess numerical consistency on $J=[-25,25]$ using
$N=600,1200,2400$ and $\Delta t=0.2\Delta x$.
For the full reduced state $(A_t,A_x,C_t,C_x)$, successive-grid
differences at $t=18$, evaluated in the discrete $L^2(J)$
norm on common nodes, are $1.823\times10^{-3}$ and
$1.157\times10^{-4}$, giving an observed order of $3.978$.
The final $L^2(J)$ differences between the two reconstructions
of $(\phi,q)$ are, respectively,
$1.747\times10^{-3}$, $1.110\times10^{-4}$, and
$6.982\times10^{-6}$.
Their approximately sixteenfold decrease under mesh
halving is consistent with fourth-order accuracy.

As an independent comparison, we solve the wave-map equations
in Fermi coordinates with the same initial data and transform
that solution to $(\phi,q)$.
Its final $L^2(J)$ difference from the temporally reconstructed
field is $6.45\times10^{-6}$ on the finest grid, again
decreasing approximately with fourth order under refinement.
These are coordinate norms of $(\phi,q)$, rather than
intrinsic target distances.
Together, the comparisons support the consistency of the
reduced evolution with the reconstructed field and its
wave-map description.


\section{Conclusions and Future Work}\label{sec:conclusions}

We have formulated the covariant reduction of first-order field theories
in centered semidirect-product variables, making explicit the reduced
equations, reconstruction conditions, and covariant currents associated
with the two-sided actions. Starting from the standard reduction
of first order field theories on principal bundles, we have shown that the commuting
left and right actions defining the centered product give rise, in split
variables, to the heart and diamond couplings introduced in
finite-dimensional centered mechanics.  The resulting formulation
includes the covariant
Euler-Poincar\'e field equations together with the reconstruction condition, the reduced
multisymplectic form formula, the associated Noether and
energy-momentum balance laws, and the covariant Lie-Poisson
description. For hyperregular theories, the Lagrangian and Hamiltonian
formulations are related by the covariant Legendre transformation.

The applications also delimit the scope of the centered construction.
For $\mathrm{SO}(3)$, the standard director representation on $\mathbb{R}^3$
does not admit two independent nontrivial commuting actions.  Passing
instead to $\mathrm{SO}(3)\Join\operatorname{Mat}(3)$ yields an effective
conjugation representation on the semidirect variable
$B=AR^{-1}$, together with the invariant decomposition
$\operatorname{Mat}(3)
=
\mathbb{R}I
\oplus
\mathfrak{so}(3)
\oplus
\operatorname{Sym}_0(3)$.
Its three components provide scalar, vector, and alignment-tensor
channels and hence a precise point of contact with microstructural and
liquid-crystal variables. Equivariant linear constitutive maps preserve these three channels;
for self-adjoint equivariant choices, the contracted internal torque
vanishes. Nonzero torque and off-diagonal constitutive couplings
therefore require the corresponding additional constitutive structure. This comparison also shows what the centered
theory does not supply by itself: unit or projective director
constraints, $Q$-tensor bulk potentials, and the broken-symmetry data
of standard nematodynamics require the corresponding advected-parameter
or affine-reduction frameworks.

The last application returns to the geometric origin of centered
semidirect products: the second-order jet group
$G_r^2
\simeq
GL(r)\Join
\bigl(S^2(\mathbb{R}^r)^*\otimes\mathbb{R}^r\bigr)$
inherits its two actions directly from the two contributions in the
second-order chain rule.  Using $G_r^2$ as an internal structure group
therefore provides a first-order example of the theory:
the same output and input actions that define jet composition appear,
after reduction, in the heart and diamond terms of the reduced field equations,
the Lie-Poisson structure, and the multisymplectic current.  This
construction must be distinguished from both higher-order variational
theories and gauge theories of independent connections; the
second-order character belongs to the internal symmetry group rather
than to the variational order.

For $G_1^2$ on a $(1+1)$-dimensional Minkowski base,
the reduced equations admit first integrals on the specified
traveling-wave branch and an explicit conversion front.
Its reconstruction identifies a geodesic wave in a hyperbolic
target, while the centered splitting resolves the constant
total stress-energy into complementary linear- and
quadratic-jet contributions.
The linearized dynamics separate into a tangential wave
equation and a normal Klein-Gordon equation with positive
squared mass, yielding a conserved positive energy in the
stated perturbation space.

The nonlinear experiment implements the same
reduction-reconstruction procedure beyond the exact
traveling-wave branch: it evolves the reduced dynamical
and compatibility equations and recovers the group-valued
field by integration.
Mesh refinement, comparison of two reconstruction paths,
and agreement with an independent wave-map computation
support the numerical consistency of this correspondence.
These finite-time results do not establish nonlinear
stability or exact discrete multisymplectic preservation.

\medskip

Several directions remain open:

\begin{itemize}

\item
\emph{Structure-preserving discretization.}
The reduced multisymplectic form formula provides a natural starting
point for discrete covariant variational principles adapted to centered
semidirect-product symmetry.  It would be particularly interesting to
determine whether discrete heart and diamond couplings can be retained
exactly while preserving the corresponding discrete momentum and
multisymplectic balance laws. The discrete variational and multisymplectic constructions provide
natural reference points for this extension
\cite{MaPeShWe2001, MaWe2001}.

\item
\emph{Higher jet-group internal symmetries.}
The appearance of $G_r^2$ suggests studying the algebraic structure of
higher jet groups $G_r^k$ and determining when their successive
chain-rule couplings admit centered, iterated, or more general
semidirect-product descriptions (cf.~\cite{CastrillonRodriguezHigher2023}).  Such an extension should be kept
distinct from higher-order field theories on $J^kP$: relating an
internal jet-group field to spatially holonomic higher derivatives of a
deformation requires additional contact or Spencer-type compatibility
conditions.

\item
\emph{Lie group-bundle and gauge extensions.}
The present reduction concerns the action of a fixed structure group on
a principal bundle.  A gauge theory with $x$-dependent
fiberwise symmetries requires instead the Lie group-bundle framework and
generalized principal connections.  Extending centered
semidirect-product reduction, its Lie-Poisson formulation, and its
multisymplectic currents to that setting would provide the appropriate
route toward gauge theories with two-sided internal symmetry.

\end{itemize}

\subsection*{Acknowledgments}

L. Colombo would like to thank Darryl D. Holm for introducing him and Henry O. Jacobs to the idea of centered semidirect products and for encouraging its further development, despite the years that have passed since its first appearance, as well as for highlighting its potential applications in geometric mechanics and mathematical physics. He also thanks Henry O. Jacobs and Joris Vankerschaver for many valuable discussions on centered semidirect products and their possible applications.

\subsection*{Funding}

The authors acknowledge financial support from grant PID2022-137909NB-C21 funded by MCIN/AEI/ 10.13039/501100011033. \'A. Rodr\'iguez Abella and M. \'A. Berbel were partially supported by grant PID2024-156578NB-I00 funded by MICIU/AEI/10.13039/501100011033/FEDER, EU.

\subsection*{Use of artificial intelligence}
Generative AI tools were used to review and improve the writing of the manuscript and to assist with code for the numerical experiments. The authors take full responsibility for the content of the manuscript, including all mathematical statements, proofs, and numerical results.


\begin{thebibliography}{99}

\bibitem{AbbresciaChen2021}
L.~E. Abbrescia and Y.~Chen.
\newblock Global stability of some totally geodesic wave maps.
\newblock {\em Journal of Differential Equations},
284:219--252, 2021.


\bibitem{BerbelCastrillon2023}
M.~\'A. Berbel and M.~Castrill\'on L\'opez.
\newblock Poisson-Poincar\'e reduction for field theories.
\newblock {\em Journal of Geometry and Physics},
  191:104879, 2023.

  \bibitem{BerbelCastrillon2024}
M.~\'A. Berbel and M.~Castrill\'on L\'opez.
\newblock Lagrangian reduction by stages in field theory.
\newblock {\em Annales Henri Poincar\'e},
  25:1511-1543, 2024.

  \bibitem{BerbelCastrillonAffine2025}
M.~\'A. Berbel and M.~Castrill\'on L\'opez.
\newblock Hamiltonian reduction in affine principal bundles.
\newblock {\em Geometric Mechanics}, 2(4):519-533, 2025.

\bibitem{BerbelCastrillonSubgroup2026}
M.~\'A. Berbel and M.~Castrill\'on L\'opez.
\newblock Lie--Poisson reduction in principal bundles by a subgroup
  of the structure group.
\newblock arXiv:2604.08288, 2026.



\bibitem{Capriz1989}
G.~Capriz.
\newblock {\em Continua with Microstructure}.
\newblock Springer Tracts in Natural Philosophy, vol.~35.
\newblock Springer-Verlag, New York, 1989.

\bibitem{CaChGa2013}
M.~Castrill\'on L\'opez, P.~M. Chac\'on, and P.~L. Garc\'ia.
\newblock Lagrange--Poincar\'e reduction in affine principal bundles.
\newblock {\em Journal of Geometric Mechanics}, 5(4):399-414, 2013.



\bibitem{CaMa2008}
M.~Castrill\'on L\'opez and J.~E. Marsden.
\newblock Covariant and dynamical reduction for principal bundle field
  theories.
\newblock {\em Annals of Global Analysis and Geometry},
  34(3):263-285, 2008.

\bibitem{CaMa2003}
M.~Castrill\'on L\'opez and J.~E. Marsden.
\newblock Some remarks on Lagrangian and Poisson reduction for field theories.
\newblock {\em Journal of Geometry and Physics}, 48(1):52-83, 2003.

\bibitem{CaGaRa2001}
M.~Castrill\'on L\'opez, P.~L. Garc\'ia P\'erez, and T.~S. Ratiu.
\newblock Euler--Poincar\'e reduction on principal bundles.
\newblock {\em Letters in Mathematical Physics}, 58(2):167-180, 2001.

\bibitem{CaRa2003}
M.~Castrill\'on L\'opez and T.~S. Ratiu.
\newblock Reduction in principal bundles: Covariant Lagrange--Poincar\'e
  equations.
\newblock {\em Communications in Mathematical Physics},
  236(2):223-250, 2003.

\bibitem{CaRaSh2000}
M.~Castrill\'on L\'opez, T.~S. Ratiu, and S.~Shkoller.
\newblock Reduction in principal fiber bundles:
  Covariant Euler--Poincar\'e equations.
\newblock {\em Proceedings of the American Mathematical Society},
  128(7):2155-2164, 2000.

\bibitem{CastrillonRodriguezGauge2024}
M.~Castrill\'on L\'opez and \'A.~Rodr\'iguez Abella.
\newblock Gauge reduction in covariant field theory.
\newblock {\em Journal of Physics A: Mathematical and Theoretical},
  57(30):305204, 2024.

\bibitem{CastrillonRodriguezGeneralized2023}
M.~Castrill\'on L\'opez and \'A.~Rodr\'iguez Abella.
\newblock Principal bundles and connections modeled by Lie group bundles.
\newblock {\em  Geometriae Dedicata},
  217:32, 2023.

\bibitem{CastrillonRodriguezHigher2023}
M.~Castrill\'on L\'opez and \'A.~Rodr\'iguez Abella.
\newblock Higher Order Jet Bundles of Lie Group-Valued Functions.
\newblock {\em  Journal of Lie Theory},
  33(3):831-844, 2023.


\bibitem{CastrillonRodriguezHigherOrder2026}
M.~Castrill\'on L\'opez and \'A.~Rodr\'iguez Abella.
\newblock Higher-order Euler--Poincar\'e field equations.
\newblock {\em Journal of Geometry and Physics},
  219:105693, 2026.


\bibitem{CoJa2015}
L.~Colombo and H.~O. Jacobs.
\newblock Lagrangian mechanics on centered semidirect products.
\newblock In D.~E. Chang, D.~D. Holm, G.~W. Patrick, and T.~S. Ratiu,
  editors, {\em Geometry, Mechanics, and Dynamics: The Legacy of Jerry
  Marsden}, Fields Institute Communications, vol.~73, pages 167-184.
\newblock Springer, New York, 2015.

\bibitem{CotterElderingHolmJacobsMeier2016}
 C.~J. Cotter, J.~Eldering, D.~D. Holm, H.~O. Jacobs, and D.~M. Meier.
 \newblock Weak dual pairs and jetlet methods for ideal incompressible fluid
 models in $n\geq2$ dimensions.
 \newblock {\em Journal of Nonlinear Science}, 26(6):1723--1765, 2016.

\bibitem{CoHo2013}
C.~J.~Cotter and D.~D.~Holm.
\newblock On Noether's theorem for the Euler--Poincar\'e equation on the
diffeomorphism group with advected quantities.
\newblock {\em Foundations of Computational Mathematics},
13(4):457--477, 2013.

\bibitem{GayBalmazRatiu2010}
F.~Gay--Balmaz and T.~S. Ratiu.
\newblock A new Lagrangian dynamic reduction in field theory.
\newblock {\em Annales de l'Institut Fourier},
  60(3):1125-1160, 2010.

\bibitem{deGennesProst1993}
P.~G. de~Gennes and J.~Prost.
\newblock {\em The Physics of Liquid Crystals}.
\newblock Oxford University Press, Oxford, second edition, 1993.

\bibitem{DeLeonMartinDiegoSantamaria2004}
M.~de Le\'on, D.~Mart\'in de Diego, and A.~Santamar\'ia-Merino.
\newblock Symmetries in classical field theory.
\newblock {\em International Journal of Geometric Methods in Modern Physics},
  1(5):651-710, 2004.


\bibitem{EllisGayBalmazHolmRatiu2011}
D.~C.~P. Ellis, F.~Gay--Balmaz, D.~D. Holm, and T.~S. Ratiu.
\newblock Lagrange--Poincar\'e field equations.
\newblock {\em Journal of Geometry and Physics},
  61(11):2120-2146, 2011.


\bibitem{Eringen1999}
A.~C. Eringen.
\newblock {\em Microcontinuum Field Theories I: Foundations and Solids}.
\newblock Springer, New York, 1999.

\bibitem{FoRo2005}
M.~Forger and S.~V. Romero.
\newblock Covariant Poisson brackets in geometric field theory.
\newblock {\em Communications in Mathematical Physics},
  256(2):375-410, 2005.

\bibitem{GBRa2009}
F.~Gay--Balmaz and T.~S. Ratiu.
\newblock The geometric structure of complex fluids.
\newblock {\em Advances in Applied Mathematics}, 42(2):176-275, 2009.

\bibitem{GBRa2012}
F.~Gay--Balmaz, T.~S. Ratiu, and C.~Tronci.
\newblock Euler--Poincar\'e approaches to nematodynamics.
\newblock {\em Acta Applicandae Mathematicae}, 120(1):127-151, 2012.

\bibitem{GayBalmazRatiuTronci2013}
 F.~Gay--Balmaz, T.~S. Ratiu, and C.~Tronci.
 \newblock Equivalent theories of liquid crystal dynamics.
 \newblock {\em Archive for Rational Mechanics and Analysis}, 210(3):773--811, 2013.



\bibitem{GayBalmazTronci2010}
 F.~Gay--Balmaz and C.~Tronci.
 \newblock Reduction theory for symmetry breaking with applications to nematic systems.
 \newblock {\em Physica D: Nonlinear Phenomena},
 239(20--22):1929--1947, 2010.

\bibitem{GoIsMa1998}
M.~J. Gotay, J.~Isenberg, J.~E. Marsden, and R.~Montgomery.
\newblock Momentum maps and classical relativistic fields.
  Part~I: Covariant field theory.
\newblock {\em arXiv:physics/9801019}, 1998.


\bibitem{GoIsMaMo2004}
M.~J. Gotay, J.~Isenberg, and J.~E. Marsden.
\newblock Momentum maps and classical relativistic fields.
  Part~II: Canonical analysis of field theories.
\newblock {\em arXiv:math-ph/0411032}, 2004.

\bibitem{Holm2002ComplexFluids}
D.~D.~Holm,
Euler--Poincar\'e dynamics of perfect complex fluids,
in P.~Newton, P.~Holmes, and A.~Weinstein (eds.),
\emph{Geometry, Mechanics, and Dynamics},
Springer, New York, 2002, pp.~169-180.



\bibitem{Holm2025}
D.~D.~Holm,
\emph{Geometric Mechanics, Part III:
Broken Symmetry and Composition of Maps},
World Scientific Publishing, London, 2025.


\bibitem{HolmHuStreet2023}
D.~D.~Holm, R.~Hu, and O.~D.~Street,
Lagrangian reduction and wave mean flow interaction,
\emph{Physica D: Nonlinear Phenomena}
\textbf{454} (2023), 133847.


\bibitem{HoMaRa1998}
D.~D. Holm, J.~E. Marsden, and T.~S. Ratiu.
\newblock The Euler--Poincar\'e equations and semidirect products with
  applications to continuum theories.
\newblock {\em Advances in Mathematics}, 137(1):1-81, 1998.



\bibitem{KiTu1979}
J.~Kijowski and W.~M. Tulczyjew.
\newblock {\em A Symplectic Framework for Field Theories}.
\newblock Lecture Notes in Physics, vol.~107.
\newblock Springer-Verlag, Berlin, 1979.

\bibitem{KMS1993}
I.~Kol\'a\v{r}, P.~W. Michor, and J.~Slov\'ak.
\newblock {\em Natural Operations in Differential Geometry}.
\newblock Springer-Verlag, Berlin, 1993.



\bibitem{MaPaSh1998}
J.~E. Marsden, G.~W. Patrick, and S.~Shkoller.
\newblock Multisymplectic geometry, variational integrators, and nonlinear
  {PDE}s.
\newblock {\em Communications in Mathematical Physics},
  199(2):351-395, 1998.


\bibitem{MaPeShWe2001}
J.~E. Marsden, S.~Pekarsky, S.~Shkoller, and M.~West.
\newblock Variational methods, multisymplectic geometry, and continuum
  mechanics.
\newblock {\em Journal of Geometry and Physics}, 38(3-4):253-284, 2001.

\bibitem{MaWe2001}
J.~E. Marsden and M.~West.
\newblock Discrete mechanics and variational integrators.
\newblock {\em Acta Numerica}, 10:357-514, 2001.

\bibitem{RomanRoy2009}
N.~Rom\'an-Roy.
\newblock Multisymplectic Lagrangian and Hamiltonian formalisms of classical
  field theories.
\newblock {\em SIGMA}, 5:100, 25 pages, 2009.

\bibitem{Saunders1989}
D.~J. Saunders.
\newblock {\em The Geometry of Jet Bundles}.
\newblock London Mathematical Society Lecture Note Series, vol.~142.
\newblock Cambridge University Press, Cambridge, 1989.

\bibitem{ShatahStruwe1998}
J.~Shatah and M.~Struwe.
\newblock {\em Geometric Wave Equations}.
\newblock Courant Lecture Notes in Mathematics, vol.~2.
New York University, Courant Institute of Mathematical Sciences,
New York, 1998.

\bibitem{Simo1984}
J.~C. Simo and J.~E. Marsden.
\newblock On the rotated stress tensor and the material version of the
  Doyle-Ericksen formula.
\newblock {\em Archive for Rational Mechanics and Analysis},
  86(3):213-231, 1984.

\bibitem{Simo1988}
J.~C. Simo, J.~E. Marsden, and P.~S. Krishnaprasad.
\newblock The Hamiltonian structure of nonlinear elasticity:
  The material and convective representations of solids, rods, and plates.
\newblock {\em Archive for Rational Mechanics and Analysis},
  104(2):125-183, 1988.

\bibitem{SommerNielsenDarknerPennec2013}
 S.~Sommer, M.~Nielsen, S.~Darkner, and X.~Pennec.
 \newblock Higher-order momentum distributions and locally affine
 {LDDMM} registration.
 \newblock {\em SIAM Journal on Imaging Sciences}, 6(1):341--367, 2013.


\bibitem{Virga1994}
E.~G. Virga.
\newblock {\em Variational Theories for Liquid Crystals}.
\newblock Applied Mathematics and Mathematical Computation, vol.~8.
\newblock Chapman \& Hall, London, 1994.

\end{thebibliography}
\end{document}